\apptocmd{\sloppy}{\hbadness 10000\relax}{}{} 
\algrenewcommand\algorithmicindent{2ex}%
\newcommand{\qedhere}{\ensuremath{\qed}}
\mathchardef\lt="213C 
\mathchardef\gt="213E 
\newcommand{\cat}[1]{\mathbf{#1}}
\DeclareMathOperator{\id}{id}
\newcommand{\N}{\mathbb{N}}
\newcommand{\eword}{\varepsilon}
\newcommand{\LStar}{\ensuremath{\mathtt{L}^{\!\star}}\xspace}
\newcommand{\close}{\mathsf{close}}
\newcommand{\cons}{\mathsf{cons}}
\newcommand{\lclose}{\mathsf{lclose}}
\newcommand{\lcons}{\mathsf{lcons}}
\newcommand{\aut}{\mathcal{A}}
\newcommand{\hyp}{\mathcal{H}}
\newcommand{\lang}{\mathcal{L}}
\newcommand{\epi}{\mathcal{E}}
\newcommand{\mono}{\mathcal{M}}
\newcommand{\wrap}{\mathcal{W}}
\newcommand{\arity}{\mathsf{arity}}
\newcommand{\pow}{\mathcal{P}}
\newcommand{\fpow}{\pow_{\mathsf{f}}}
\newcommand{\ot}{\mathsf{T}}
\newcommand{\rch}{\mathsf{reach}}
\newcommand*{\circled}[1]{{\tikz[baseline={(X.base)}]\node(X)[draw,shape=circle,inner sep=0]{\text{\scriptsize\strut$#1$}};}}
\tikzset{automaton/.style={node distance=1.5cm,on grid,auto,initial text={},state/.append style={inner sep=0pt,outer sep=0pt,minimum size=3.5ex}}}
\def\orcidID#1{\href{http://orcid.org/#1}{\protect\raisebox{-1.25pt}{\protect\includegraphics{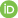}}}}
\newcommand{\matteo}[1]{\todo[size=\tiny,caption={},color=blue!30]{M:\@ #1}}
\newcommand{\tobias}[1]{\todo[size=\tiny,caption={},color=green!30]{T:\@ #1}}
\newcommand{\jurriaan}[1]{\todo[size=\tiny,caption={},color=purple!50]{J:\@ #1}}
\patchcmd{\@addmarginpar}{\ifodd\c@page}{\ifodd\c@page\@tempcnta\m@ne}{}{}
\newcolumntype{C}[1]{>{\centering\arraybackslash$}m{#1}<{$}}
\newcolumntype{L}[1]{>{\raggedleft\arraybackslash$}m{#1}<{$}}
\newcolumntype{R}[1]{>{\raggedright\arraybackslash$}m{#1}<{$}}
\tikzset{trlab/.style={font=\scriptsize,inner sep=1pt,outer sep=1pt}}
\begin{document}

%

\authorrunning{Van Heerdt et al.}

\title{A Categorical Framework for Learning Generalised Tree Automata}
\titlerunning{Learning Generalised Tree Automata}
\author{
Gerco~van Heerdt\inst{1}\orcidID{0000-0003-0669-6865} \and
Tobias~Kapp\'e\inst{2}\orcidID{0000-0002-6068-880X} \and
Jurriaan~Rot\inst{3} \and
Matteo~Sammartino\inst{4,1}\orcidID{0000-0003-1456-2242} \and
Alexandra~Silva\inst{5,1}\orcidID{0000-0001-5014-9784}}
\institute{%
    University College London \\ \email{gerco.heerdt@ucl.ac.uk} \and
    ILLC, University of Amsterdam \\ \email{t.kappe@uva.nl} \and
    Radboud University \\ \email{jrot@cs.ru.nl} \and
    Royal Holloway, University of London \\ \email{matteo.sammartino@rhul.ac.uk} \and
    Cornell University \\ \email{alexandra.silva@cornell.edu}
}

\maketitle

\begin{abstract}
	Automata learning is a popular technique used to automatically construct an automaton model from queries.
	Much research went into devising ad hoc adaptations of algorithms for different types of automata.
    The CALF project seeks to unify these using category theory in order to ease correctness proofs and guide the design of new algorithms.
	In this paper, we extend CALF to cover learning of algebraic structures that may not have a coalgebraic presentation.
	Furthermore, we provide a detailed algorithmic account of an abstract version of the popular \LStar algorithm, which was missing from CALF\@.
	We instantiate the abstract theory to a large class of $\cat{Set}$ functors, by which we recover for the first time practical tree automata learning algorithms from an abstract framework and at the same time obtain new algorithms to learn algebras of quotiented polynomial functors.
\end{abstract}

\section{Introduction}

Automata learning---automated discovery of automata models from system observations---is emerging as a highly effective bug-finding technique with applications in verification of passports~\cite{DBLP:conf/isola/AartsSV10}, bank cards~\cite{DBLP:conf/icst/AartsRP13}, and network protocols~\cite{DBLP:conf/cav/Fiterau-Brostean16}. The design of algorithms for automata learning of different models is a fundamental research problem, and in the last years much progress has been made in developing and understanding new algorithms. The roots of the area go back to the 50s, when Moore studied the problem of inferring deterministic finite automata. Later, the same problem, albeit under different names,  was studied by control theorists~\cite{gold} and computational linguists~\cite{higuera}. The algorithm that caught the attention of the verification community is the one presented in Dana Angluin's seminal paper in 1987~\cite{Angluin87}. She proves that it is possible to infer minimal deterministic automata in polynomial time using only so-called membership and equivalence queries. Vaandrager's CACM article~\cite{cacm-frits} provides an extensive review of the literature in automata learning and its applications to verification.

Angluin's algorithm, called \LStar, has served as a basis for many extensions that work for more expressive models than plain deterministic automata: I/O automata~\cite{DBLP:conf/concur/AartsV10}, weighted automata~\cite{Bergadano,fossacs20}, register automata~\cite{DBLP:conf/cav/IsbernerHS15,DBLP:journals/sigsoft/MuesHLKR16,AartsFKV15}, nominal automata~\cite{DBLP:conf/popl/MoermanS0KS17}, and B\"uchi automata~\cite{DBLP:journals/tcs/AngluinF16}. Many of these extensions were developed independently and, though they bear close resemblance to the original algorithm, arguments of correctness and termination had to be repeated every time. This motivated Silva and Jacobs to provide a categorical understanding of $\LStar$~\cite{jasi:auto14} and capture essential data structures abstractly, in the hope of developing a generic, modular, and parametric framework for automata learning based on (co)algebra. Their early work was taken much further in Van Heerdt's master thesis~\cite{heer:anab16}, which then formed the basis of a wider project on developing a \textit{Categorical Automata Learning Framework}---CALF.\footnote{\url{http://www.calf-project.org}}  CALF was described in the 2017 paper~\cite{HeerdtS017}, but several problems were left open: 
\begin{enumerate}[leftmargin=*]
\item An abstract treatment of counterexamples: in the original $\LStar$ algorithm, counterexamples are a core component, as they enable refinement of the state space of the learned automaton to ensure progress towards termination.
\item The development of a full abstract learning algorithm that could readily be instantiated for a given model: in essence, CALF provided only the abstract data structures needed in the learning process, but no direct algorithm.
\item Finding suitable constraints on the abstract framework to cover interesting examples, such as tree automata~\cite{hubert-etal-2008}, that did not fit the constraints in~\cite{HeerdtS017}.
\end{enumerate}
In this paper, we resolve the open problems above, and develop CALF further to provide concrete learning algorithms for models that are algebras for a given functor, which notably include tree automata. In a nutshell, the contributions and technical roadmap of the paper are as follows. After recalling some categorical notions, the basics of \LStar (\autoref{sec:prelim}), and CALF (\autoref{sec:calf}), we provide:
\begin{enumerate}[leftmargin=*]
\item A general treatment of counterexamples (\autoref{sec:counter}), together with an abstract analysis of progress, that enables termination analysis of a generic algorithm.
\item A step-by-step generalisation of all components of \LStar for models that are algebras of a given functor (\autoref{sec:alg}).
\item An instantiation of the abstract algorithm to concrete categories (\autoref{sec:examples}), providing the first abstractly derived learning algorithm for tree automata.
\end{enumerate}
The present paper complements other recent work on abstract automata learning algorithms: Barlocco, Kupke, and Rot~\cite{BarloccoKR19} gave an algorithm for coalgebras of a functor, whereas Urbat and Schr\"oder~\cite{abs-1911-00874} provided an algorithm for structures that can be represented as both algebras and coalgebras.
More recently, Colcombet, Petrisan, and Stabile~\cite{colcombet2021} gave an abstract learning algorithm based on modelling automata as functors.
Our focus is on algebras, such as tree automata, that cannot be covered by the aforementioned frameworks.
A detailed comparison is given in \autoref{sec:related}. We conclude with directions for future work in \autoref{sec:future}.

\section{Preliminaries}\label{sec:prelim}

We now introduce some categorical notions that we will need later in our technical development, and describe Angluin's original \LStar algorithm. We assume some prior knowledge of category theory (categories, functors); see e.g.,~\cite{awodey,lane1998categories}.

An \emph{$(\epi, \mono)$-factorisation system} on a category $\cat{C}$ consists of classes of morphisms $\epi$ and $\mono$, closed under composition with isos, such that for every morphism $f$ in $\cat{C}$ there exist $e \in \epi$ and $m \in \mono$ with $f = m \circ e$, and we have a unique diagonal fill-in property.
Given a morphism $f$, we write $f^\triangleright$ and $f^\triangleleft$ for the $\epi$-part and $\mono$-part of its factorisation, respectively.

We work in a category $\cat{C}$ with finite products and coproducts.
When $f: X \to Z$ and $g: Y \to Z$, we write $[f, g]$ for the unique arrow from $X + Y$ to $Z$ induced by the coproduct.
We assume that $\cat{C}$ admits a fixed factorisation system $(\epi, \mono)$, where $\epi$ consists of epis and $\mono$ consists of monos.
We fix a \emph{varietor} $F$ in $\cat{C}$, that is, an endofunctor such that there is a free $F$-algebra monad $(T, \eta, \mu)$.
We write $\gamma_X$ for the $F$-algebra structure $FTX \to TX$, which is natural in $X$.
Given an $F$-algebra $(Y, y)$, we write $f^\sharp \colon (TX, \mu_X) \to (Y, y)$ for the extension of $f \colon X \to Y$ and denote $y^* = \id_Y^\sharp \colon (TY, \mu_Y) \to (Y, y)$.
We often implicitly apply forgetful functors.
We fix an input object $I$ and an output object $O$ and write $F_I$ for the functor $I + F(-)$.
Lastly, we assume $F$ preserves $\epi$.

\subsection{Abstract Automata}

We recall the automaton definition from Arbib and Manes~\cite{arbib-manes-1974}, which we will use in this paper, and its basic properties of accepted language and minimality.

\begin{definition}[Automaton]%
	\label{def:automaton}
	An \emph{automaton} is a tuple $\aut = (Q, \delta, i, o)$ consisting of a state space object $Q$, dynamics $\delta \colon FQ \to Q$, initial states $i \colon I \to Q$, and an output $o \colon Q \to O$.
    A \emph{homomorphism} from $\aut$ to $\aut' = (Q', \delta', i', o')$ is an $F$-algebra morphism $h$ from $(Q, \delta)$ to $(Q', \delta')$ --- that is to say, a function $h\colon Q \to Q'$ with $\delta' \circ Fh = h \circ \delta$ --- such that $h \circ i = i'$ and $o' \circ h = o$.
\end{definition}
We will use the case of deterministic automata as a running example.
\begin{example}\label{ex:da}
	If $\cat{C} = \cat{Set}$ with the (surjective, injective) factorisation system, $F = (-) \times A$ for a finite set $A$, $I = 1 = \{*\}$, and $O = 2 = \{0, 1\}$, we recover deterministic automata (DAs) as automata: the state space is a set $Q$, the transition function is the dynamics, the initial state is represented as a function $1 \to Q$, and the classification of states into accepting and rejecting ones is represented by a function $Q \to 2$.
	In this case we obtain the monad $T = (-) \times A^*$, with its unit pairing an element with the empty word $\eword$ and the multiplication concatenating words.
	The extension of $\delta \colon Q \times A \to Q$ to $\delta^* \colon Q \times A^* \to Q$ is the usual one that lets the automaton read a word starting from a given state.
\end{example}
\begin{definition}[Language]
	A \emph{language} is a morphism $TI \to O$.
	The language accepted by an automaton $\aut = (Q, \delta, i, o)$ is given by $\lang_\aut = TI \xrightarrow{\rch_\aut} Q \xrightarrow{o} O$, where $\rch_\aut \colon TI \to Q$ is the \emph{reachability map} of $\aut$ given by $i^\sharp$.
\end{definition}

\begin{definition}[Minimality~\cite{arbib-manes-1974}]
	An automaton $\aut$ is said to be \emph{reachable} if $\rch_\aut \in \epi$.
	$\aut$ is \emph{minimal} if it is reachable and every reachable automaton $\aut'$ s.t.\ $\lang_\aut = \lang_{\aut'}$ admits a (necessarily unique) homomorphism to $\aut$.
\end{definition}

\begin{example}
	Recall the setting from \autoref{ex:da}.
	The reachability map $\rch_\aut \colon 1 \times A^* \to Q$ for a DA $\aut = (Q, \delta, i, o)$ assigns to each word the state reached after reading that word from the initial state.
	The language $\lang_\aut \colon 1 \times A^* \to 2$ accepted by $\aut$ is precisely the language accepted by $\aut$ in the traditional sense.
	Reachability of $\aut$ means that for every state $q \in Q$ there exists a word that leads to $q$ from the initial state.
	If this is the case, the unique homomorphism into a language-equivalent minimal automaton identifies states that accept the same language.
	Here, minimality is equivalent to having a minimal number of states.
\end{example}
A general study of existence of minimal automata in this setting is given in~\cite{AT89}; see also~\cite{HeerdtKRS019}.

\subsection{The \texorpdfstring{\LStar}{L-star}\ algorithm}\label{sec:lstar}
\begin{figure*}[t]
\begin{minipage}[t]{.66\textwidth}
\begin{algorithm}[H]
	\caption{Make table closed and consistent}\label{alg:lstar-closedcons}
	\begin{algorithmic}[1]
		\Function {Fix}{$S, E$}
			\While{$\ot$ is not closed or not consistent}
				\If{$\ot$ is not closed}
					\State find $t \in S,a \in A$ such that $\forall s \in S.\ \ot(ta) \neq \ot(s)$
					\State $S \gets S \cup \{sa\}$
				\ElsIf{$\ot$ is not consistent}
					\State find $s_1,s_2 \in S$, $a \in A$ and $e \in E$ such that
					\Statex \hspace{14mm} $\ot(s_1) = \ot(s_2)$ and $\ot(s_1a)(e) \neq \ot(s_2a)(e)$
					\State $E \gets E \cup \{ae\}$
				\EndIf
			\EndWhile
			\State \Return $S$, $E$
		\EndFunction
	\end{algorithmic}
\end{algorithm}
\end{minipage}
\quad
\begin{minipage}[t]{.30\textwidth}
\begin{algorithm}[H]
	\caption{\LStar algorithm}\label{alg:lstar-main}
	\begin{algorithmic}[1]
		\State $S \gets \{  \eword \}$\label{line:lstar-trivial-s}
		\State $E \gets \{  \eword \}$\label{line:lstar-trivial-e}
		\State $S, E \gets \Call{Fix}{S, E}$
		\While{$\mathsf{EQ}(\hyp_{\ot}) = c$}\label{line:lstar-loop}
			\State $S \gets S \cup \mathsf{prefixes}(c)$\label{line:lstar-update}
			\State $S, E \gets \Call{Fix}{S, E}$\label{line:lstar-complete}
		\EndWhile
		\State \Return $\hyp_{\ot}$\label{line:lstar-terminate}
	\end{algorithmic}
\end{algorithm}
\end{minipage}
\end{figure*}

In this section, we recall Angluin's algorithm \LStar, which learns the \emph{minimal} DFA accepting a given unknown regular language $\lang$. The algorithm can be seen as a game between two players: a \emph{learner} and a \emph{teacher}. The learner can ask two types of \emph{queries} to the teacher:
\begin{enumerate}[leftmargin=*]
	\item
		\textbf{Membership queries}:  is a word $w \in A^*$ in $\lang$?
	\item
		\textbf{Equivalence queries}: is a \emph{hypothesis} DFA $\hyp$ correct? That is, is $\lang_\hyp = \lang$?\end{enumerate}

The teacher answers \emph{yes} or \emph{no} to these queries. Moreover, negative answers to equivalence queries are witnessed by a \emph{counterexample}---a word classified incorrectly by $\hyp$.
The learner gathers the results of queries into an \emph{observation table}: a function $\ot \colon S \cup S \cdot A \to 2^E$, where $S, E\subseteq A^*$ are finite and $\ot(s)(e) = \lang(se)$.  This function can be depicted as a table where elements of $S \cup S\cdot A$ label rows ($\cdot$ is pointwise concatenation) and elements of $E$ label columns.

\begin{wrapfigure}{r}{4cm}
\vspace{-.5cm}
\begin{center}
\begin{tabular}{L{4em} R{2ex} | C{2.5ex}C{2.5ex}C{2.5ex}@{}m{0pt}@{}} 
        \multicolumn{6}{c}{\hfill$\overbracket[.8pt][2pt]{\rule{8ex}{0pt}}^{\displaystyle E}$} \\
		& & \eword &  b & ab &
		\\
		\cline{2-6}
		\ldelim\lbrack{1}{5mm}[$S$\,] & \eword & 1 & 0 & 1 &
		\\[1ex]
		\cline{2-6}
		\ldelim\lbrack{2}{3em}[$S \cdot A$\,]
		& a & 0 & 1 & 0 &
		\\
		& b & 0 & 0 & 0 &
\end{tabular}
\end{center}
\vspace{-.5cm}
\end{wrapfigure}
As an example, consider the table on the right, over the alphabet $A = \{a,b\}$, where $S = \{ \eword \}$ and $E = \{ \eword, b, ab \}$.
This table approximates a language that contains $\eword, ab$, but not $a,b,bb,aab,bab$. Following the visual intuition, we will refer to the part of the table indexed by $S$ as the \emph{top} part of the table, and the one indexed by $S \cdot A$ as the \emph{bottom} part.

Intuitively, the content of each row labelled by a word $s$ approximates the Myhill--Nerode equivalence class of $s$. This is in fact the main idea behind the construction of a hypothesis DFA $\hyp_\ot$ from $\ot$: states of $\hyp_\ot$ are distinct rows of $\ot$, corresponding to distinct Myhill--Nerode equivalence classes. Formally, $\hyp_{\ot} = (Q,q_0,\delta,F)$ is defined as follows: 
\begin{itemize}[leftmargin=*]
\item $Q = \{\ot(s)\mid s\in S\}$ is the set of states;
\item $F = \{ \ot(s) \mid s\in S, \ot(s)( \eword)=1\}$ is the set of final states;
\item  $q_0 = \ot(\eword)$ is the initial state;
\item $\delta\colon Q\times A \to Q, (\ot(s),a) \mapsto \ot(sa)$ is the transition function.
\end{itemize}
For $F$ and $q_0$ to be well-defined we need $\eword$ in $E$ and $S$ respectively. Moreover, for $\delta$ to be well-defined we need $\ot(sa) \in Q$ for all $sa \in S \cdot A$, and we must ensure that the choice of $s$ to represent a row does not affect the transition. These constraints are captured in the following two properties.
\begin{definition}[Closedness and consistency]%
\label{def:lstar-clos-cons}
A table $\ot$ is \emph{closed} if for all $t \in S$ and $a \in A$ there exists $s \in S$ such that $\ot(s) = \ot(ta)$. A table is \emph{consistent} if for all $s_1, s_2 \in S$ with $\ot(s_1) = \ot(s_2)$ we have $\ot(s_1a) = \ot(s_2a)$ for any $a \in A$.
\end{definition}
Closedness and consistency form the core of \LStar, described in \autoref{alg:lstar-main}. The sets $S$ and $E$ are initialised with the empty word $\eword$ (lines~\ref{line:lstar-trivial-s} and~\ref{line:lstar-trivial-e}), and extended as a closed and consistent table is built using the subroutine \textsc{Fix}, given in \autoref{alg:lstar-closedcons}. The main loop uses an equivalence query, denoted $\mathsf{EQ}$, to ask the teacher whether the hypothesis induced by the table is correct. If the result is a counterexample $c$, the table is updated by adding all prefixes of $c$ to $S$ (line~\ref{line:lstar-update}) and made closed and consistent again (line~\ref{line:lstar-complete}).
Otherwise, the algorithm returns with the correct hypothesis (line~\ref{line:lstar-terminate}). See Appendix~\ref{app:lstar-ex} for an example.

\section{The abstract data structures in CALF}\label{sec:calf}

We recall the basic notions underpinning CALF~\cite{HeerdtS017}: generalisations of the observation table, closedness, consistency and hypothesis. The generalised table is called a \emph{wrapper}:

\begin{definition}[Wrapper]%
	\label{def:wrapper}
    A \emph{wrapper} for an object $Q$ is a pair of morphisms
    \[\wrap = \big(\hspace{-1mm}\xymatrix@C=.4cm{S\ar[r]^{ \;\alpha\;}& Q}, \xymatrix@C=.4cm{Q\ar[r]^{ \;\beta\;}& P}\hspace{-1mm}\big)\]
    We denote the factorisation of $\beta \circ \alpha$ by $\xymatrix@C=.8cm{S  \ar@{->>}[r]^(.4){\ e_\wrap \ }& H_\wrap \ar@{>->}[r]^(.6){\ m_\wrap \ } & P}$.
\end{definition}
This will be instantiated with $Q$ the state space of the target automaton, $S$ a collection of
row labels of an observation table, and $P$ a collection of possible values of the rows. Then
$\alpha$ \emph{selects} states in $Q$, and $\beta$ \emph{classifies} them into $P$.
We note that although such $\alpha$ and $\beta$ underly the learning algorithm, they are not actually known to the learner, as they explicitly involve the unknown target automaton.
However, we will see that we only need to represent certain compositions involving these morphisms, and that when $\alpha$ and $\beta$ are chosen appropriately it will be possible to compute these compositions.

\begin{example}[Observation table wrapper]\label{ex:dawrapper}
	Recall the DA setting from \autoref{ex:da} and consider a DA $\aut = (Q, \delta, i, o)$.
    For $S \subseteq A^*$ and $E \subseteq A^*$, we can define a wrapper $\wrap = \big(\xymatrix@C=.6cm{S\ar[r]^(.45){ \;\alpha_S\;}& Q}, \xymatrix@C=.6cm{Q\ar[r]^(.55){ \;\beta_E\;}& 2^E}\big)$ for $Q$ as follows:
    \begin{mathpar}
        \alpha_S(w) = \rch_\aut(*, w)
        \and
        \beta_E(q)(e) = (o \circ \delta^*)(q, e).
	\end{mathpar}
	The composition $\beta_E \circ \alpha_S \colon S \to 2^E$ is precisely the top part of the observation table of \LStar, with rows $S$ and columns $E$. In fact, we have
	$
		(\beta_E \circ \alpha_S)(s)(e) = \lang_\aut(*, se).
	$
	The image of $\beta_E \circ \alpha_S$ is the set of rows that appear in the table. In \LStar{}, this set is used as states of the hypothesis, and in our setting can be obtained
    as $H_\wrap$, recalling that the (surjective, injective) factorisation system in $\cat{Set} $ gives factorisation through the image.
\end{example}
Before we define hypotheses in this abstract framework, we need generalised notions of closedness and consistency.
\begin{definition}[Closedness and consistency]%
	\label{def:clos-cons}
	Given a wrapper
    \[\wrap = \big(\hspace{-1mm}\xymatrix@C=.4cm{S\ar[r]^{ \;\alpha\;}& Q}, \xymatrix@C=.4cm{Q\ar[r]^{ \;\beta\;}& P}\hspace{-1mm}\big),\]
    where $Q$ is the state space of an automaton $(Q, \delta, i, o)$, we say that $\wrap$ is \emph{closed} if there exist morphisms $i_\wrap \colon I \to H_\wrap$ and $\close_\wrap \colon FS \to H_\wrap$ making the diagrams below commute.
	\begin{align*}
		\begin{tikzcd}[column sep=1.2cm,ampersand replacement=\&]
			I \ar{r}{i} \ar[dashed]{d}[swap]{i_\wrap} \&
				Q \ar{d}{\beta} \\
			H_\wrap \ar{r}{m_\wrap} \&
				P
		\end{tikzcd}
        &&
        \begin{tikzcd}[column sep=.7cm,ampersand replacement=\&]
            FS \ar{r}{F\alpha} \ar[dashed]{d}[swap]{\close_\wrap} \&
                FQ \ar{r}{\delta} \&
                Q \ar{d}{\beta} \\
            H_\wrap \ar{rr}{m_\wrap} \&
                \&
                P
        \end{tikzcd}
	\end{align*}
	Furthermore, we say that $\wrap$ is \emph{consistent} if there exist morphisms $o_\wrap \colon H_\wrap \to O$ and $\cons_\wrap \colon FH_\wrap \to P$ making the diagrams below commute.
	\begin{align*}
		\begin{tikzcd}[column sep=1.2cm,ampersand replacement=\&]
			S \ar{r}{e_\wrap} \ar{d}[swap]{\alpha} \&
				H_\wrap \ar[dashed]{d}{o_\wrap} \\
			Q \ar{r}{o} \&
				O
		\end{tikzcd}
        &&
        \begin{tikzcd}[column sep=.7cm,ampersand replacement=\&]
            FS \ar{d}[swap]{F\alpha} \ar{rr}{Fe_\wrap} \&
                \&
                FH_\wrap \ar[dashed]{d}{\cons_\wrap} \\
            FQ \ar{r}{\delta} \&
                Q \ar{r}{\beta} \&
                P
        \end{tikzcd}
	\end{align*}
\end{definition}

\begin{example}%
  	\label{ex:wrap-clos-cons}
	In the DA case, generalised closedness and consistency instantiate to the conditions allowing the hypothesis to be well-defined in \LStar{} (see \autoref{sec:lstar}):

	\begin{description}[noitemsep,topsep=0pt]
        \item[Closedness:]	The wrapper $(\alpha_S, \beta_E)$ is closed if: (i)~there exists $s \in S$ such that $(\beta_E \circ \alpha_S)(s) = (\beta_E \circ i)(*)$ and; (ii)~for all $s \in S$ and $a \in A$ there exists $s_a \in S$ such that $(\beta_E \circ \alpha_S)(s_a) = (\beta_E \circ \delta)(\alpha_S(s), a)$. Condition (i) holds immediately if $\eword \in S$---the function $(\beta_E \circ i)(*) \colon E \to 2$ maps $e \in E$ to $\lang_\aut(*, e)$. Condition (ii) corresponds to closedness in \autoref{def:lstar-clos-cons}. In fact, $\beta_E \circ \delta \circ (\alpha_S \times \id_A) \colon S \times A \to 2^E$ represents the lower part of the observation table associated with $S$ and $E$.

	\item[Consistency:]
    The wrapper $(\alpha_S, \beta_E)$ is consistent if: (iii)~for all $s_1, s_2 \in S$ such that $(\beta_E \circ \alpha_S)(s_1) = (\beta_E \circ \alpha_S)(s_2)$ we have $(o \circ \alpha_S)(s_1) = (o \circ \alpha_S)(s_2)$ and; (iv)~for all $a \in A$ we have $(\beta_E \circ \delta)(\alpha_S(s_1), a) = (\beta_E \circ \delta)(\alpha_S(s_2), a)$.
            Condition (iii) holds immediately if $\eword \in E$---the function $o \circ \alpha_S \colon S \to 2$ maps $s \in S$ to $\lang_\aut(*, s)$. Condition (iv) corresponds to consistency in \autoref{def:lstar-clos-cons}.
  	\end{description}
    To determine these properties, we do not need the individual descriptions of $\alpha_S$ and $\beta_E$, which refer to the target automaton and are thus not available to the learner; we just need the compositions $\beta_E \circ \alpha_S$, $\beta_E \circ i$, $\beta_E \circ \delta \circ (\alpha_S \times \id_A)$, and $o \circ \alpha_S$, which can be determined using membership queries in this case.
    In general, for any instantiation of our abstract algorithm it will be important to show that these compositions (adapted to the wrapper and functor involved) can be determined and used concretely by the instantiated algorithm.
\end{example}
So far, we have used the wrapper to obtain the state space $\hyp_\wrap$ of the hypothesis. When a wrapper is closed and consistent, we can equip $\hyp_\wrap$ with a full automaton structure, leveraging the unique diagonal fill-in property of the factorisation.
\begin{definition}[Hypothesis]%
	\label{def:hypothesis}
	A closed and consistent wrapper \[\wrap = \big(\hspace{-1mm}\xymatrix@C=.4cm{S\ar[r]^{ \;\alpha\;}& Q}\hspace{-0.5mm},\hspace{-0.5mm}\xymatrix@C=.4cm{Q\ar[r]^{ \;\beta\;}& P}\hspace{-1mm}\big)\] for $(Q, \delta, i, o)$ induces a \emph{hypothesis} automaton $\hyp_\wrap = (H_\wrap, \delta_\wrap, i_\wrap, o_\wrap)$, where $\delta_\wrap$ is the unique diagonal in the commutative square below.
	\[
		\begin{tikzcd}
			FS \ar[twoheadrightarrow]{r}{Fe_\wrap} \ar{d}[swap]{\close_\wrap} &
				FH_\wrap \ar{d}{\cons_\wrap} \ar[dashed]{dl}[swap]{\delta_\wrap} \\
			H_\wrap \ar[rightarrowtail]{r}{m_\wrap} &
				P
		\end{tikzcd}
	\]
\end{definition}

\section{Counterexamples, generalised}\label{sec:counter}
We now provide a key missing element for the development and analysis of an abstract learning algorithm in CALF:\@ counterexamples. In the original \LStar algorithm, counterexamples are used to refine the state space of the hypothesis---namely the representations of the Myhill--Nerode classes of the language being learned. A crucial property for termination, which we prove at a high level of generality in this section, is that adding counterexamples to a closed and consistent table results in a table which is either not closed or not consistent, and hence needs to be extended. Such an extension, in turn, results in progress being made in the algorithm. We show how we can use \emph{recursive coalgebras}~\cite{osius1974,taylor1999} as witnesses for discrepancies---i.e., as counterexamples---between a hypothesis and the target language in our abstract approach.\footnote{Recursive coalgebras have been used to generalise prefix-closedness in an automata learning context in earlier work~\cite{vanheerdt2018}, as well as to generalise counterexamples~\cite{BarloccoKR19,abs-1911-00874}.} Here, and throughout the paper, we fix a \emph{target} automaton $\aut_\mathsf{t} = (Q_\mathsf{t}, \delta_\mathsf{t}, i_\mathsf{t}, o_\mathsf{t})$ whose language we want to learn. 

\begin{definition}[Recursive coalgebras]
	An $F$-coalgebra $\rho \colon S \to F S$ is \emph{recursive} if for every algebra $x \colon F X \to X$ there is a unique morphism $x^\rho \colon S \to X$ making the diagram below commute.
	\[
		\begin{tikzcd}[column sep=1.1cm]
			F S \ar[dashed]{r}{F x^\rho} &
				F X \ar{d}{x} \\
			S \ar[dashed]{r}{x^\rho} \ar{u}{\rho} &
				X
		\end{tikzcd}
	\]
\end{definition}

\begin{example}%
	\label{ex:pref-closed-rec}
	A \emph{prefix-closed} subset $S \subseteq A^*$ is easily equipped with a coalgebra structure $\rho \colon S \to 1 + S \times A$ that detaches the last letter from each non-empty word and assigns $*$ to the empty one.
	Such a coalgebra is recursive, with the unique map into an algebra being defined as a restricted reachability map.
	In fact, under certain conditions that are satisfied in the DA setting, recursivity of a coalgebra is equivalent to having a coalgebra homomorphism into the initial algebra~\cite[Corollary 5.6]{AdamekMM20}.
	This means that every recursive coalgebra is isomorphic to one given by a prefix-closed multiset of words.
	If the unique morphism into the initial algebra is injective, then the multiset becomes a set.
\end{example}
Given an automaton $\aut = (Q,\delta,i,o)$ and a recursive coalgebra $\rho \colon S \to F_I S$, the map $S \xrightarrow{{[i, \delta]}^\rho} Q$ can be seen as a generalised reachability map, allowing states in $Q$ to be reached from $S$. We use this map to derive a notion of generalised language induced by a recursive coalgebra.
This will be used to compare languages of the hypothesis and of the target automaton with respect to a specific recursive coalgebra, i.e., a specific counterexample.
\begin{definition}[$\rho$-languages]
	Given a recursive coalgebra $\rho \colon S \to F_I S$ and an automaton $\aut = (Q, \delta, i, o)$, the $\rho$-language of $\aut$ is
	$
		\lang_\aut^\rho = S \xrightarrow{{[i, \delta]}^\rho} Q \xrightarrow{o} O
	$.
\end{definition}
For instance, in the case of a DA $\aut$ and a recursive coalgebra as in \autoref{ex:pref-closed-rec}, $\lang_\aut^\rho$ is the restriction of the language of $\aut$ to the prefix-closed set of words $S$.

In \autoref{alg:lstar-main}, a counterexample is produced by the teacher (line~\ref{line:lstar-loop}) when the hypothesis does not agree with the target automaton. We now generalise counterexamples to wrappers: counterexamples are recursive coalgebras on which the languages of the hypothesis and of the target automaton disagree.
\begin{definition}[Counterexample]
	A closed and consistent wrapper $\wrap$ is said to be \emph{correct up to} a recursive $\rho \colon S \to F_I S$ if $\lang_{\hyp_\wrap}^\rho = \lang_{\aut_\mathsf{t}}^\rho$.
	A \emph{counterexample} for $\wrap$ (or $\hyp_\wrap$) is a recursive $\rho \colon S \to F_I S$ such that $\wrap$ is not correct up to $\rho$.
\end{definition}

The following guarantees incorrect hypotheses yield counterexamples.
\begin{restatable}[Language equivalence via recursion]{proposition}{sublang}\label{prop:sublang}
	Given an automaton $\aut = (Q, \delta, i, o)$, we have $\lang_{\aut_\mathsf{t}} = \lang_{\aut}$ if and only if $\lang_{\aut_\mathsf{t}}^\rho = \lang_{\aut}^\rho$ for every  recursive coalgebra $\rho \colon S \to F_I S$.
\end{restatable}

\begin{corollary}[Counterexample existence]
	Given a closed and consistent wrapper $\wrap$ for $Q_\mathsf{t}$, we have $\lang_{\hyp_\wrap} \ne \lang_{\aut_\mathsf{t}}$ iff there exists a counterexample for $\wrap$.
\end{corollary}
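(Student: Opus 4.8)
The plan is to obtain the statement as a direct corollary of Proposition~\ref{prop:sublang}, specialised to the hypothesis automaton, followed by taking the contrapositive and unfolding the definition of counterexample. Note that forming $\hyp_\wrap$ is legitimate precisely because $\wrap$ is assumed closed and consistent, so the Hypothesis definition applies and $\lang_{\hyp_\wrap}$, $\lang_{\hyp_\wrap}^\rho$ make sense.

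First I would instantiate Proposition~\ref{prop:sublang} with $\aut := \hyp_\wrap = (H_\wrap, \delta_\wrap, i_\wrap, o_\wrap)$. This yields the biconditional
\[
	\lang_{\aut_\mathsf{t}} = \lang_{\hyp_\wrap} \quad\Longleftrightarrow\quad \text{$\lang_{\aut_\mathsf{t}}^\rho = \lang_{\hyp_\wrap}^\rho$ for every recursive $\rho \colon S \to F_I S$.}
\]
Negating both sides, $\lang_{\hyp_\wrap} \ne \lang_{\aut_\mathsf{t}}$ holds if and only if there exists a recursive $\rho \colon S \to F_I S$ with $\lang_{\hyp_\wrap}^\rho \ne \lang_{\aut_\mathsf{t}}^\rho$.

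Finally I would unfold the definitions: $\wrap$ is correct up to $\rho$ exactly when $\lang_{\hyp_\wrap}^\rho = \lang_{\aut_\mathsf{t}}^\rho$, so a recursive coalgebra witnessing $\lang_{\hyp_\wrap}^\rho \ne \lang_{\aut_\mathsf{t}}^\rho$ is by definition a counterexample for $\wrap$, and conversely every counterexample is such a witness. Chaining these equivalences gives both directions of the corollary.

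I do not expect any genuine obstacle here: all the substantive content sits in Proposition~\ref{prop:sublang} (which itself rests on $TI$ being the initial $F_I$-algebra and on recursive coalgebras commuting with algebra homomorphisms), and the corollary merely repackages it for hypothesis automata. The only point requiring minor care is that the equivalence is asserted under the standing hypothesis that $\wrap$ is closed and consistent, which is exactly what is needed for $\hyp_\wrap$—and hence the two sides of the biconditional—to be defined.
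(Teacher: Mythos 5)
Your proposal is correct and is exactly the argument the paper intends: the corollary is stated without proof precisely because it is the instantiation of Proposition~\ref{prop:sublang} at $\aut = \hyp_\wrap$, negated and rephrased via the definition of a counterexample. Your remark that closedness and consistency are needed only to make $\hyp_\wrap$ (and hence both sides of the biconditional) well-defined is also the right observation.
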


The next step in \autoref{alg:lstar-main} is to fix the table by adding all prefixes of the counterexample to $S$ (line~\ref{line:lstar-update}). We generalise this step by incorporating the counterexample given by a recursive coalgebra into the wrapper. In the DA case, this precisely corresponds to adding a prefix-closed subset to $S$. The following results say that doing so will lead to either a closedness or a consistency defect. In other words, we give theoretical guarantees that resolving counterexamples results in progress being made towards convergence.

\begin{restatable}[Resolving counterexamples]{theorem}{rescount}%
	\label{thm:rescount}
	Given a closed and consistent wrapper $\wrap = \big(\xymatrix@C=.4cm{S\ar[r]^{ \;\alpha\;}& Q_\mathsf{t}}, \xymatrix@C=.4cm{Q_\mathsf{t}\ar[r]^{ \;\beta\;}& P}\big)$ and a recursive coalgebra $\rho\colon S' \to F_I S'$, the following holds. If the wrapper $\wrap' = ([\alpha, {[i_\mathsf{t}, \delta_\mathsf{t}]}^\rho], \beta)$ is closed and consistent, then $\wrap$ is correct up to $\rho$.
\end{restatable}
This theorem is used contrapositively: given a closed and consistent wrapper, adding a counterexample yields a wrapper that is either not closed or inconsistent.

\section{Generalised Learning Algorithm}\label{sec:alg}

\begin{figure*}[t]
\begin{minipage}[t]{0.47\textwidth}
\begin{algorithm}[H]
	\caption{Make wrapper closed and consistent}\label{alg:closedcons}
	\begin{algorithmic}[1]
		\Function {Fix}{$\alpha, \beta$}
			\While{$(\alpha, \beta)$ not closed \Statex \qquad \qquad or not consistent}
				\If{$(\alpha, \beta)$ not closed}
					\State $\alpha \gets \alpha'$ such that $(\alpha', \beta)$ is
					\label{line:closed}
					\Statex \qquad \qquad locally closed w.r.t. $\alpha$
				\ElsIf{$(\alpha, \beta)$ not consistent}
					\State $\beta \gets \beta'$ such that $(\alpha, \beta')$ is
					\label{line:consist}
					\Statex \qquad\qquad locally consistent w.r.t.~$\beta$
				\EndIf
			\EndWhile
			\State \Return $\alpha, \beta$
		\EndFunction
	\end{algorithmic}
\end{algorithm}
\end{minipage}
\quad
\begin{minipage}[t]{0.48\textwidth}
\begin{algorithm}[H]
	\caption{Abstract automata learning algorithm}\label{alg:main}
	\begin{algorithmic}[1]
		\State $\alpha, \beta \gets \Call{Fix}{{!} \colon 0 \to Q_\mathsf{t}, {!} \colon Q_\mathsf{t} \to 1}$\label{line:trivial}
		\While{$\mathsf{EQ}(\hyp_{(\alpha, \beta)}) = \rho \colon S \to F_I S$}\label{line:loop}
			\State $\alpha \gets \alpha'$ s.t.\ $\alpha'^\triangleleft = [\alpha, [i_\mathsf{t}, \delta_\mathsf{t}]^\rho]^\triangleleft$
			\label{line:update}
			\State $\alpha, \beta \gets \Call{Fix}{\alpha, \beta}$\label{line:complete}
		\EndWhile
		\State \Return $\hyp_{(\alpha, \beta)}$\label{line:terminate}
	\end{algorithmic}
\end{algorithm}
\end{minipage}
\end{figure*}

We are now in a position to describe our general algorithm.
Similarly to \LStar~(\autoref{sec:lstar}) it is organised into two procedures: \autoref{alg:closedcons}, which contains the abstract procedure for making a wrapper closed and consistent, and \autoref{alg:main}, containing the learning iterations.
These generalise the analogous procedures in \LStar, \autoref{alg:lstar-closedcons} and \autoref{alg:lstar-main}, respectively.
We note again that although the algorithmic description operates on a wrapper $(\alpha, \beta)$, these individual morphisms will not be known to the learner.
In fact, at this level of abstraction the descriptions should be seen as algorithmic templates rather than concrete algorithms.
An instantiation must ensure that at least the compositions required to determine the closedness and consistency conditions and to construct the hypothesis can be maintained.
These compositions are $\beta \circ \alpha$, $\beta \circ i_{\mathsf{t}}$, $\beta \circ \delta_{\mathsf{t}} \circ F\alpha$, and $o_\mathsf{t} \circ \alpha$.
We have previously shown how these instantiate to recover \LStar, and in \autoref{sec:examples} we will discuss the class of examples given by generalised tree automata.

In \autoref{alg:main}, the wrapper is initialised with trivial maps and extended to be closed and consistent using the subroutine \textsc{Fix} (line~\ref{line:trivial}). The equivalence query for the main loop (line~\ref{line:loop}) returns a counterexample in the form of a recursive coalgebra, which is used to update the wrapper (line~\ref{line:update}, which will be explained in more detail later, when we define runs).
The updated wrapper is passed on to the subroutine \textsc{Fix} (line~\ref{line:complete}) to be made closed and consistent.

A crucial point for \autoref{alg:closedcons} is defining what it means to resolve the ``current'' closedness and consistency defects.
We call these \emph{local} defects, meaning the ones that can be directly detected in the current wrapper.
For DAs, local closedness defects are rows from the bottom part missing in the top part, and the empty word row if it is missing.
Local consistency defects are pairs of row labels which are distinguished by the target language, or with differing rows when the labels are extended with a single symbol.

We first introduce additional notions to formalise these ideas.
\jurriaan{should we define quotients and subobjects?}
\tobias{It's kind of implicit in the definition of ordering, is it not?}
We partially order the subobjects and quotients of the target automaton's state space $Q_\mathsf{t}$ in the usual way. Given two subobjects $j \colon J \hookrightarrow Q_\mathsf{t}$ and $k \colon K \hookrightarrow Q_\mathsf{t}$, we say $j \leq k$ if there is $f \colon J \to K$ such that $k \circ f = j$. Intuitively, $j$ is ``contained'' in $k$. Given two quotients $x \colon Q_\mathsf{t} \twoheadrightarrow X$ and $y \colon Q_\mathsf{t} \twoheadrightarrow Y$, we say $x \le y$ if there exists $g \colon X \to Y$ such that $y = g \circ x$. Intuitively, $x$ is ``finer'' than $y$.

Now, consider a wrapper $(\alpha,\beta)$ for $Q_\mathsf{t}$. We have that $\alpha^\triangleleft$ and $\beta^\triangleright$ are a subobject and a quotient of $Q_\mathsf{t}$, respectively. For instance, in the DA case, $\alpha^\triangleleft$ is the set of states in $Q_\mathsf{t}$ currently represented by the table, and $\beta^\triangleright$ is the equivalence relation on states induced by the rows. We can now say another wrapper $(\alpha',\beta')$ is a locally closed extension of $(\alpha,\beta)$ if (a)~it represents at least the same states of the target automaton as $\alpha$, formalised $\alpha^\triangleleft \le \alpha'^\triangleleft$, and; (b)~it solves the closedness defects present in $\alpha$.
Local consistency is analogous: it requires the extended wrapper to distinguish at least the same states of $Q_\mathsf{t}$ as the original one.

\begin{definition}[Local closedness and consistency]
	Consider a wrapper
    \[\wrap = \big(\xymatrix@C=.4cm{S'\ar[r]^{ \;\alpha'\;}& Q_\mathsf{t}}, \xymatrix@C=.4cm{Q_\mathsf{t}\ar[r]^{ \;\beta'\;}& P'}\big)\]
	We call $\wrap$ \emph{locally closed} w.r.t.\ a morphism $\alpha \colon S \to Q_\mathsf{t}$ if $\alpha^\triangleleft \le \alpha'^\triangleleft$ and there are morphisms $i_\wrap \colon I \to H_\wrap$ and $\lclose_{\wrap,\alpha} \colon FS \to H_\wrap$ s.t.\ these diagrams commute:
	\begin{align*}
		\begin{tikzcd}[column sep=1.2cm,ampersand replacement=\&,row sep=1.2em]
			I \ar{r}{i_\mathsf{t}} \ar[dashed]{d}[swap]{i_\wrap} \&
				Q_\mathsf{t} \ar{d}{\beta'} \\
			H_\wrap \ar{r}{m_\wrap} \&
				P'
		\end{tikzcd}
        &&
        \begin{tikzcd}[column sep=.7cm,ampersand replacement=\&,row sep=1.2em]
            FS \ar{r}{F\alpha} \ar[dashed]{d}[swap]{\lclose_{\wrap,\alpha}} \&
                FQ_\mathsf{t} \ar{r}{\delta_\mathsf{t}} \&
                Q_\mathsf{t} \ar{d}{\beta'} \\
            H_\wrap \ar{rr}{m_\wrap} \&
                \&
                P'
        \end{tikzcd}
	\end{align*}
	Given $\beta \colon Q \to P$, we say that $\wrap$ is \emph{locally consistent} w.r.t.\ $\beta$ if $\beta'^\triangleright \le \beta^\triangleright$ and there exist morphisms $o_\wrap \colon H_\wrap \to O$ and $\lcons_{\wrap,\beta} \colon FH_\wrap \to P$ making the diagrams below commute.
	\begin{align*}
		\begin{tikzcd}[column sep=1.2cm,ampersand replacement=\&,row sep=1.2em]
			S' \ar{r}{e_\wrap} \ar{d}[swap]{\alpha'} \&
				H_\wrap \ar[dashed]{d}{o_\wrap} \\
			Q_\mathsf{t} \ar{r}{o_\mathsf{t}} \&
				O
		\end{tikzcd}
        &&
        \begin{tikzcd}[column sep=.7cm,ampersand replacement=\&,row sep=1.2em]
            FS' \ar{d}[swap]{F\alpha'} \ar{rr}{Fe_\wrap} \&
                \&
                FH_\wrap \ar[dashed]{d}{\lcons_{\wrap,\beta}} \\
            FQ_\mathsf{t} \ar{r}{\delta_\mathsf{t}} \&
                Q_\mathsf{t} \ar{r}{\beta} \&
                P
        \end{tikzcd}
	\end{align*}
\end{definition}
A wrapper $(\alpha, \beta)$ is closed if and only if it is locally closed w.r.t.\ $\alpha$ and consistent if and only if it is locally consistent w.r.t.\ $\beta$.
\begin{example}%
    \label{ex:loc-clos-cons}
	For the case of DAs, consider a wrapper $(\alpha_{S'},\beta_{E'})$ representing an observation table $(S',E')$ for the target DA $\aut_\mathsf{t}$ (see \autoref{ex:dawrapper}):
	\begin{description}[noitemsep,topsep=0pt]
		\item[Local closedness:] Given $\alpha_S \colon S \to Q_\mathsf{t}$, $(\alpha_{S'},\beta_{E'})$ is locally closed w.r.t.\ $\alpha_S$ if (1)~$S \subseteq S'$ (to ensure $\alpha_S^\triangleleft \leq \alpha_{S'}^\triangleleft$); (2)~$S'$ contains the empty word (left diagram); and (3)~any row in the bottom part of the table $(S, E')$ occurs in the top part of $(S', E')$ (right diagram).
		\item[Local consistency:] Similarly, given $\beta_E \colon Q_\mathsf{t} \to 2^E$, $(\alpha_{S'},\beta_{E'})$ is locally consistent w.r.t.\ $\beta_E$ if (1)~$E \subseteq E'$ (to ensure $\beta_{E'}^\triangleright \le \beta_E^\triangleright$); (2)~$E'$ contains the empty word (left diagram); and (3)~for all $s, s' \in S'$ and $a \in A$, if $s$ and $s'$ map to the same row in the top part of $(S', E')$, then the rows for $sa$ and $s'a$ are the same in the bottom part of $(S', E)$ (right diagram).
	\end{description}
\end{example}
In \autoref{alg:closedcons} we assume that we can always find locally closed and consistent wrappers (lines~\ref{line:closed} and~\ref{line:consist} respectively). This assumption holds in general for local closedness: for each wrapper $(\alpha, \beta)$ for $Q_\mathsf{t}$ we can always find $\alpha'$ such that $(\alpha', \beta)$ is locally closed w.r.t.\ $\alpha$.

\begin{restatable}{lemma}{locallyclosed}
	Given a wrapper $(\alpha, \beta)$ for $Q_\mathsf{t}$,
	$([\alpha, [i_\mathsf{t}, \delta_\mathsf{t}] \circ F_I \alpha], \beta)$ is locally closed w.r.t.\ $\alpha$.
\end{restatable}
\noindent
This result is enabled by the algebraic
nature of automata. Local consistency is not inherently algebraic, so ensuring it takes more effort.
We shall see in \autoref{sec:examples} that existence of locally closed/consistent extensions can be proved constructively for a broad class of automata.

\paragraph*{Termination.}

To analyse termination of \autoref{alg:main}, we introduce its \emph{runs}.
\begin{definition}[Run of the algorithm]%
	\label{def:run}
	A \emph{run} of the algorithm is a stream of wrappers $\wrap_n = (\alpha_n, \beta_n)$ satisfying the following conditions:
	\begin{enumerate}[leftmargin=*]
		\item
			$\alpha_0 \colon 0 \to Q_\mathsf{t}$ and $\beta_0 \colon Q_\mathsf{t} \to 1$ are the unique morphisms;
		\item
			if $\wrap_n$ is not closed, then $\beta_{n + 1} = \beta_n$ and $\alpha_{n + 1}$ is s.t.\ $(\alpha_{n + 1}, \beta_n)$ is locally closed w.r.t.\ $\alpha_n$;
		\item
			if $\wrap_n$ is closed but not consistent, then $\alpha_{n + 1} = \alpha_n$ and $\beta_{n + 1}$ is s.t.\ $(\alpha_n, \beta_{n + 1})$ is locally consistent w.r.t.\ $\beta_n$;
		\item\label{item:cex}
			if $\wrap_n$ is closed and consistent and we obtain a counterexample $\rho \colon S \to F_I S$ for $\wrap_n$, then $\alpha_{n + 1}^\triangleleft = {[\alpha_n, {[i_\mathsf{t}, \delta_\mathsf{t}]}^\rho]}^\triangleleft$ and $\beta_{n + 1} = \beta_n$; and
		\item
			if $\wrap_n$ is closed and consistent and correct up to all recursive $F_I$-coalgebras, then $\wrap_{n + 1} = \wrap_n$.
	\end{enumerate}
\end{definition}
Note that, in point~\ref{item:cex} above and in line~\ref{line:update} of \autoref{alg:main}, we admit a more general counterexample resolution than \autoref{thm:rescount}: we only require that $\alpha_{n+1}$ and $[\alpha_n, {[i_\mathsf{t}, \delta_\mathsf{t}]}^\rho]$ represent the same states of the target automaton. This captures how observation tables are updated in practice; for instance in \LStar{} a counterexample prefix already in the table is discarded.%
\begin{proposition}\label{prop:converge}
	\autoref{alg:main} halts if and only if for all runs ${\{\wrap_n\}}_{n \in \N}$ there is $n$ with $\wrap_{n + 1} = \wrap_n$.
\end{proposition}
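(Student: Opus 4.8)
The plan is to match each \emph{execution} of Algorithm~\ref{alg:main}---one for each choice sequence, a choice sequence specifying which defect \textsc{Fix} repairs at each step and which counterexample the teacher returns to each equivalence query---with a \emph{run} in the sense of the preceding definition, in such a way that the execution reaches line~\ref{line:terminate} exactly when the associated run is eventually constant. Here ``Algorithm~\ref{alg:main} terminates'' is read as ``every execution reaches line~\ref{line:terminate}'', which is the only reading compatible with the stated quantification over all runs. Given this correspondence together with the fact (established in the third paragraph below) that a run has $\wrap_{n+1}=\wrap_n$ for some $n$ precisely when it is eventually constant, the biconditional follows.

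From an execution I read off the stream of wrappers $(\alpha,\beta)$ as they are mutated, padding with the final wrapper if the execution halts. Line~\ref{line:trivial} yields $\wrap_0$ with $\alpha_0\colon 0\to Q_\mathsf{t}$ and $\beta_0\colon Q_\mathsf{t}\to 1$, matching condition~1. Inside \textsc{Fix} closedness is tested before consistency, so line~\ref{line:closed} is taken exactly when the current wrapper is not closed, rewriting $\alpha$ and keeping $\beta$ (condition~2), and line~\ref{line:consist} exactly when the current wrapper is closed but not consistent, rewriting $\beta$ and keeping $\alpha$ (condition~3); that the required maps exist is exactly the content of the first proposition on local closedness and of the local-consistency assumption stated before Algorithm~\ref{alg:closedcons}. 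A pass of the main loop with counterexample $\rho$ rewrites $\alpha$ on line~\ref{line:update} so that $\alpha^\triangleleft=[\alpha,[i_\mathsf{t},\delta_\mathsf{t}]^\rho]^\triangleleft$, matching condition~4, and the hypothesis and the equivalence query on line~\ref{line:loop} are well defined because \textsc{Fix} returns only closed and consistent wrappers. Finally, line~\ref{line:terminate} is reached exactly when the current wrapper is closed, consistent, and $\mathsf{EQ}$ reports no counterexample, i.e.\ (by Proposition~\ref{prop:sublang} and the counterexample-existence corollary) when it is correct up to all recursive $F_I$-coalgebras, so the constant tail realises condition~5. Conversely, any run drives an execution: instruct the teacher to answer each equivalence query at a closed-consistent-but-incorrect wrapper with the counterexample that condition~4 uses, and resolve the choices on lines~\ref{line:closed}, \ref{line:consist}, \ref{line:update} according to conditions~2, 3, 4. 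In both directions the case split defining a run matches the control flow of the algorithm step for step.

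It then suffices to show that in a run $\wrap_{n+1}=\wrap_n$ holds if and only if $\wrap_n$ falls under condition~5. If $\wrap_n$ is not closed, condition~2 forces $\alpha_n^\triangleleft<\alpha_{n+1}^\triangleleft$ strictly, since equality would make $(\alpha_n,\beta_n)$ locally closed with respect to $\alpha_n$---that is, closed---contradicting the hypothesis; symmetrically, if $\wrap_n$ is closed but not consistent then $\beta_n^\triangleright<\beta_{n+1}^\triangleright$ strictly; and if $\wrap_n$ is closed, consistent but incorrect, it admits a counterexample $\rho$, and Corollary~\ref{cor:ce} makes $([\alpha_n,[i_\mathsf{t},\delta_\mathsf{t}]^\rho],\beta_n)$, hence $\wrap_{n+1}$, not closed or not consistent, whereas $\wrap_n$ is both. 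In each of these three cases $\wrap_{n+1}\neq\wrap_n$. Hence $\wrap_{n+1}=\wrap_n$ forces $\wrap_n$ to be closed, consistent, and correct up to all recursive $F_I$-coalgebras, which is condition~5; and since condition~5 keeps applying once it applies, the run is then constant from $n$ on. In the associated execution this is exactly the configuration in which line~\ref{line:terminate} returns, and the converse direction is the first half of the correspondence.

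The main obstacle is making the equality ``$\wrap_{n+1}=\wrap_n$'' well-posed in the first place, since line~\ref{line:update} and conditions~2 and~4 only constrain $\alpha$ up to its image. The supporting lemma I would isolate is that closedness, consistency, and the whole hypothesis automaton $\hyp_\wrap$ depend only on the subobject $\alpha^\triangleleft$ and the quotient $\beta^\triangleright$ associated with the wrapper: replacing $\alpha$ (resp.\ $\beta$) by another morphism with the same image (resp.\ coimage) leaves every defining diagram satisfiable. This is a short diagram chase, obtained by precomposing the witnessing morphisms with the relevant epimorphisms (in particular $F\alpha^\triangleright$) and using that $F$ preserves $\epi$. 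With this lemma in hand, the remaining work---flattening the two nested loops of Algorithms~\ref{alg:closedcons} and~\ref{alg:main} into the single stream of a run while respecting \textsc{Fix}'s priority of closedness over consistency---is routine bookkeeping.
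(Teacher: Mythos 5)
Your proof is correct. Note that the paper itself offers no proof of Proposition~\ref{prop:converge}: it is stated as an immediate ``correspondence'' between the nondeterministic executions of Algorithm~\ref{alg:main} and the streams axiomatised as runs, so there is no official argument to compare against. Your elaboration supplies exactly the two pieces that make the correspondence non-vacuous. First, the bookkeeping that each branch of the run definition mirrors a control point of the algorithm (with the quantifier ``terminates'' read as ``under every resolution of the nondeterminism in \textsc{Fix} and in the teacher's answers'', which is indeed the only reading consistent with quantifying over all runs). Second, and more substantively, the observation that $\wrap_{n+1}=\wrap_n$ holds precisely when clause~5 fires: for the non-closed and non-consistent cases this is immediate from the paper's remark that a wrapper is closed (consistent) iff it is locally closed (consistent) w.r.t.\ itself, and for the counterexample case you correctly reduce to Corollary~\ref{cor:ce} plus the fact that closedness and consistency depend only on $\alpha^\triangleleft$. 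The supporting lemma you flag in your last paragraph is not something you need to prove from scratch---it is exactly Lemma~\ref{lem:annoying} in Appendix~\ref{sec:ap1}, which the paper proves for the same purpose (it is what makes clause~4 of a run, which fixes $\alpha_{n+1}$ only up to image, interact correctly with Lemma~\ref{lem:alphabeta}). One small simplification: for this proposition you only need literal equality $\alpha_{n+1}=\alpha_n$ to yield a contradiction in cases 2 and 3, so the strict subobject inequalities you assert (which are the content of Lemmas~\ref{lem:closed} and~\ref{lem:consistent}) are not required here; they belong to the termination theorem that follows.
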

We can establish an invariant on the order of subsequent wrappers in runs.
\begin{restatable}{lemma}{alphabeta}\label{lem:alphabeta}
	Let ${\{\wrap_n = (\alpha_n, \beta_n)\}}_{n \in \N}$ be a run.
	For all $n \in \N$, we have $\alpha_n^\triangleleft \le \alpha_{n + 1}^\triangleleft$ and $\beta_{n + 1}^\triangleright \le \beta_n^\triangleright$.
	Moreover, if $\alpha_{n + 1}^\triangleleft \le \alpha_n^\triangleleft$, then $\alpha_{n + 1} = \alpha_n$; if $\beta_n^\triangleright \le \beta_{n + 1}^\triangleright$, then $\beta_{n + 1} = \beta_n$.
\end{restatable}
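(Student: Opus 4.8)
The plan is to perform a case analysis over the five clauses in the definition of a run, proving the monotonicity statements directly and reducing the ``moreover'' statements to a single auxiliary observation.

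\textbf{Monotonicity.} To see $\alpha_n^\triangleleft \le \alpha_{n+1}^\triangleleft$: in clauses (3) and (5) we have $\alpha_{n+1}=\alpha_n$, so this is immediate; in clause (2) it is literally part of the hypothesis ``$(\alpha_{n+1},\beta_n)$ is locally closed w.r.t.\ $\alpha_n$''; and in clause (4), where $\alpha_{n+1}^\triangleleft = [\alpha_n,[i_\mathsf{t},\delta_\mathsf{t}]^\rho]^\triangleleft$, the identity $[\alpha_n,[i_\mathsf{t},\delta_\mathsf{t}]^\rho]\circ\kappa_1=\alpha_n$ together with the diagonal fill-in property produces a morphism witnessing $\alpha_n^\triangleleft\le[\alpha_n,[i_\mathsf{t},\delta_\mathsf{t}]^\rho]^\triangleleft$. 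Dually, $\beta_{n+1}^\triangleright\le\beta_n^\triangleright$ is immediate in clauses (2), (4), (5) (where $\beta_{n+1}=\beta_n$) and is part of the hypothesis ``$(\alpha_n,\beta_{n+1})$ is locally consistent w.r.t.\ $\beta_n$'' in clause (3).

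\textbf{The key observation.} The auxiliary fact I would establish is that $H_\wrap$ and the factorisation $(\tau_\wrap^\triangleright,\tau_\wrap^\triangleleft)$ of $\tau_\wrap$, and hence whether $\wrap$ is closed, consistent, locally closed w.r.t.\ a given $\alpha'$, or locally consistent w.r.t.\ a given $\beta'$, depend only on the subobject $\alpha^\triangleleft$ and the quotient $\beta^\triangleright$ of $Q_\mathsf{t}$ (and, for the local notions, on $\alpha'^\triangleleft$ resp.\ $\beta'^\triangleright$), up to the canonical isomorphisms. This is a diagram chase from the $(\epi,\mono)$-factorisation system together with the assumption that $F$ preserves $\epi$: writing $\tau_\wrap=\beta^\triangleleft\circ\beta^\triangleright\circ\alpha^\triangleleft\circ\alpha^\triangleright$ one reads off $\tau_\wrap^\triangleright=(\beta^\triangleright\circ\alpha^\triangleleft)^\triangleright\circ\alpha^\triangleright$ and $\tau_\wrap^\triangleleft=\beta^\triangleleft\circ(\beta^\triangleright\circ\alpha^\triangleleft)^\triangleleft$, after which the commuting squares defining the (local) closedness and consistency conditions can be cancelled on the left against the epi $\alpha^\triangleright$ (and against $F\alpha^\triangleright$, which is epi because $F$ preserves $\epi$) and on the right against the mono $\beta^\triangleleft$, leaving conditions phrased purely in terms of $\alpha^\triangleleft$ and $\beta^\triangleright$. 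I expect this step---getting the cancellations right and keeping track of which isomorphism transports which witnessing morphism---to be the main obstacle; everything else is bookkeeping.

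\textbf{The ``moreover'' statements.} Suppose $\alpha_{n+1}^\triangleleft\le\alpha_n^\triangleleft$; by the monotonicity just proved and antisymmetry of the subobject order, $\alpha_{n+1}^\triangleleft$ and $\alpha_n^\triangleleft$ are equal as subobjects of $Q_\mathsf{t}$. If $\alpha_{n+1}\ne\alpha_n$, then $\wrap_{n+1}$ was produced by clause (2) or clause (4). In clause (2), $\wrap_n=(\alpha_n,\beta_n)$ is not closed whereas $(\alpha_{n+1},\beta_n)$ is locally closed w.r.t.\ $\alpha_n$; since $\alpha_{n+1}^\triangleleft$ and $\alpha_n^\triangleleft$ coincide and the second component is unchanged, the key observation turns this into ``$(\alpha_n,\beta_n)$ is locally closed w.r.t.\ $\alpha_n$'', i.e.\ $\wrap_n$ is closed---a contradiction. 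In clause (4), $\wrap_n$ is closed and consistent and $\rho$ is a counterexample for it, so Corollary~\ref{cor:ce} says that $([\alpha_n,[i_\mathsf{t},\delta_\mathsf{t}]^\rho],\beta_n)$ is not closed or not consistent; but its $\alpha$-subobject is $\alpha_{n+1}^\triangleleft=\alpha_n^\triangleleft$ and its second component is $\beta_n$, so by the key observation $\wrap_n$ is not closed or not consistent---again a contradiction. Hence $\alpha_{n+1}=\alpha_n$. The $\beta$-statement is symmetric: if $\beta_n^\triangleright\le\beta_{n+1}^\triangleright$ then the two quotients are equal, and if $\beta_{n+1}\ne\beta_n$ then $\wrap_{n+1}$ arose via clause (3), where $\wrap_n$ is not consistent but $(\alpha_n,\beta_{n+1})$ is locally consistent w.r.t.\ $\beta_n$; the key observation then forces $\wrap_n$ to be consistent---a contradiction, so $\beta_{n+1}=\beta_n$.
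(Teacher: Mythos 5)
Your proposal is correct and follows essentially the same route as the paper: a case analysis on the clauses of a run, with monotonicity read off from the definitions (plus the coproduct injection and diagonal fill-in for the counterexample clause, which is the paper's Lemma~\ref{lem:selectormorphism}), and the ``moreover'' parts obtained by contradiction from the fact that (local) closedness and consistency are invariant under replacing $\alpha$, $\beta$ by maps with isomorphic $\epi$-$\mono$ images. Your single ``key observation'' is exactly what the paper establishes piecewise as Lemmas~\ref{lem:closed}, \ref{lem:consistent} and~\ref{lem:annoying}, and you correctly identify that this transport of witnesses along the canonical isomorphisms (using that $F$ preserves $\epi$ and the unique-diagonal property) is where all the real work sits.
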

Putting these results together, we conclude that the algorithm terminates with a correct automaton, which is minimal under certain conditions. Satisfaction of the requirement on recursive coalgebras $\rho_k$ depends on the implementation of counterexamples and closing of wrappers; for DAs, it suffices to keep the set of row labels $S$ prefix-closed.

\begin{restatable}[Termination]{theorem}{termination}%
	\label{thm:termination}
	If $Q_\mathsf{t}$ has finitely many subobject and quotient isomorphism classes, then for all runs ${\{\wrap_n = (\alpha_n, \beta_n)\}}_{n \in \N}$ there exists $n \in \N$ such that $\wrap_n$ is closed and consistent and its hypothesis is correct.
	If $\aut_\mathsf{t}$ is minimal and for all $k \in \N$ there exists a recursive $\rho_k \colon S_k \to F_I S_k$ such that $\alpha_k = {[i_\mathsf{t}, \delta_\mathsf{t}]}^{\rho_k}$, then the final hypothesis is minimal.
\end{restatable}

\section{Generalised Tree Automata}%
\label{sec:examples}

We now instantiate the above development to a wide class of $\cat{Set}$ endofunctors. This yields an abstract algorithm for \emph{generalised} tree automata---i.e., automata accepting sets of trees, possibly subject to equations---including bottom-up tree automata and unordered tree automata.
We first introduce the running examples.

\begin{example}[Tree automata]%
\label{ex:trees}
Let $\Gamma$ be a ranked alphabet, i.e., a finite set where $\gamma \in \Gamma$ comes with $\arity(\gamma) \in \N$. The set of $\Gamma$-trees over a finite set of leaf symbols $I$ is the smallest set $T_\Gamma(I)$ such that $I \subseteq T_\Gamma(I)$, and for all $\gamma \in \Gamma$ we have that $t_1,\dots,t_{\arity(\gamma)} \in T_\Gamma(I)$ implies $(\gamma, t_1,\dots,t_{\arity(\gamma)}) \in T_\Gamma(I)$. The alphabet $\Gamma$ gives rise to the polynomial functor $FX = \coprod_{\gamma \in \Gamma} X^{\arity(\gamma)}$. The free $F$-algebra monad is precisely $T_\Gamma$, where the unit turns elements into leaves, and the multiplication flattens nested trees into a tree. A bottom-up deterministic tree automaton~\cite{hubert-etal-2008} is then an automaton over $F$, with finite $I$ and $O = 2$.
\end{example}

\begin{example}[Unordered tree automata]%
	\label{ex:unord-trees}
	Consider the finite powerset functor $\fpow \colon \cat{Set} \to \cat{Set}$, mapping a set to its finite subsets. The corresponding free $\fpow$-monad maps a set $X$ to the set of finitely-branching unordered trees with nodes in $X$. Automata over $\fpow$, with output set $O = 2$ and finite $I$, accept sets of such trees. Note that unordered trees can be seen as trees over a ranked alphabet $\Gamma = \{ \mathfrak{s}_i \mid i \in \mathbb{N}\}$, where $\arity(\mathfrak{s}_i) = i$, satisfying equations that collapse duplicate branches and identify lists of branches up to permutations.
\end{example}
Automata in these examples are algebras for endofunctors with the following properties: they are \emph{strongly finitary}~\cite{adamek2003}---i.e., they are finitary and preserve finite sets---and they preserve weak pullbacks. We turn these into a global assumption, used in several places; in particular, that $F$ is strongly finitary is used to guarantee the existence
of finite counterexamples.
\begin{assumption}
	In the remainder we take $\cat{C} = \cat{Set}$ with the (surjective, injective) factorisation system, assume $F$ is strongly finitary and preserves weak pullbacks, and $I$ finite.
\end{assumption}
If the target automaton $\aut_\mathsf{t}$ is finite, the algorithm terminates by \autoref{thm:termination}.

\tobias{In \autoref{thm:termination}, we assume that they have finitely many subobjects, is that not the same?}
\matteo{Yes. Just reminding the reader that we (need to) assume a finite target automaton.}

We start with the central notion of \emph{contextual wrapper},
a specific form of wrapper using contexts to generalise string concatenation to trees. We  then show that contextual wrappers enable effective procedures for local closedness and consistency, and for computing hypotheses. Moreover, they can always be updated via finite counterexamples.
Altogether, this makes the ingredients of our abstract algorithm concrete for generalised tree automata.

\subsection{Contextual wrappers}%
\label{ssec:contextual-wrappers}

Denote by $1$ the set $\{\square\}$.
Given $x \in X$ for any set $X$, we write $\mathfrak{e}_x$ for the function $1 \to X$ that assigns $x$ to $\square$.
We use the set $1$ to define the set of \emph{contexts} $T(I + 1)$, where the \emph{holes} $\square$ occurring in a context $c \in T(I + 1)$ can be used to plug in further data such as another context or a tree, e.g., in the case of \Cref{ex:trees,ex:unord-trees}.
In fact, it is well known that $T(I + (-))$ forms a monad with unit $\hat{\eta}_X$ turning each hole from $X$ into a context, and multiplication $\hat{\mu}_X$ plugging a context into another context~\cite{luth2002}.

\begin{definition}[Contextual wrapper]\label{def:contextual-wrapper}
 	Let $S \subseteq TI$ and $E \subseteq T(I+1)$. Now:
	\begin{itemize}[noitemsep,topsep=0pt]
		\item $\alpha_S \colon S \to Q_\mathsf{t}$ is defined as the restriction of the reachability map of $\aut_\mathsf{t}$ to $S$;
		\item $\beta_E \colon Q_\mathsf{t} \to O^E$ is defined as the function given by $\beta_E(q)(e) = (o_\mathsf{t} \circ {[i_\mathsf{t}, \mathfrak{e}_q]}^\sharp)(e)$.
	\end{itemize}
	A wrapper is called \emph{contextual} if it is of the form $(\alpha_S, \beta_E)$ for some $S$ and $E$.
\end{definition}
Intuitively, $\beta_E$ classifies states by plugging them into every context in $E$ and comparing the resulting outputs. In the DA case, contextual wrappers are equivalent to those of \autoref{ex:dawrapper}, where row labels are plugged into word contexts---i.e., words of the form $\square \cdot e$, with $e \in A^*$---to achieve string concatenation.

We now show how to compute several morphisms induced by a wrapper. These morphisms, intuitively, correspond to different parts of an observation table, and are used for (local) closedness and consistency, and to construct the hypothesis.
In particular, we show that they can be computed concretely by querying the language $\lang_{\aut_\mathsf{t}}$, i.e., via membership queries.
\begin{restatable}[Computing wrapper morphisms]{proposition}{setmorphisms}\label{prop:setmorphisms}
	Given $S \subseteq TI$ with inclusion $j \colon S \to TI$ and $E \subseteq T(I + 1)$
	with inclusion $k \colon E \rightarrow T(I+1)$, we have:
	\begin{itemize}[topsep=0pt,noitemsep]
		\item The \emph{top observation table} $\beta_E \circ \alpha_S \colon S \to O^E$,  $s \mapsto \lang_{\aut_\mathsf{t}} \circ \mu_I \circ T[\eta_I, j \circ \mathfrak{e}_s] \circ k$;
		\item The \emph{bottom observation table} $\beta_E \circ \delta_\mathsf{t} \circ F\alpha_S \colon FS \to O^E$, $t \mapsto \lang_{\aut_\mathsf{t}} \circ \mu_I \circ T[\eta_I, \gamma_I \circ Fj \circ \mathfrak{e}_t]\circ k$;
		\item The \emph{input rows} $\beta_E \circ i_\mathsf{t} \colon I \to O^E$ given by $(\beta_E \circ i_\mathsf{t})(x) = \lang_{\aut_\mathsf{t}} \circ T[\id_I, \mathfrak{e}_x]\circ k$;
		\item The \emph{row output} $o_\mathsf{t} \circ \alpha_S \colon S \to O$ given by
		$(o_\mathsf{t} \circ \alpha_S)(s) = \lang_{\aut_\mathsf{t}}(s)$.
	\end{itemize}
\end{restatable}
\begin{example}%
\label{ex:wrapper-maps}
For tree automata, a contextual wrapper is as follows: $S \subseteq T_\Gamma(I)$ is a set of $\Gamma$-trees over $I$, and $E \subseteq T_\Gamma(I + 1)$ is formed by contexts, i.e., $\Gamma$-trees where a special leaf $\square$ may occur, or equivalently $\Gamma + \square$-trees, where $\Gamma + \square$ is the signature $\Gamma$ extended with an additional constant $\square$. Plugging into a context intuitively amounts to replacing this leaf with a tree.

We now give the intuition behind the maps of \autoref{prop:setmorphisms}:
\begin{itemize}[noitemsep,topsep=0pt]
	\item The top part of the observation table has rows labelled by trees in $S$, columns by contexts in $E$, and rows are computed by plugging their tree labels into each column context and querying the language. When $E$ contains only contexts with exactly one instance of $\square$, this corresponds precisely to the observation tables of~\cite{DrewesH03,BesombesM07}.
	\item The bottom part contains rows labelled over elements of $FS = \coprod_{\gamma \in \Gamma} S^{\arity(\gamma)}$, i.e., trees obtained by adding a new root symbol to those from $S$. This generalises adding an alphabet symbol to row labels, as done in the bottom observation table of \LStar. Rows are computed as in the top part, by plugging their tree labels into contexts $E$ and querying. 
	\item The input rows are those for the leaves $I$, and the row output function queries the language for each row label.
\end{itemize}
The case of unordered trees is analogous, with a key difference: wrapper maps are now up to equations, as both $S$ and $E$ are sets of unordered trees. The corresponding observation table can be understood as containing equivalence classes of rows and columns. For instance, the bottom part has only one successor row for each set of trees in $S$, whereas in the previous case we have one successor row for each symbol $\gamma \in \Gamma$ and $\arity(\gamma)$-list of trees from $S$.
\end{example}

\paragraph*{Hypotheses.} Recall that, given a closed and consistent wrapper $(\alpha_S,\beta_E)$, the state space of the associated hypothesis is given by the image of $\beta_E \circ \alpha_S \colon S \to O^E$.
Since $S$ and $E$ are finite sets, we can compute the image of this function.
For bottom-up and unordered tree automata, as in the DA case (see \autoref{ex:dawrapper}), this image consists of distinct rows.
The initial states, outputs and dynamics of the hypothesis automaton are defined as follows:
\begin{mathpar}
i_{\hyp_\wrap}(x) = (\beta_E \circ i_\mathsf{t})(x)
\and
o_{\hyp_\wrap}(e_\wrap(s)) = (o_\mathsf{t} \circ \alpha_S)(s)
\and
\delta_{\hyp_\wrap}(F(e_\wrap)(x)) = (\beta_E \circ \delta_\mathsf{t} \circ F\alpha_S)(x).
\end{mathpar}
Closedness and consistency ensure well-definedness.
We know from \autoref{prop:setmorphisms} how to compute those functions via membership queries.

\subsection{Witnessing local closedness and consistency}\label{sec:setlocalclosedness}

We now consider local closedness and consistency. In the current setting, these amount to equality checks on finite structures, which can be performed effectively.

\begin{restatable}[Local closedness for $\cat{Set}$ automata]{lemma}{setlocalclosedness}\label{lem:setlocalclosedness}
	Given $S, S' \subseteq TI$ and $E \subseteq T(I + 1)$ such that $S \subseteq S'$, $(\alpha_{S'}, \beta_E)$ is locally closed w.r.t.\ $\alpha_S$ if there exist $k \colon I \to S'$ and $\ell \colon FS \to S'$ such that
	(1)~$\alpha_{S'} \circ k = i_\mathsf{t}$ and
    (2)~$\alpha_{S'} \circ \ell = \delta_\mathsf{t} \circ F\alpha_S$.
\end{restatable}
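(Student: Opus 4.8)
The plan is to unfold what ``locally closed w.r.t.\ $\alpha_S$'' means for the wrapper $\wrap = (\alpha_{S'}, \beta_E)$ and to read the required witnesses off $k$ and $\ell$. Write $\tau_\wrap = \beta_E \circ \alpha_{S'} = \tau_\wrap^\triangleleft \circ \tau_\wrap^\triangleright$ with $\tau_\wrap^\triangleright \colon S' \to H_\wrap$ in $\epi$ and $\tau_\wrap^\triangleleft \colon H_\wrap \to O^E$ in $\mono$. I would then take
\[
	i_\wrap \eqdef \tau_\wrap^\triangleright \circ k \colon I \to H_\wrap
	\qquad\text{and}\qquad
	\lclose_{\wrap, \alpha_S} \eqdef \tau_\wrap^\triangleright \circ \ell \colon FS \to H_\wrap ;
\]
these have the right domains and codomains because $k \colon I \to S'$ and $\ell \colon FS \to S'$.

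Commutativity of the two squares in the definition is then immediate: using $\tau_\wrap^\triangleleft \circ \tau_\wrap^\triangleright = \beta_E \circ \alpha_{S'}$ and the hypotheses $\alpha_{S'} \circ k = i_\mathsf{t}$, $\alpha_{S'} \circ \ell = \delta_\mathsf{t} \circ F\alpha_S$, we get
\[
	\tau_\wrap^\triangleleft \circ i_\wrap = \beta_E \circ \alpha_{S'} \circ k = \beta_E \circ i_\mathsf{t}
	\qquad\text{and}\qquad
	\tau_\wrap^\triangleleft \circ \lclose_{\wrap, \alpha_S} = \beta_E \circ \alpha_{S'} \circ \ell = \beta_E \circ \delta_\mathsf{t} \circ F\alpha_S,
\]
which are exactly the two diagrams required for local closedness (here $P = O^E$).

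The remaining ingredient is the subobject inequality $\alpha_S^\triangleleft \le \alpha_{S'}^\triangleleft$. Since $\alpha_S$ and $\alpha_{S'}$ are restrictions of the reachability map $i_\mathsf{t}^\sharp$ and $S \subseteq S'$, the inclusion $\iota \colon S \hookrightarrow S'$ satisfies $\alpha_{S'} \circ \iota = \alpha_S$; factoring $\alpha_{S'}^\triangleright \circ \iota$ and using closure of $\mono$ under composition together with uniqueness of factorisations produces a mono witnessing $\alpha_S^\triangleleft \le \alpha_{S'}^\triangleleft$ (concretely, in $\cat{Set}$, $\alpha_S(S) = \alpha_{S'}(\iota(S)) \subseteq \alpha_{S'}(S')$).

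Altogether this is a short and essentially mechanical argument; I do not anticipate any real obstacle. The only points that warrant care are bookkeeping---remembering that here the wrapper's state selector is $\alpha_{S'}$ while local closedness is checked against $\alpha_S$, so the roles of ``$S$'' and ``$S'$'' in the definition are swapped relative to the lemma statement---and spelling out the standard factorisation argument for the subobject inequality.
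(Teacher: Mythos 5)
Your proposal is correct and takes essentially the same route as the paper: the paper chooses exactly the same witnesses $i_\wrap = \tau_\wrap^\triangleright \circ k$ and $\lclose_{\wrap,\alpha_S} = \tau_\wrap^\triangleright \circ \ell$ and verifies the two squares by the same computation via $\tau_\wrap^\triangleleft \circ \tau_\wrap^\triangleright = \beta_E \circ \alpha_{S'}$. The only difference is cosmetic---the paper dismisses $\alpha_S^\triangleleft \le \alpha_{S'}^\triangleleft$ with ``because $S \subseteq S'$'', whereas you spell out the underlying diagonal-fill-in argument, which is precisely the paper's Lemma~\ref{lem:selectormorphism} applied to the inclusion.
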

\begin{example}
	For bottom-up tree automata, local closedness holds if the table $(S',E)$ already contains each leaf row (equation 1), and it contains every successor row for $S$, namely $FS = \coprod_{\gamma \in \Gamma}S^{\arity(\gamma)}$ (equation 2).
    For unordered tree automata the condition is similar, and now involves successor trees in $\fpow(S)$.
\end{example}
\begin{restatable}[Local consistency for $\cat{Set}$ automata]{lemma}{setlocalconsistency}\label{lem:setlocalconsistency}
	Let $S \subseteq TI$ and $E \subseteq E' \subseteq T(I + 1)$, with $S$ finite.
    Furthermore, suppose that for $s, s' \in S$ with $(\beta_{E'} \circ \alpha_S)(s) = (\beta_{E'} \circ \alpha_S)(s')$ we have:
	(1)~$(o_\mathsf{t} \circ \alpha_S)(s) = (o_\mathsf{t} \circ \alpha_S)(s')$; and
    (2)~$\beta_E \circ \delta_\mathsf{t} \circ F(\alpha_S \circ [\id_S, \mathfrak{e}_s]) = \beta_E \circ \delta_\mathsf{t} \circ F(\alpha_S \circ [\id_S, \mathfrak{e}_{s'}])$.
    Then $\wrap = (\alpha_S, \beta_{E'})$ is locally consistent w.r.t.\ $\beta_E$.
\end{restatable}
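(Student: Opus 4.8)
The plan is to verify, working concretely in $\cat{Set}$, the three ingredients in the definition of local consistency for $\wrap = (\alpha_S, \beta_{E'})$ with respect to $\beta_E$: the quotient inequality $\beta_{E'}^\triangleright \le \beta_E^\triangleright$, an output map $o_\wrap$, and a map $\lcons_{\wrap,\beta_E}$. Write $\tau_\wrap = \beta_{E'} \circ \alpha_S$; then $\tau_\wrap^\triangleright \colon S \to H_\wrap$ is the corestriction of $\beta_{E'} \circ \alpha_S$ onto its image, so $\tau_\wrap^\triangleright(s) = \tau_\wrap^\triangleright(s')$ iff $(\beta_{E'} \circ \alpha_S)(s) = (\beta_{E'} \circ \alpha_S)(s')$. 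For the first ingredient: since $E \subseteq E'$, the restriction map $\pi \colon O^{E'} \to O^E$ satisfies $\pi \circ \beta_{E'} = \beta_E$, hence $\ker \beta_{E'} \subseteq \ker \beta_E$ in $\cat{Set}$, which is exactly $\beta_{E'}^\triangleright \le \beta_E^\triangleright$ (equivalently, feed the factorisation $\beta_{E'} = \beta_{E'}^\triangleleft \circ \beta_{E'}^\triangleright$ into $\beta_E = \pi \circ \beta_{E'}$ and invoke uniqueness of factorisations).

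For the other two, fix once and for all a section $g \colon H_\wrap \to S$ of the surjection $\tau_\wrap^\triangleright$ (it exists since $S$, hence $H_\wrap$, is finite) and put $n \eqdef g \circ \tau_\wrap^\triangleright \colon S \to S$. By definition $n$ factors through $\tau_\wrap^\triangleright$, and $\tau_\wrap^\triangleright \circ n = \tau_\wrap^\triangleright$, so $(\beta_{E'} \circ \alpha_S)(n(s)) = (\beta_{E'} \circ \alpha_S)(s)$ for every $s \in S$. I will take $o_\wrap \eqdef o_\mathsf{t} \circ \alpha_S \circ g$ and $\lcons_{\wrap,\beta_E} \eqdef \beta_E \circ \delta_\mathsf{t} \circ F\alpha_S \circ Fg$. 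For the square involving $o_\wrap$ we get $o_\wrap \circ \tau_\wrap^\triangleright = o_\mathsf{t} \circ \alpha_S \circ n$, and applying the first hypothesis to the pair $s, n(s)$ (which have equal $\beta_{E'}$-rows) gives $o_\mathsf{t} \circ \alpha_S \circ n = o_\mathsf{t} \circ \alpha_S$, as required. For the square involving $\lcons_{\wrap,\beta_E}$, functoriality gives $\lcons_{\wrap,\beta_E} \circ F\tau_\wrap^\triangleright = \beta_E \circ \delta_\mathsf{t} \circ F(\alpha_S \circ n)$, so everything reduces to the claim
\[
	\beta_E \circ \delta_\mathsf{t} \circ F(\alpha_S \circ n) = \beta_E \circ \delta_\mathsf{t} \circ F\alpha_S .
\]

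This equality is the heart of the proof and the one place where finiteness of $S$ is genuinely used. Enumerate $S = \{t_1, \dots, t_N\}$ and interpolate between $\id_S$ and $n$ via maps $n_k \colon S \to S$ with $n_k(t_l) = n(t_l)$ for $l \le k$ and $n_k(t_l) = t_l$ otherwise, so that $n_0 = \id_S$, $n_N = n$, and $n_{k-1}$, $n_k$ agree everywhere except that $n_{k-1}(t_k) = t_k$ while $n_k(t_k) = n(t_k)$. For each $k$ introduce $\psi_k \colon S \to S + 1$ with $\psi_k(t_k) = \kappa_2(\square)$ and $\psi_k(t_l) = \kappa_1(n_{k-1}(t_l))$ for $l \ne k$; a routine check gives $[\id_S, 1_{t_k}] \circ \psi_k = n_{k-1}$ and $[\id_S, 1_{n(t_k)}] \circ \psi_k = n_k$. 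Since $t_k, n(t_k) \in S$ have equal $\beta_{E'}$-rows, the second hypothesis yields $\beta_E \circ \delta_\mathsf{t} \circ F(\alpha_S \circ [\id_S, 1_{t_k}]) = \beta_E \circ \delta_\mathsf{t} \circ F(\alpha_S \circ [\id_S, 1_{n(t_k)}])$; precomposing with $F\psi_k$ gives $\beta_E \circ \delta_\mathsf{t} \circ F(\alpha_S \circ n_{k-1}) = \beta_E \circ \delta_\mathsf{t} \circ F(\alpha_S \circ n_k)$, and composing these for $k = 1, \dots, N$ proves the displayed equality. The obstacle to watch is precisely this interpolation: the second hypothesis only permits replacing \emph{every} hole by a \emph{single} element of $S$ at once, so one cannot normalise all rows simultaneously but must peel off one element of $S$ at a time via the $\psi_k$ (hence the need for $S$ finite), taking care that the two copairings $[\id_S, 1_{t_k}]$ and $[\id_S, 1_{n(t_k)}]$ land on exactly $n_{k-1}$ and $n_k$. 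Beyond finiteness of $S$, the argument uses neither weak-pullback preservation nor strong finitarity of $F$.
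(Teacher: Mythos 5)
Your proof is correct, and it takes a genuinely different route from the paper's at the key step. The paper defines $\lcons_{\wrap,\beta_E}$ directly on $FH_\wrap$ by $\lcons_{\wrap,\beta_E}(F\tau_\wrap^\triangleright(y)) = (\beta_E \circ \delta_\mathsf{t} \circ F\alpha_S)(y)$ and must then prove well-definedness: given arbitrary $y, z \in FS$ with $F\tau_\wrap^\triangleright(y) = F\tau_\wrap^\triangleright(z)$, it invokes weak-pullback preservation of $F$ to lift the pair to an element $x$ of $FK$, where $K$ is the kernel of $\tau_\wrap^\triangleright$, and then runs essentially your one-hole-at-a-time interpolation, enumerating the (finitely many) pairs in $K$ and replacing one coordinate per step via maps $K \to S + 1$. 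You sidestep the well-definedness problem entirely by choosing a section $g$ of $\tau_\wrap^\triangleright$ and defining $\lcons_{\wrap,\beta_E} = \beta_E \circ \delta_\mathsf{t} \circ F\alpha_S \circ Fg$ outright; the commuting square then reduces to the single equation $\beta_E \circ \delta_\mathsf{t} \circ F(\alpha_S \circ n) = \beta_E \circ \delta_\mathsf{t} \circ F\alpha_S$ for the idempotent normalisation $n = g \circ \tau_\wrap^\triangleright$, which you prove by interpolating over the elements of $S$ rather than over the kernel pairs. The combinatorial heart---that the hypothesis only lets you substitute a single element of $S$ into all holes at once, so one must peel off one element at a time through $F(S+1)$, using finiteness of $S$---is the same in both arguments. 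What your version buys is that weak-pullback preservation is not needed for this lemma (you only ever compare $y$ with $Fn(y)$, never two arbitrary elements of the same fibre), so the lemma holds under strictly weaker hypotheses on $F$ than the paper assumes globally for the section; the paper's version buys a canonical, choice-free description of $\lcons_{\wrap,\beta_E}$ on equivalence classes. Your treatment of $\beta_{E'}^\triangleright \le \beta_E^\triangleright$ via the restriction $O^{E'} \to O^E$ also makes explicit a step the paper merely asserts.
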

\begin{example}%
\label{ex:tree-loc-cons}
		For bottom-up tree automata, local consistency amounts to require the following for the table for $(S,E')$. For all $s,s' \in S$ corresponding to the same row we must have: (1) $s$ and $s'$ are both accepted/rejected; (2) successor rows obtained by plugging $s$ and $s'$ into the same one-level context from $F(S + 1) = \coprod_{\gamma \in \Gamma} {(S + \{\square\})}^{\arity(\gamma)}$ are equal.

        For unordered-tree automata, we need to compare $s$ and $s'$ only when they are equationally inequivalent. Note that one-level contexts are also up to equations, which means that the position of the hole in the context is irrelevant for computing extensions of $s$ and $s'$.
\end{example}
We now develop procedures for fixing local closedness and consistency defects. First, we show that we can always extend $S$ to make the wrapper locally closed.
\begin{restatable}{proposition}{wraplocclosed}\label{prop:wraplocclosed}
	Given finite $S \subseteq TI$ and $E \subseteq T(I + 1)$, there exists a finite $S' \subseteq TI$ such that $(\alpha_{S'}, \beta_E)$ is locally closed w.r.t.\ $\alpha_S$.
	If there exists a recursive $\rho \colon S \to F_I S$ such that ${[\eta_I, \gamma_I]}^\rho \colon S \to TI$ is the inclusion, then there exists a recursive $\rho' \colon S' \to F_I S'$ such that ${[\eta_I, \gamma_I]}^{\rho'} \colon S' \to TI$ is the inclusion.
\end{restatable}
The condition of ${[\eta_I, \gamma_I]}^\rho \colon S \to TI$ being the inclusion map in the above result amounts to prefix closedness of $S$ in tree automata, see Example~\ref{ex:rec-coalg-trees} below. Further, under this condition we have that
$\alpha_S = {[i_\mathsf{t}, \delta_\mathsf{t}]}^\rho$, since the reachability map is an algebra morphism, and similarly for $\alpha_{S'}$. This is crucial to satisfy the requirements for minimality of the termination theorem.
\begin{example}%
\label{ex:rec-coalg-trees}
To better understand the above proposition, it is worth describing what recursive coalgebras are for the automata of \Cref{ex:trees,ex:unord-trees}. For bottom-up tree automata, they are coalgebras $\rho \colon S \to \coprod_{\gamma \in \Gamma} S^{\arity(\gamma)} + I$ satisfying suitable conditions. Subtree-closed subsets of $T_\Gamma(I)$ are sets of trees closed under taking subtrees. Every subtree-closed $S$ can be made into a recursive coalgebra that returns the root symbol and its arguments, if applied to a tree of non-zero depth, and a leaf otherwise. For unordered tree automata, $\rho \colon S \to \fpow S + I$ will just return the set of subtrees or a leaf.
\end{example}
The proof of \autoref{prop:wraplocclosed}, which can be found in \Cref{sec:ap2}, is constructive and describes a naive procedure
to make a table locally closed: adding all (finitely-many) successor rows to the table. For instance, in the case of tree automata, one adds rows obtained by adding a new root symbol to trees labelling rows in all possible ways, for each symbol in the alphabet.
One may optimise the algorithm by instead adding only missing rows.

We now show how to fix local consistency, by extending a finite set of column labels $E$ to a finite set $E'$ such that the resulting wrapper is locally consistent.
\begin{restatable}{proposition}{elephant}
	Given finite $S \subseteq TI$ and $E \subseteq T(I + 1)$, define $E' \subseteq T(I + 1)$ by
	$
		E' = E \; \cup \; \{(\eta_{I + 1} \circ \kappa_2)(\square)\}
			\; \cup \; \{(\hat{\mu}_1 \circ T(\id_I + c_x))(e) \mid e \in E, x \in F(S + 1)\}
	$
	where $c_x \colon 1 \to T(I + 1)$, with $c_x = \gamma_{I + 1} \circ F[T\kappa_1 \circ j, \hat{\eta}_1] \circ \mathfrak{e}_x$, and $j \colon S \to TI$ is set inclusion. It holds that $E'$ is finite and $(\alpha_S, \beta_{E'})$ is locally consistent w.r.t.\ $\beta_E$.
\end{restatable}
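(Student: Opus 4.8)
The plan is to verify the two claims separately: finiteness of $E'$, and local consistency of $(\alpha_S, \beta_{E'})$ with respect to $\beta_E$. Finiteness is immediate: $E'$ is the union of the finite set $E$, a singleton, and a set indexed by the finite set $E \times F(S+1)$ — here $F(S+1)$ is finite because $S$ is finite and $F$ is strongly finitary. So the work is entirely in establishing local consistency, which by Lemma~\ref{lem:setlocalconsistency} reduces to checking its hypotheses: first that $E \subseteq E'$, which holds by construction; and second, that for all $s, s' \in S$ with $(\beta_{E'} \circ \alpha_S)(s) = (\beta_{E'} \circ \alpha_S)(s')$ we have $(o_\mathsf{t} \circ \alpha_S)(s) = (o_\mathsf{t} \circ \alpha_S)(s')$ and $\beta_E \circ \delta_\mathsf{t} \circ F(\alpha_S \circ [\id_S, 1_s]) = \beta_E \circ \delta_\mathsf{t} \circ F(\alpha_S \circ [\id_S, 1_{s'}])$.

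For the output-equality condition I would observe that the context $(\eta_{I+1} \circ \kappa_2)(\square)$ is a "trivial" context that, when plugged with $q$, just returns the output of $q$; concretely, $\beta_{E'}(q)$ evaluated at this context equals $o_\mathsf{t}(q)$ using the definition of $\beta_E$ and the monad laws (the $\eta$-$\mu$ unit law together with $[i_\mathsf{t}, 1_q]^\sharp \circ \eta = [i_\mathsf{t}, 1_q]$). Since this context lies in $E'$, equality of the $\beta_{E'}$-rows of $s$ and $s'$ forces $o_\mathsf{t}(\alpha_S(s)) = o_\mathsf{t}(\alpha_S(s'))$, i.e.\ the first condition. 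For the successor condition, the key is to unfold what it means to evaluate $\beta_E \circ \delta_\mathsf{t} \circ F(\alpha_S \circ [\id_S, 1_s])$ at an element $x \in F(S+1)$ and a context $e \in E$: I expect this to equal $\beta_{E'}(\alpha_S(s))$ evaluated at the composite context $(\hat\mu_1 \circ T(\id_I + c_x))(e)$, where $c_x$ is precisely the one-level context built from $x$ by interpreting $S$-entries via $j$ and hole-entries via $\hat\eta_1$. This is essentially a naturality-plus-monad-multiplication computation: plugging $s$ into the one-level slot $c_x$ and then the result into $e$ is the same, after flattening with $\hat\mu$, as first building the depth-one tree over $\alpha_S$-images and reading it off with $\delta_\mathsf{t}$, then evaluating in context $e$. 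Since all these composite contexts $(\hat\mu_1 \circ T(\id_I + c_x))(e)$ for $e \in E$, $x \in F(S+1)$ lie in $E'$, equality of the $\beta_{E'}$-rows of $s$ and $s'$ yields the required equality of successor rows. Then Lemma~\ref{lem:setlocalclosedness}— sorry, Lemma~\ref{lem:setlocalconsistency} — applies and gives local consistency of $(\alpha_S, \beta_{E'})$ w.r.t.\ $\beta_E$.

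The main obstacle I anticipate is the bookkeeping in the "composite context" identity: precisely relating $\beta_E \circ \delta_\mathsf{t} \circ F(\alpha_S \circ [\id_S,1_s])$ evaluated at $(x,e)$ to $\beta_{E'}(\alpha_S(s))$ at $(\hat\mu_1 \circ T(\id_I + c_x))(e)$. This requires carefully tracking how $[i_\mathsf{t},1_q]^\sharp$ interacts with the free-algebra structure $\gamma$, the monad multiplication $\mu$, the submonad structure $\hat\mu$ of $T(I+(-))$, and the inclusion $j \colon S \to TI$ — in particular using that $i_\mathsf{t}^\sharp$ is an $F$-algebra homomorphism (so it commutes with $\gamma_I$ and $\delta_\mathsf{t}$) and that $\alpha_S = i_\mathsf{t}^\sharp \circ j$. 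It may be cleanest to first prove a small lemma expressing $\beta_{E''}$ evaluated at a context of the form $(\hat\mu_1 \circ T(\id_I + c))(e)$ in terms of $\beta_{E''}$ at $e$ composed with the "one-step" action determined by $c$, and then specialise to $c = c_x$. Everything else — finiteness, the $E \subseteq E'$ inclusion, the output condition — is routine once the framework's monad and naturality laws are in hand.
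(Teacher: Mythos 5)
Your proposal follows essentially the same route as the paper's proof: finiteness from $F$ preserving finite sets, the trivial context $(\eta_{I+1}\circ\kappa_2)(\square)$ for the output condition, and the key identity $(\beta_E \circ \delta_\mathsf{t} \circ F(\alpha_S \circ [\id_S, 1_s]))(x)(e) = (\beta_{E'} \circ \alpha_S)(s)\bigl((\hat{\mu}_1 \circ T(\id_I + c_x))(e)\bigr)$ feeding into Lemma~\ref{lem:setlocalconsistency}. The auxiliary ``composite context'' lemma you anticipate is exactly the paper's Lemma~\ref{lem:doublecontext}, so the outline is correct and matches the paper's argument.
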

For tree automata, $E'$ is $E$ plus the empty context $\square$ and the trees obtained by plugging one-level contexts formed from the current row labels (see \autoref{ex:tree-loc-cons}) into all contexts in $E$. This amounts to extending columns so that \emph{all} consistency defects are fixed.
One can optimise the procedure above by incrementally adding to $E$  only those elements of $E'$ that result in new pairs of rows being distinguished.

\subsection{Finite counterexamples}\label{sec:setcounterexamples}

Finally, we show that the teacher can always supply a finite counterexample.
\begin{restatable}[Language equivalence via finite recursion]{proposition}{langequivrec}
	Given an automaton $\aut = (Q, \delta, i, o)$, we have $\lang_{\aut_\mathsf{t}} = \lang_{\aut}$ iff $\lang_{\aut_\mathsf{t}}^\rho = \lang_{\aut}^\rho$ for all recursive $\rho \colon S \to F_I S$ such that $S$ is finite.
\end{restatable}
\begin{corollary}[Finite counterexamples]\label{cor:finite}
	Given a closed and consistent wrapper $\wrap$ for $Q_\mathsf{t}$, we have $\lang_{\hyp_\wrap} \ne \lang_{\aut_\mathsf{t}}$ iff there exists a counterexample $\rho \colon S \to F_I S$ for $\wrap$ such that $S$ is finite.
\end{corollary}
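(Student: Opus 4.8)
This corollary is the finite-support refinement of the earlier ``Counterexample existence'' corollary, and the plan is to derive it from the ``Language equivalence via finite recursion'' proposition in exactly the same way that the earlier corollary was derived from Proposition~\ref{prop:sublang}. The key observation is that, since $\wrap$ is closed and consistent, the hypothesis automaton $\hyp_\wrap = (H_\wrap, \delta_\wrap, i_\wrap, o_\wrap)$ is a genuine automaton over $F$, so the proposition applies with $\aut = \hyp_\wrap$.

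First I would unfold the definitions. By definition, a counterexample for $\wrap$ with finite carrier is a recursive $\rho\colon S \to F_I S$ with $S$ finite such that $\wrap$ is not correct up to $\rho$, i.e.\ $\lang_{\hyp_\wrap}^\rho \ne \lang_{\aut_\mathsf{t}}^\rho$. So the statement to be proved reads: $\lang_{\hyp_\wrap} \ne \lang_{\aut_\mathsf{t}}$ if and only if there is a recursive $\rho\colon S \to F_I S$ with $S$ finite and $\lang_{\hyp_\wrap}^\rho \ne \lang_{\aut_\mathsf{t}}^\rho$.

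Then I would simply invoke the ``Language equivalence via finite recursion'' proposition, instantiated at $\aut = \hyp_\wrap$, which states that $\lang_{\aut_\mathsf{t}} = \lang_{\hyp_\wrap}$ iff $\lang_{\aut_\mathsf{t}}^\rho = \lang_{\hyp_\wrap}^\rho$ for all recursive $\rho\colon S \to F_I S$ with $S$ finite. Negating both sides of this biconditional yields precisely the claim: $\lang_{\hyp_\wrap} \ne \lang_{\aut_\mathsf{t}}$ iff there exists such a $\rho$ with $\lang_{\aut_\mathsf{t}}^\rho \ne \lang_{\hyp_\wrap}^\rho$, i.e.\ a finite counterexample for $\wrap$.

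There is essentially no obstacle here: the content is entirely contained in the preceding proposition, whose proof already does the real work (using that $TI$ is the colimit of the initial $F_I$-sequence, that $F_I$ preserves finite sets, and the recursivity of the finite initial-sequence coalgebras via~\cite{capretta2006}). The only thing to check is the mild bookkeeping that $\hyp_\wrap$ is a well-defined automaton to which the proposition may be applied, which is guaranteed by closedness and consistency of $\wrap$.
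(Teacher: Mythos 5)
Your proposal is correct and matches the paper's (implicit) argument: the corollary is stated without proof precisely because it follows immediately from the ``Language equivalence via finite recursion'' proposition instantiated at $\aut = \hyp_\wrap$, by negating the biconditional and unfolding the definition of a counterexample. Nothing further is needed.
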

\begin{example}
Recall from \autoref{ex:rec-coalg-trees} that finite recursive coalgebras for bottom-up (resp.\ unordered) tree automata are coalgebras $\rho \colon S \to \coprod_{\gamma \in \Gamma} S^{\arity(\gamma)} + I$ (resp.\ $\rho \colon S \to \fpow S + I$). Thus, finite counterexamples are recursive coalgebras of this form where $S$ is finite or, more concretely, a finite subtree-closed set of trees.
\end{example}
Given a finite counterexample, if $\alpha_S$ arises from a recursive coalgebra $\rho$ (e.g., when $S$ is prefix-closed), updating the wrapper in line~\ref{line:update} of \autoref{alg:main} can be done as follows: (1)~combine
$\rho$ with the
recursive coalgebra in \autoref{cor:finite} via a coproduct (which preserves recursiveness); (2)~take a suitable factorisation to make sure that there is an inclusion of $S$ into $TI$, and thus that the updated $\alpha_{S'}$ forms a contextual wrapper (see \autoref{lem:rec-to-contextual} for a formal justification). Concretely, the latter step amounts to removing multiple copies of rows with the same label.
Altogether, these steps take the union of the current rows with the (prefix-closed) counterexample, and guarantee that $\alpha_{S'}$ again arises from a recursive coalgebra.

\subsection{Minimality}
\autoref{thm:termination} gives sufficient conditions for minimality of
the automaton obtained from the algorithm, namely: each $\alpha$
arises from a recursive coalgebra, and the target automaton should be minimal.
For the first condition to hold, there are two parts of the algorithm that need to be implemented appropriately, as they change $\alpha$: closing the table and adding counterexamples. This can always be done: for closing the table, this follows from \autoref{prop:wraplocclosed}; for counterexamples, the strategy outlined in the previous section yields a wrapper of the desired form.
As for the second condition, a minimal automaton exists
if the functor $F$ preserves arbitrary cointersections, which is the case iff $F$ is finitary~\cite{AT89}.

\section{Related work}\label{sec:related}

This paper proceeds in the line of work on categorical automata learning started in~\cite{jasi:auto14},
and further developed in the CALF framework~\cite{HeerdtS017,heer17:lear}. CALF provides abstract definitions
of closedness, consistency, and hypothesis and several techniques to analyse and guide the development
of concrete learning algorithms. CALF operates at a high level of abstraction and previously did not include an explicit learning algorithm. We discuss two further recent categorical
approaches to learning, which make stronger assumptions than CALF in order to allow for the definition of
concrete algorithms. The present paper is a third such approach.

Barlocco et al.~\cite{BarloccoKR19} proposed an abstract algorithm for learning
coalgebras, where tests are formed by an
abstract version of coalgebraic modal logic. On the one hand, the notion of wrapper and closedness
from CALF essentially instantiate to that setting; on the other hand, the combination of logic
and coalgebra is what enables to define an actual algorithm in~\cite{BarloccoKR19}.
The current work focuses on algebras rather than coalgebras, and is orthogonal. In particular,
it covers (bottom-up) tree automata, which are outside the scope of~\cite{BarloccoKR19}.

Urbat and Schr\"{o}der proposed another categorical approach to automata learning~\cite{abs-1911-00874},
which---similarly to the work of Barlocco et al.---makes stronger assumptions than CALF
in order to define a learning algorithm. Their work focuses primarly on automata, assuming
that the systems of interest can be viewed both as algebras and coalgebras,
and the
generality comes from allowing to instantiate these in various categories. Moreover,
it allows covering algebraic recognisers in certain cases, through a reduction to automata over
a carefully constructed alphabet; this (orthogonal) extension allows covering, e.g.,
$\omega$-languages as well as tree languages. However, the reduction to automata makes this process
quite different than the approach to tree learning in the present paper: it makes use
of an automaton over all (flat) contexts, yielding an infinite alphabet, and therefore
the algorithmic aspect is not clear. The extension to an actual algorithm for learning
tree automata is mentioned as future work in~\cite{abs-1911-00874}. In the present paper,
this is achieved by learning algebras directly.

Yet another categorical approach to learning was proposed recently by Colcombet, Petrisan, and Stabile~\cite{colcombet2021}.
Here, the way automata are modelled is rather different: not as algebras or coalgebras within a category, but as functors from a structure category to an output category.
So far this has led the authors to develop an abstract automata learning algorithm that generalises algorithms for DFAs, weighted automata, and subsequential transducers.
However, as their structure category is built by generating morphisms representing words by starting with a morphism for each alphabet symbol and closing under composition, it is unclear whether this approach could cover tree automata.

Concrete algorithms for learning tree automata and languages have appeared in the literature. The inference of regular tree languages using membership and equivalence queries appeared in~\cite{DrewesH03}, extending earlier work of Sakakibara~\cite{sakakibara}.  Later,~\cite{BesombesM07} provided a learning algorithm for regular tree automata using only membership queries. The instantiated algorithm in our paper has elements (such as the use of contexts) close to the concrete algorithms. The focus of the present paper is on presenting an algebraic framework that can effectively be instantiated to recover such concrete algorithms in a modular and canonical fashion, with proofs of correctness and termination stemming from the general framework.

\section{Future Work}\label{sec:future}
This paper makes use of the free monad of a functor $F$ in the formulation of the generalised learning algorithm, and hence can only deal with quotienting in a restricted setting, namely by flat equations in the presentation of $F$. It remains an open challenge to extend the present algorithm to a setting with more general equations.
For the concrete case of pomset languages~\cite{gischer-1988,grabowski-1981} represented by \emph{bimonoids}~\cite{lodaya-weil-2000}, we note that we have succesfully instantiated the abstract algorithm described in this paper, and augmented it to include optimisations specific to the equations that hold in that setting~\cite{pomset-learning}.
In future work, we aim to extend the ideas behind these optimisations to the abstract setting, as well.

Another direction is to extend the framework with side-effects, encoded by a monad, in the style of~\cite{heer17:lear}. This would enable learning more compact automata---albeit with richer, monadic, transitions---representing languages and, as a concrete instance, provide an active learning algorithm for weighted tree automata.

\paragraph{Acknowledgements}
T.~Kapp\'{e} was partially supported by the European Union’s Horizon 2020 research and innovation programme under the Marie Sk\l{}odowska-Curie grant agreement No. 101027412 (VERLAN), as well as ERC Starting Grant 679127 (ProFoundNet).
Gerco van Heerdt, Matteo Sammartino, and Alexandra Silva were partially supported by the EPSRC Standard Grant CLeVer (EP/S028641/1).

\bibliography{main}


\appendix

\section{Example of execution of \texorpdfstring{\LStar}{L-star}}%
\label{app:lstar-ex}

We now run \autoref{alg:lstar-main} with the target language $\lang = \{w \in {\{a\}}^* \mid |w| \ne 1\}$. The algorithm starts with $S = E = \{\eword\}$; the corresponding table is depicted in Figure~\ref{fig:obs1}. It invokes Algorithm~\ref{alg:lstar-closedcons} to check closedness and consistency of this table: it is not closed, because $\ot(a)$ does not appear in the top part of the table. To fix this, it adds the word $a$ to the set $S$.
The resulting table (Figure~\ref{fig:obs2}) is closed and consistent, so the algorithm builds the hypothesis $\hyp$ shown in Figure~\ref{fig:hyp1} and poses the equivalence query $\mathsf{EQ}(\hyp)$. The teacher returns the counterexample $aaa$, which should have been accepted, and all its prefixes are added to the set $S$. The next table (Figure~\ref{fig:obs3}) is closed, but not consistent: $\ot(\eword) = \ot(aa)$, but $\ot(\eword \cdot a) \neq \ot(aa \cdot a)$. The algorithm adds $a \cdot \eword = a$ to $E$ so that $\ot(\eword)$ and $\ot(aa)$ can be distinguished, as depicted in Figure~\ref{fig:obs4}.
The table is now closed and consistent, and the new hypothesis automaton is precisely the minimal DFA accepting $\lang$ (Figure~\ref{fig:obs5}).

\captionsetup[subfigure]{justification=centering}
\begin{figure}%
\centering%
	\subfloat[]{\label{fig:obs1}%
		\begin{tabular}[b]{r | c} 
			& $\eword$ \\
			\hline 
			$\eword$ & 1 \\
			\hline 
			$a$ & 0
		\end{tabular}%
	}\qquad%
	\subfloat[]{\label{fig:obs2}%
		\begin{tabular}[b]{r | c} 
			& $\eword$ \\
			\hline 
			$\eword$ & 1 \\
			$a$ & 0 \\
			\hline 
			$aa$ & 1
		\end{tabular}%
	}\qquad%
	\subfloat[]{\label{fig:hyp1}%
		\begin{tikzpicture}[automaton]
			\node[initial,state,accepting] (q0) {};
			\node[state] (q1) [right of=q0] {};
			\node (rr) [above of=q0] {};
			\node [above of=rr] {};
			\path[->]
			(q0) edge [bend left] node {$a$} (q1)
			(q1) edge [bend left] node {$a$} (q0);
		\end{tikzpicture}
	}\qquad%
	\subfloat[]{\label{fig:obs3}%
		\begin{tabular}[b]{r | c} 
			& $\eword$ \\
			\hline 
			$\eword$ & 1 \\
			$a$ & 0 \\
			$aa$ & 1 \\
			$aaa$ & 1 \\
			\hline 
			$aaaa$ & 1 \\
		\end{tabular}%
	}\qquad%
	\subfloat[]{\label{fig:obs4}%
		\begin{tabular}[b]{r | cc} 
			& $\eword$ & $a$ \\
			\hline 
			$\eword$ & 1 & 0 \\
			$a$ & 0 & 1 \\
			$aa$ & 1 & 1 \\
			$aaa$ & 1 & 1 \\
			\hline 
			$aaaa$ & 1 & 1 \\
		\end{tabular}%
	}%
		\subfloat[]{\label{fig:obs5}%
	\begin{tikzpicture}[automaton]
		\node[initial,state,accepting] (q0) {};
		\node[state] (q1) [right of=q0] {};
		\node[state,accepting] (q2) [below of=q1] {};
		\path[->]
		(q0) edge node {$a$} (q1)
		(q1) edge node {$a$} (q2)
		(q2) edge [loop right] node {$a$} ();
	\end{tikzpicture}
	}

	\caption{Example run of \LStar on $\lang = \{w \in {\{a\}}^* \mid |w| \ne 1\}$.}%
\end{figure}

\section{Proofs for \autoref{sec:counter}}

\sublang*
\begin{proof}
	First assume that $\lang_{\aut_\mathsf{t}}^\rho = \lang_{\aut}^\rho$ for all recursive $\rho \colon S \to F_I S$.
	Note that $TI$ is the initial algebra of $F_I$; thus $[\eta, \gamma] \colon F_I TI \to TI$ has an inverse.
	One easily sees that this inverse is recursive, with the corresponding unique maps into algebras being reachability maps.
	Thus, $\lang_{\aut_\mathsf{t}} = o_\mathsf{t} \circ i_\mathsf{t}^\sharp = o \circ i^\sharp = \lang_{\aut}$.

	Conversely, assume $\lang_{\aut_\mathsf{t}} = \lang_{\aut}$.
	Given a recursive coalgebra $\rho \colon S \to F_I S$, we have that ${[i_\mathsf{t}, \delta_\mathsf{t}]}^\rho = i_\mathsf{t}^\sharp \circ {[\eta_I, \gamma_I]}^\rho$ and ${[i, \delta]}^\rho = i^\sharp \circ {[\eta_I, \gamma_I]}^\rho$ by uniqueness.
	Thus, the diagram below commutes.
    \begin{align*}
        \begin{tikzcd}[column sep=1.3cm,ampersand replacement=\&,row sep=6.5mm]
			\&
				Q_\mathsf{t} \ar[out=0,in=90]{rd}{o_\mathsf{t}} \\
			S \ar[out=90,in=180]{ru}{{[i_\mathsf{t}, \delta_\mathsf{t}]}^\rho} \ar{r}{{[\eta, \gamma]}^\rho} \ar[out=-90,in=180,swap]{rd}{{[i, \delta]}^\rho} \&
				TI \ar{u}{i_\mathsf{t}^\sharp} \ar{r}{\lang_{\aut_\mathsf{t}} = \lang_{\aut}} \ar{d}[swap]{i^\sharp} \&
				O \\
			\&
				Q \ar[out=0,in=-90,swap]{ru}{o}
		\end{tikzcd}
        \\[\dimexpr-1.5\baselineskip+\dp\strutbox]&\qedhere
	\end{align*}
\end{proof}

\rescount*
\begin{proof}
	Since the diagram below on the left commutes, we obtain a unique diagonal $h \colon H_\wrap \to H_{\wrap'}$ on the right.
	\begin{align*}
		\begin{tikzcd}[ampersand replacement=\&,column sep=1.4cm,row sep=4mm]
			S \ar{rr}{e_\wrap} \ar{d}[swap]{\kappa_1} \ar[out=-10,in=135]{rd}[pos=.6,swap]{\alpha} \&
				\&
				H_\wrap \ar{dd}{m_\wrap} \\
			S + S' \ar{r}[swap]{[\alpha, {[i_\mathsf{t}, \delta_\mathsf{t}]}^\rho]} \ar{d}[swap]{\tau_{\wrap'}^\triangleright} \&
				Q_\mathsf{t} \ar{rd}{\beta} \\
			H_{\wrap'} \ar{rr}[swap]{\tau_{\wrap'}^\triangleleft} \&
				\&
				P
		\end{tikzcd}
        &&
        \begin{tikzcd}[column sep=.5cm,ampersand replacement=\&,row sep=4mm]
            S \ar[twoheadrightarrow]{r}{e_\wrap} \ar{d}[swap]{\kappa_1} \&
                H_\wrap \ar[dashed]{ldd}[swap,pos=.4]{h} \ar{dd}{m_\wrap} \\
            S + S' \ar{d}[swap]{\tau_{\wrap'}^\triangleright} \\
            H_{\wrap'} \ar[rightarrowtail,swap]{r}{\tau_{\wrap'}^\triangleleft} \&
                P
        \end{tikzcd}
	\end{align*}
	We will show that $h$ is an automaton homomorphism.
	Noting that $\tau_{\wrap'}^\triangleleft$ is a mono and $e_\wrap$ is an epi, commutativity of the diagrams below shows that $h$ commutes with the initial states ($h \circ i_\wrap = i_{\wrap'}$) and outputs ($o_{\wrap'} \circ h = o_\wrap$).
	\begin{gather*}
		\begin{tikzcd}[ampersand replacement=\&,row sep=4mm]
			I \ar{r}{i_\wrap} \ar{rd}[swap]{i_\mathsf{t}} \ar{dd}[swap]{i_{\wrap'}} \&
				H_\wrap \ar{r}{h} \ar{rdd}{m_\wrap} \ar[phantom]{d}{\circled{1}} \ar[phantom,bend right=25]{r}{\circled{2}} \&
				H_{\wrap'} \ar[rightarrowtail]{dd}{\tau_{\wrap'}^\triangleleft} \\
			\&
				Q_\mathsf{t} \ar{rd}[swap]{\beta} \\
			H_{\wrap'} \ar[rightarrowtail,swap]{rr}{\tau_{\wrap'}^\triangleleft} \ar[phantom]{ru}{\circled{1}} \&
				\&
				P
		\end{tikzcd}
        \\
		\begin{tikzcd}[ampersand replacement=\&,row sep=5mm,column sep=14mm]
			S \ar[twoheadrightarrow]{rrr}{e_\wrap} \ar{rd}[swap]{\kappa_1} \ar[twoheadrightarrow]{dd}[swap]{e_\wrap} \ar[out=-5,in=150]{rrd}[pos=.75]{\alpha} \&
				\&
				\&
				H_\wrap \ar{dd}{o_\wrap} \\
			\&
				S + S' \ar{r}[swap]{[\alpha,{[i_\mathsf{t}, \delta_\mathsf{t}]}^\rho]} \ar{d}{\tau_{\wrap'}^\triangleright} \&
				Q_\mathsf{t} \ar{rd}{o_\mathsf{t}} \ar[phantom]{ru}{\circled{3}} \\
			H_\wrap \ar{r}[swap]{h} \ar[phantom,bend left=10]{ru}{\circled{2}} \&
				H_{\wrap'} \ar{rr}[swap]{o_{\wrap'}} \ar[phantom,bend right=10]{ru}[pos=.4]{\circled{3}} \&
				\&
				O
		\end{tikzcd}
        \\
		\begin{array}{lll}
			\circled{1}\text{ closedness} &
			\circled{2}\text{ definition of $h$} &
			\circled{3}\text{ consistency}
		\end{array}
	\end{gather*}
	As for the transition functions, we use that $\tau_{\wrap'}^\triangleleft$ is a mono to show that $h \circ \close_\wrap = \close_{\wrap'} \circ F\kappa_1$ with the commutative diagram below.
	\begin{gather*}
		\begin{tikzcd}[column sep=1.6cm,row sep=5mm,ampersand replacement=\&]
			FS \ar{r}{\close_\wrap} \ar{rd}{F\alpha} \ar{d}[swap]{F\kappa_1} \&
				H_\wrap \ar{r}{h} \ar{rddd}{m_\wrap} \ar[phantom]{d}{\circled{1}} \ar[phantom,bend right=40]{r}[pos=.6]{\circled{2}} \&
				H_{\wrap'} \ar[rightarrowtail]{ddd}{\tau_{\wrap'}^\triangleleft} \\
			F(S + S') \ar{r}[swap]{F[\alpha, {[i_\mathsf{t}, \delta_\mathsf{t}]}^\rho]} \ar{dd}[swap]{\close_{\wrap'}} \&
				FQ_\mathsf{t} \ar{d}{\delta_\mathsf{t}} \\
			\&
				Q_\mathsf{t} \ar{rd}{\beta} \\
			H_{\wrap'} \ar[rightarrowtail,swap]{rr}{\tau_{\wrap'}^\triangleleft} \ar[phantom,bend left=20]{ru}[pos=.6]{\circled{1}} \&
				\&
				P
		\end{tikzcd}
        \\
		\begin{array}{ll}
			\circled{1}\text{ closedness} &
				\circled{2}\text{ definition of $h$}
		\end{array}
	\end{gather*}
	We are now ready to show that $h \circ \delta_\wrap = \delta_{\wrap'} \circ Fh$.
	This follows from commutativity of the diagram below using the fact that  $Fe_\wrap$ is an epi.
	\begin{gather*}
		\begin{tikzcd}[column sep=1.1cm,row sep=.9cm,ampersand replacement=\&]
			FS \ar[twoheadrightarrow]{rr}{Fe_\wrap} \ar{rd}[swap]{F\kappa_1} \ar[twoheadrightarrow]{dd}[swap]{Fe_\wrap} \ar{rrd}{\close_\wrap} \&
				\&
				FH_\wrap \ar{d}{\delta_\wrap} \ar[phantom,bend right=50]{d}{\circled{1}} \\
			\&
				F(S + S') \ar{d}{F\tau_{\wrap'}^\triangleright} \ar{rd}{\close_{\wrap'}} \ar[phantom]{r}{\circled{4}} \&
				H_\wrap \ar{d}{h} \\
			FH_\wrap \ar{r}[swap]{Fh} \ar[phantom,bend left=10]{ru}[pos=.6]{\circled{2}} \&
				FH_{\wrap'} \ar{r}[swap]{\delta_{\wrap'}} \ar[phantom,bend left=25]{r}[pos=.3]{\circled{3}} \&
				H_{\wrap'}
		\end{tikzcd}
        \\
		\begin{array}{ll}
			\circled{1}\text{ definition of $\delta_\wrap$} &
            \circled{2}\text{ definition of $h$} \\
			\circled{3}\text{ definition of $\delta_{\wrap'}$} &
            \circled{4}\text{ previous observation }
		\end{array}
	\end{gather*}
	Thus, $h$ is an automaton homomorphism $\hyp_\wrap \to \hyp_{\wrap'}$.
	This implies in particular that $h \circ {[i_\wrap, \delta_\wrap]}^\rho = {[i_{\wrap'}, \delta_{\wrap'}]}^\rho$.
	It follows that the diagram below commutes.
	\begin{equation}\label{eq:uptorho}
		\begin{tikzcd}[column sep=1.2cm,row sep=5mm]
			S' \ar{r}{{[i_\wrap, \delta_\wrap]}^\rho} \ar{dr}[swap]{{[i_{\wrap'}, \delta_{\wrap'}]}^\rho} &
				H_\wrap \ar{dr}{o_\wrap} \ar{d}{h} \\
			&
				H_{\wrap'} \ar{r}[swap]{o_{\wrap'}} &
				O
		\end{tikzcd}
	\end{equation}

	We now show that $\tau_{\wrap'}^\triangleright \circ \kappa_2 = {[i_{\wrap'}, \delta_{\wrap'}]}^\rho$.
	This follows by the uniqueness property of ${[i_{\wrap'}, \delta_{\wrap'}]}^\rho$ from commutativity of the diagram below, using that $\tau_{\wrap'}^\triangleleft$ is monic.
	\begin{gather*}
		\begin{tikzcd}[column sep=1.3cm,ampersand replacement=\&,row sep=5mm]
			 \&
				F_I(S + S') \ar{r}{F_I \tau_{\wrap'}^\triangleright} \ar{d}[swap]{F_I[\alpha, {[i_\mathsf{t}, \delta_\mathsf{t}]}^\rho]} \ar{rd}[swap,sloped]{[i_{\wrap'}, \close_{\wrap'}]} \ar[phantom,bend left=15]{rd}{\circled{3}} \&
				F_I H_{\wrap'} \ar{d}{[i_{\wrap'}, \delta_{\wrap'}]} \\
			F_I S' \ar[bend left=20]{ru}{F_I\kappa_2} \ar{r}[swap]{F_I{[i_\mathsf{t}, \delta_\mathsf{t}]}^\rho} \&
				F_I Q_\mathsf{t} \ar{d}[swap]{[i_\mathsf{t}, \delta_\mathsf{t}]} \ar[phantom,bend left=15,pos=0.4]{rdd}{\circled{2}} \&
				H_{\wrap'} \ar[rightarrowtail]{dd}{\tau_{\wrap'}^\triangleleft} \\
            S' \ar{u}{\rho} \ar{r}[swap]{{[i_\mathsf{t}, \delta_\mathsf{t}]}^\rho} \ar{d}[swap]{\kappa_2} \ar[phantom]{ru}{\circled{1}} \&
				Q_\mathsf{t} \ar{rd}{\beta} \ar[phantom,bend left=40]{d}{\circled{4}} \\
			S + S' \ar{ru}[swap]{[\alpha, {[i_\mathsf{t}, \delta_\mathsf{t}]}^\rho]} \ar{r}[swap]{\tau_{\wrap'}^\triangleright} \&
				H_{\wrap'} \ar[rightarrowtail]{r}[swap]{\tau_{\wrap'}^\triangleleft} \&
				P
		\end{tikzcd}
        \\
        \begin{array}{ll}
            \circled{1}\text{ definition of ${[i_\mathsf{t}, \delta_\mathsf{t}]}^\rho$} &
            \circled{2}\text{ closedness and def.\ of $i_\wrap$ } \\
            \circled{3}\text{ definition of $\delta_{\wrap'}$ } &
            \circled{4}\text{ definitions of $\tau_{\wrap'}^\triangleright$ and $\tau_{\wrap'}^\triangleleft$ }
        \end{array}
	\end{gather*}
	The commutative diagram below completes the proof.
	\begin{gather*}
		\begin{tikzcd}[column sep=1.8cm,ampersand replacement=\&,row sep= 5mm]
			S' \ar[bend left=20]{rrd}{{[i_\mathsf{t}, \delta_\mathsf{t}]}^\rho} \ar{rd}{\kappa_2} \ar{rdd}[swap,pos=.7]{{[i_{\wrap'}, \delta_{\wrap'}]}^\rho} \ar{dd}[swap]{{[i_\wrap, \delta_\wrap]}^\rho} \\
			\&
				S + S' \ar{r}[pos=.45]{[\alpha, {[i_\mathsf{t}, \delta_\mathsf{t}]}^\rho]} \ar{d}{\tau_{\wrap'}^\triangleright} \ar[phantom,bend left=15]{rdd}[pos=.45]{\circled{1}} \ar[phantom, bend right=40]{d}[pos=0.4]{\circled{2}} \&
				Q_\mathsf{t} \ar{dd}{o_\mathsf{t}} \\
			H_\wrap \ar[bend right=20]{rrd}[swap]{o_\wrap} \ar[phantom]{r}[pos=.6]{\eqref{eq:uptorho}} \&
				H_{\wrap'} \ar{rd}[swap]{o_{\wrap'}} \\
			\&
				\&
				O
		\end{tikzcd}
        \\
        \begin{array}{ll}
            \circled{1}\text{ consistency} &
            \circled{2}\text{ previous observation }
        \end{array}
		\qedhere
	\end{gather*}
\end{proof}

\section{Proofs for Section~\ref{sec:alg}}
\label{sec:ap1}

\begin{lemma}\label{lem:selectormorphism}
	For all morphisms $\alpha_1 \colon S_1 \to Q_\mathsf{t}$, $\alpha_2 \colon S_2 \to Q_\mathsf{t}$, and $f \colon S_1 \to S_2$ such that $\alpha_2 \circ f = \alpha_1$ we have $\alpha_1^\triangleleft \le \alpha_2^\triangleleft$.
\end{lemma}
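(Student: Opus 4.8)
The plan is to obtain the required comparison morphism directly from the unique diagonal fill-in property of the factorisation system $(\epi, \mono)$, using no structure beyond that. Recall that $\alpha_1 = \alpha_1^\triangleleft \circ \alpha_1^\triangleright$ with $\alpha_1^\triangleright \in \epi$ and $\alpha_1^\triangleleft \in \mono$, and likewise $\alpha_2 = \alpha_2^\triangleleft \circ \alpha_2^\triangleright$. First I would rewrite the hypothesis $\alpha_2 \circ f = \alpha_1$ as $\alpha_2^\triangleleft \circ (\alpha_2^\triangleright \circ f) = \alpha_1^\triangleleft \circ \alpha_1^\triangleright$, which exhibits a commuting square with the epi $\alpha_1^\triangleright$ on the left and the mono $\alpha_2^\triangleleft$ on the right:
\[
	\begin{tikzcd}[ampersand replacement=\&]
		S_1 \ar{d}[swap]{\alpha_1^\triangleright} \ar{r}{\alpha_2^\triangleright \circ f} \&
			\im(\alpha_2) \ar{d}{\alpha_2^\triangleleft} \\
		\im(\alpha_1) \ar{r}[swap]{\alpha_1^\triangleleft} \&
			Q_\mathsf{t}
	\end{tikzcd}
\]

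Applying the diagonal fill-in then yields a morphism $g \colon \im(\alpha_1) \to \im(\alpha_2)$ satisfying, among other things, $\alpha_2^\triangleleft \circ g = \alpha_1^\triangleleft$. By the definition of the partial order on subobjects of $Q_\mathsf{t}$ (for $j, k \in \mono$ into $Q_\mathsf{t}$, $j \le k$ iff $k$ factors through $j$ in the appropriate direction), the equation $\alpha_2^\triangleleft \circ g = \alpha_1^\triangleleft$ is precisely the assertion $\alpha_1^\triangleleft \le \alpha_2^\triangleleft$, which is what we wanted.

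I expect no real obstacle here: the only thing to get right is which square to feed to the fill-in property, and commutativity of that square is immediate from associativity and the hypothesis $\alpha_2 \circ f = \alpha_1$. In particular, the argument does not use the varietor $F$, the monad $T$, the assumption that $F$ preserves $\epi$, or any part of the automaton structure — it is a purely formal consequence of working with a factorisation system.
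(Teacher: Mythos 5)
Your proposal is correct and is essentially the paper's own proof: both form the commuting square with the epi $\alpha_1^\triangleright$ against the mono $\alpha_2^\triangleleft$ (using $\alpha_2^\triangleleft \circ \alpha_2^\triangleright \circ f = \alpha_1^\triangleleft \circ \alpha_1^\triangleright$) and read off $\alpha_1^\triangleleft \le \alpha_2^\triangleleft$ from the lower triangle $\alpha_2^\triangleleft \circ g = \alpha_1^\triangleleft$ of the unique diagonal fill-in. No gap; just note that the paper's order on subobjects is ``$j \le k$ iff $j$ factors through $k$'', which is exactly the equation you derive.
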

\begin{proof}
	This follows directly from the unique diagonal obtained in the commutative diagram below.
	\begin{align*}
		\begin{tikzcd}[ampersand replacement=\&]
			S_1 \ar[twoheadrightarrow]{r}{\alpha_1^\triangleright} \ar{d}[swap]{f} \&
				\bullet \ar{dd}{\alpha_1^\triangleleft} \ar[dashed]{ldd}{} \\
			S_2 \ar{d}[swap]{\alpha_2^\triangleright} \\
			\star \ar[rightarrowtail]{r}{\alpha_2^\triangleleft} \&
			Q_\mathsf{t}
		\end{tikzcd}
        \\[\dimexpr-\baselineskip+\dp\strutbox]&\qedhere
	\end{align*}
\end{proof}

\locallyclosed*
\begin{proof}
	Let $\wrap = ([\alpha, [i_\mathsf{t}, \delta_\mathsf{t}] \circ F_I \alpha], \beta)$.
	Note that $\alpha^\triangleleft \le {[\alpha, [i_\mathsf{t}, \delta_\mathsf{t}] \circ F_I \alpha]}^\triangleleft$ by \autoref{lem:selectormorphism} (via $\kappa_1 \colon S \to S + F_I S$).
	We define
	\begin{align*}
		i_\wrap &
			= I \xrightarrow{\kappa_1} F_I S \xrightarrow{\kappa_2} S + F_I S \xrightarrow{e_\wrap} H_\wrap \\
		\lclose_{\wrap,\alpha} &
			= FS \xrightarrow{\kappa_2} F_I S \xrightarrow{\kappa_2} S + F_I S \xrightarrow{e_\wrap} H_\wrap
	\end{align*}
	and note that the commutative diagrams below show that they satisfy the required properties.
	\[
		\begin{tikzcd}[ampersand replacement=\&,column sep=.9cm]
			I \ar[bend left]{rrd}{i_\mathsf{t}} \ar{rd}{\kappa_2 \circ \kappa_1} \ar{d}[swap]{\kappa_2 \circ \kappa_1} \\
			S + F_I S \ar{r}[swap]{\id + F_I\alpha} \ar{d}[swap]{e_\wrap} \&
				S + F_I Q_\mathsf{t} \ar{r}[swap]{[\alpha, [i_\mathsf{t}, \delta_\mathsf{t}]]} \&
				Q_\mathsf{t} \ar{d}{\beta} \\
			H_\wrap \ar{rr}{m_\wrap} \&
				\&
				P
		\end{tikzcd}
        \]
        \begin{align*}
			\begin{tikzcd}[ampersand replacement=\&,column sep=.9cm]
				FS \ar{r}{F_I\alpha} \ar{d}[swap]{\kappa_2 \circ \kappa_2} \&
					FQ_\mathsf{t} \ar[bend left]{rd}{\delta_\mathsf{t}} \ar{d}[swap]{\kappa_2 \circ \kappa_2} \\
				S + F_I S \ar{r}[swap]{\id + F_I\alpha} \ar{d}[swap]{e_\wrap} \&
					S + F_I Q_\mathsf{t} \ar{r}[swap]{[\alpha, [i_\mathsf{t}, \delta_\mathsf{t}]]} \&
					Q_\mathsf{t} \ar{d}{\beta} \\
				H_\wrap \ar{rr}{m_\wrap} \&
					\&
					P
			\end{tikzcd}
        \\[\dimexpr-\baselineskip+\dp\strutbox]&\qedhere
	\end{align*}
\end{proof}

\begin{lemma}\label{lem:closed}
	For any run ${\{\wrap_n = (\alpha_n, \beta_n)\}}_{n \in \N}$ and $n \in \N$, if $\alpha_{n + 1}^\triangleleft \le \alpha_n^\triangleleft$, then $\wrap_n$ is closed.
\end{lemma}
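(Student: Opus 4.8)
The plan is to prove the contrapositive: assuming $\wrap_n = (\alpha_n,\beta_n)$ is \emph{not} closed, I will show $\alpha_{n+1}^\triangleleft \not\le \alpha_n^\triangleleft$. First I unfold the definition of a run: since $\wrap_n$ is not closed, the relevant clause gives $\beta_{n+1} = \beta_n$ and that $(\alpha_{n+1},\beta_n)$ is locally closed w.r.t.\ $\alpha_n$. Writing $S_n = \dom\alpha_n$, this hands me $\alpha_n^\triangleleft \le \alpha_{n+1}^\triangleleft$ together with witnessing maps $i \colon I \to H_{(\alpha_{n+1},\beta_n)}$ and $c \colon F S_n \to H_{(\alpha_{n+1},\beta_n)}$ satisfying the two local-closedness squares, whose outer legs are precisely $\beta_n \circ i_\mathsf{t}$ and $\beta_n \circ \delta_\mathsf{t} \circ F\alpha_n$. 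Note that the source of the second square is $F S_n = F(\dom\alpha_n)$, which is exactly the source appearing in the second defining square of closedness for $\wrap_n$.

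Next I argue by contradiction: suppose $\alpha_{n+1}^\triangleleft \le \alpha_n^\triangleleft$. Combined with the inequality above, $\alpha_n^\triangleleft$ and $\alpha_{n+1}^\triangleleft$ are isomorphic subobjects of $Q_\mathsf{t}$. The key observation I will establish is that the image factorisation of $\beta_n \circ \alpha$ depends only on the subobject $\alpha^\triangleleft$: writing $\alpha = \alpha^\triangleleft \circ \alpha^\triangleright$ and factorising $\beta_n \circ \alpha^\triangleleft$, the composite $(\beta_n\circ\alpha^\triangleleft)^\triangleright \circ \alpha^\triangleright$ again lies in $\epi$ (which is closed under composition), so $(\beta_n\circ\alpha^\triangleleft)^\triangleleft \circ \big((\beta_n\circ\alpha^\triangleleft)^\triangleright \circ \alpha^\triangleright\big)$ is an $(\epi,\mono)$-factorisation of $\beta_n\circ\alpha$, whence $(\beta_n\circ\alpha)^\triangleleft \cong (\beta_n\circ\alpha^\triangleleft)^\triangleleft$ by essential uniqueness of factorisations; and isomorphic $\alpha^\triangleleft$ clearly yield isomorphic $(\beta_n\circ\alpha^\triangleleft)^\triangleleft$. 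Instantiating this at $\alpha_n$ and $\alpha_{n+1}$ — and recalling $\tau_{\wrap_n} = \beta_n\circ\alpha_n$ and $\tau_{(\alpha_{n+1},\beta_n)} = \beta_n\circ\alpha_{n+1}$ — I obtain an isomorphism $v \colon H_{(\alpha_{n+1},\beta_n)} \to H_{\wrap_n}$ with $\tau_{\wrap_n}^\triangleleft \circ v = \tau_{(\alpha_{n+1},\beta_n)}^\triangleleft$.

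To finish, I transport the witnesses along $v$: put $i_{\wrap_n} = v \circ i$ and $\close_{\wrap_n} = v \circ c$. Then $\tau_{\wrap_n}^\triangleleft \circ i_{\wrap_n} = \tau_{(\alpha_{n+1},\beta_n)}^\triangleleft \circ i = \beta_n \circ i_\mathsf{t}$ and $\tau_{\wrap_n}^\triangleleft \circ \close_{\wrap_n} = \tau_{(\alpha_{n+1},\beta_n)}^\triangleleft \circ c = \beta_n \circ \delta_\mathsf{t} \circ F\alpha_n$, which are exactly the two defining diagrams of closedness for $\wrap_n$ (their outer legs and the source object $F S_n$ match the local-closedness squares). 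Hence $\wrap_n$ is closed, contradicting our assumption, so $\alpha_{n+1}^\triangleleft \not\le \alpha_n^\triangleleft$ and the contrapositive is proved.

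I expect the only delicate point to be the middle paragraph: justifying that "closedness of $(\alpha,\beta)$'' genuinely sees only the subobject $\alpha^\triangleleft$, the map $\beta$, and $\delta_\mathsf{t}\circ F\alpha$ — not the bare domain $S$ — via the fact that $(\beta\circ\alpha)^\triangleleft$ is determined up to isomorphism by $\alpha^\triangleleft$. This rests solely on $\epi$ being closed under composition and on essential uniqueness of $(\epi,\mono)$-factorisations, both standard, but it is where one must be careful. Everything else is bookkeeping with the run definition and with the local-closedness squares.
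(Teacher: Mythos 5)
Your proof is correct, and its overall shape matches the paper's: assume $\wrap_n$ is not closed, extract from the run definition the local-closedness witnesses of $(\alpha_{n+1},\beta_n)$ w.r.t.\ $\alpha_n$ (whose outer legs and source object $FS_n$ are exactly those required for closedness of $\wrap_n$), transport them into $H_{\wrap_n}$ along a map compatible with the $\mono$-parts of the two $\tau$'s, and derive the contradiction. The one genuine difference is how that transporting map is built. You combine the assumed $\alpha_{n+1}^\triangleleft \le \alpha_n^\triangleleft$ with the inequality $\alpha_n^\triangleleft \le \alpha_{n+1}^\triangleleft$ supplied by local closedness to get isomorphic subobjects, and then invoke your observation that $(\beta\circ\alpha)^\triangleleft$ is determined, as a subobject of $P$, by $\alpha^\triangleleft$ alone; this observation is sound --- it rests on closure of $\epi$ under composition (a standard consequence of the unique fill-in property, though not listed explicitly in the paper's definition of a factorisation system) together with essential uniqueness of factorisations --- and yields an isomorphism $v \colon H_{(\alpha_{n+1},\beta_n)} \to H_{\wrap_n}$ over $P_n$ along which the witnesses transport. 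The paper is slightly leaner at this point: it uses only the assumed inequality $\alpha_{n+1}^\triangleleft \le \alpha_n^\triangleleft$ and produces, via diagonal fill-ins, a mere morphism $h \colon H_{\wrap_{n+1}} \to H_{\wrap_n}$ satisfying $\tau_{\wrap_n}^\triangleleft \circ h = \tau_{\wrap_{n+1}}^\triangleleft$, which is all that is needed to push the witnesses forward --- no isomorphism, and hence no appeal to the second inequality or to composition-closure of $\epi$. Both arguments are valid; yours buys the cleaner conceptual statement that closedness only sees the subobject $\alpha^\triangleleft$ (essentially the content the paper later isolates as Lemma~\ref{lem:annoying}), at the cost of using strictly more of the hypotheses.
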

\begin{proof}
	For each $j \in \N$, denote by $S_j$ the domain of $\alpha_j$ and by $P_j$ the codomain of $\beta_j$, and let $X_j$ be the object through which $\alpha_j$ factorises.
	We define $f_j \colon X_n \to H_n$ as the unique diagonal in the commutative square below.
	\[
		\begin{tikzcd}
			S_j \ar[twoheadrightarrow]{r}{\alpha_j^\triangleright} \ar{dd}[swap]{\tau_{\wrap_j}^\triangleright} &
				X_j \ar{d}{\alpha_j^\triangleleft} \ar[dashed]{ddl}[swap]{f_j} \\
			&
				Q_\mathsf{t} \ar{d}{\beta_j} \\
			H_{\wrap_j} \ar[rightarrowtail]{r}{\tau_{\wrap_j}^\triangleleft} &
				P_j
		\end{tikzcd}
	\]
	Note that $f_j \in \epi$ because $\tau_{\wrap_j}^\triangleright \in \epi$.

	We write $v \colon X_{n + 1} \to X_n$ for the witness of $\alpha_{n + 1}^\triangleleft \le \alpha_n^\triangleleft$.
	Assume towards a contradiction that $\wrap_n$ is not closed.
	By the definition of a run we then have that $\beta_{n + 1} = \beta_n$ and $\wrap_{n + 1}$ is locally closed w.r.t.\ $\alpha_n$.
	Define $i_{\wrap_n} = h \circ i_{\wrap_{n + 1}} \colon I \to H_{\wrap_n}$ and $\close_{\wrap_n} = h \circ \lclose_{\wrap_{n + 1}, \alpha_n}$, where $i_{\wrap_{n + 1}}$ and $\lclose_{\wrap_{n + 1}, \alpha_n}$ exist by local closedness and $h \colon H_{\wrap_{n + 1}} \to H_{\wrap_n}$ is the unique diagonal in the commutative diagram below.
	\begin{align*}
		\begin{gathered}
			\begin{tikzcd}[column sep=.5cm,ampersand replacement=\&]
				X_{n + 1} \ar{rr}{f_{n + 1}} \ar{rd}{\alpha_{n + 1}^\triangleleft} \ar{d}[swap]{v} \&
					\&
					H_{\wrap_{n + 1}} \ar{dd}{\tau_{\wrap_{n + 1}}^\triangleleft} \\
				X_n \ar{r}[swap]{\alpha_n^\triangleleft} \ar{d}[swap]{f_n} \&
					Q_\mathsf{t} \ar{rd}[pos=.2]{\beta_n = \beta_{n + 1}} \ar[phantom]{ru}{\circled{1}} \\
				H_{\wrap_n} \ar{rr}[swap]{\tau_{\wrap_n}^\triangleleft} \ar[phantom,bend right=25]{ru}[pos=.6]{\circled{1}} \&
					\&
					P_n = P_{n + 1}
			\end{tikzcd}
            \\
			\begin{array}{l}
				\circled{1}\text{ definition of $f_n$ or $f_{n + 1}$}
			\end{array}
		\end{gathered} &
			&
			\begin{tikzcd}[ampersand replacement=\&]
				X_{n + 1} \ar[twoheadrightarrow]{r}[pos=.45]{f_{n + 1}} \ar{d}[swap]{v} \&
					H_{\wrap_{n + 1}} \ar{dd}{\tau_{\wrap_{n + 1}}^\triangleleft} \ar[dashed]{ddl}[swap]{h} \\
				X_n \ar{d}[swap]{f_n} \\
				H_{\wrap_n} \ar[rightarrowtail]{r}{\tau_{\wrap_n}^\triangleleft} \&
					P_n = P_{n + 1}
			\end{tikzcd}
	\end{align*}
	Now the diagrams below commute, leading to the desired contradiction that $\wrap_n$ is closed.
	\begin{align*}
		\begin{tikzcd}[ampersand replacement=\&,row sep=.6cm,column sep=0.8cm]
			I \ar{rd}{i_\mathsf{t}} \ar{d}[swap]{i_{\wrap_{n + 1}}} \\
			H_{\wrap_{n + 1}} \ar{rd}[pos=.4]{\tau_{\wrap_{n + 1}}^\triangleleft} \ar{d}[swap]{h} \ar[phantom,bend left=10]{r}[pos=.3]{\circled{1}} \ar[phantom,bend right=25]{rd}[pos=.4]{\circled{2}} \&
				Q_\mathsf{t} \ar{d}{\beta_n} \\
			H_{\wrap_n} \ar{r}[swap]{\tau_{\wrap_n}^\triangleleft} \&
				P_n = P_{n + 1}
		\end{tikzcd}
        &&
        \begin{tikzcd}[ampersand replacement=\&,row sep=.6cm,column sep=1.0cm]
            FS_n \ar{r}{F\alpha_n} \ar{d}[pos=.4]{\lclose_{\wrap_{n + 1}, \alpha_n}} \&
                FQ_\mathsf{t} \ar{d}{\delta_\mathsf{t}} \\
            H_{\wrap_{n + 1}} \ar{rd}[pos=.2]{\tau_{\wrap_{n + 1}}^\triangleleft} \ar{d}[swap]{h} \ar[phantom,bend left=20]{r}[pos=.7]{\circled{1}} \ar[phantom,bend right=25]{rd}[pos=.4]{\circled{2}} \&
                Q_\mathsf{t} \ar{d}{\beta_n} \\
            H_{\wrap_n} \ar{r}[swap]{\tau_{\wrap_n}^\triangleleft} \&
                P_n = P_{n + 1}
        \end{tikzcd}
	\end{align*}
	\[
		\begin{array}{ll}
			\circled{1}\text{ local closedness} &
			\circled{2}\text{ definition of $h$}
		\end{array}
		\qedhere
	\]
\end{proof}

\begin{proposition}\label{prop:reachable}
	Given a recursive $\rho \colon S \to F_I S$ and a closed and consistent wrapper for $Q_\mathsf{t}$ of the form $\wrap = ({[i_\mathsf{t}, \delta_\mathsf{t}]}^\rho, \beta)$, we have that $e_\wrap = {[i_\wrap, \delta_\wrap]}^\rho$ and $\hyp_\wrap$ is reachable.
\end{proposition}
\begin{proof}
	We will first show that $e_\wrap = {[i_\wrap, \delta_\wrap]}^\rho$ by using the uniqueness property of the right hand side.
	This follows from the commutative diagram below as a result of $m_\wrap$ being a mono, together with the uniqueness property of ${[i_\wrap, \delta_\wrap]}^\rho$.
	\begin{gather*}
		\begin{tikzcd}[column sep=1.2cm,ampersand replacement=\&,row sep=6mm]
			\& F_I S \ar{rr}{F_I e_\wrap} \ar{d}{F_I{[i_\mathsf{t}, \delta_\mathsf{t}]}^\rho} \ar{rrd}[swap,pos=0.8]{[i_\wrap, \close_\wrap]} \ar[phantom,bend left=20]{rrd}[pos=0.8]{\circled{3}} \ar[bend right=20,phantom]{rrdd}{\circled{2}} \&\&
				F_I H_\wrap \ar{d}{[i_\wrap, \delta_\wrap]} \\
			\&
				F_I Q_\mathsf{t} \ar{d}{[i_\mathsf{t}, \delta_\mathsf{t}]}  \&\&
				H_\wrap \ar[rightarrowtail]{d}{m_\wrap}
                \\
			S \ar{r}{{[i_\mathsf{t}, \delta_\mathsf{t}]}^\rho} \ar[bend right=10]{rrd}[swap]{e_\wrap} \ar[bend left=20]{ruu}{\rho} \ar[phantom,bend right=10]{ruu}[pos=0.4]{\circled{1}} \&
				Q_\mathsf{t} \ar{rr}{\beta} \ar[phantom]{rd}{\circled{4}} \&\&
				P \\
			\&
				\&
				H_\wrap \ar[rightarrowtail,bend right=10]{ru}[swap]{m_\wrap}
		\end{tikzcd}
        \\
		\begin{array}{ll}
			\circled{1}\text{ definition of ${[i_\mathsf{t}, \delta_\mathsf{t}]}^\rho$} &
			\circled{2}\text{ closedness and def.\ of $i_\wrap$} \\
			\circled{3}\text{ definition of $\delta_\wrap$} &
			\circled{4}\text{ definition of $\tau$} \\
		\end{array}
	\end{gather*}
	Now
	\(
		i_\wrap^\sharp \circ {[\eta_I, \gamma_I]}^\rho = {[i_\wrap, \delta_\wrap]}^\rho = e_\wrap \in \epi
	\),
	so $i_\wrap^\sharp \in \epi$ by~\cite[Proposition~14.11 via duality]{adamek2009}.
	Thus, $H_\wrap$ is reachable.
\end{proof}

\begin{lemma}\label{lem:consistent}
	For any run ${\{\wrap_n = (\alpha_n, \beta_n)\}}_{n \in \N}$ and $n \in \N$, if $\wrap_n$ is closed and $\beta_n^\triangleright \le \beta_{n + 1}^\triangleright$, then $\wrap_n$ is consistent.
\end{lemma}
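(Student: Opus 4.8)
The plan is to mirror, by duality, the proof of Lemma~\ref{lem:closed}. For $j \in \{n, n+1\}$ write $S_j$ for the domain of $\alpha_j$ and $P_j$ for the codomain of $\beta_j$, and let $Y_j$ be the object through which $\beta_j$ factorises, so $\beta_j = \beta_j^\triangleleft \circ \beta_j^\triangleright$ with $\beta_j^\triangleright \colon Q_\mathsf{t} \twoheadrightarrow Y_j$ and $\beta_j^\triangleleft \colon Y_j \rightarrowtail P_j$. I would first introduce, for each such $j$, a comparison morphism $g_j \colon H_{\wrap_j} \to Y_j$, namely the unique diagonal in the square with top edge $\tau_{\wrap_j}^\triangleright$, left edge $\beta_j^\triangleright \circ \alpha_j$, right edge $\tau_{\wrap_j}^\triangleleft$, and bottom edge $\beta_j^\triangleleft$; this square commutes because both composites equal $\tau_{\wrap_j} = \beta_j \circ \alpha_j$. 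So $g_j \circ \tau_{\wrap_j}^\triangleright = \beta_j^\triangleright \circ \alpha_j$ and $\beta_j^\triangleleft \circ g_j = \tau_{\wrap_j}^\triangleleft$, and $g_j \in \mono$ (the dual of the fact that $f_j \in \epi$ in Lemma~\ref{lem:closed}, since $\beta_j^\triangleleft \circ g_j = \tau_{\wrap_j}^\triangleleft \in \mono$).

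Next I would argue by contradiction. Assume $\wrap_n$ is not consistent; since $\wrap_n$ is closed, the definition of a run forces $\alpha_{n+1} = \alpha_n$ (in particular $S_n = S_{n+1}$) and that $(\alpha_n, \beta_{n+1})$ is locally consistent w.r.t.\ $\beta_n$. This provides $o_{\wrap_{n+1}} \colon H_{\wrap_{n+1}} \to O$ with $o_{\wrap_{n+1}} \circ \tau_{\wrap_{n+1}}^\triangleright = o_\mathsf{t} \circ \alpha_n$ and $\lcons_{\wrap_{n+1},\beta_n} \colon FH_{\wrap_{n+1}} \to P_n$ with $\lcons_{\wrap_{n+1},\beta_n} \circ F\tau_{\wrap_{n+1}}^\triangleright = \beta_n \circ \delta_\mathsf{t} \circ F\alpha_n$. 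Let $w \colon Y_n \to Y_{n+1}$ be the witness of the hypothesis $\beta_n^\triangleright \le \beta_{n+1}^\triangleright$, i.e.\ $\beta_{n+1}^\triangleright = w \circ \beta_n^\triangleright$.

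The crux is to produce $h \colon H_{\wrap_n} \to H_{\wrap_{n+1}}$ with $h \circ \tau_{\wrap_n}^\triangleright = \tau_{\wrap_{n+1}}^\triangleright$, dual to the map $h \colon H_{\wrap_{n+1}} \to H_{\wrap_n}$ with $\tau_{\wrap_n}^\triangleleft \circ h = \tau_{\wrap_{n+1}}^\triangleleft$ built in Lemma~\ref{lem:closed}. I would obtain it as the unique diagonal in the square with top edge $\tau_{\wrap_n}^\triangleright \colon S_n \twoheadrightarrow H_{\wrap_n}$, left edge $\tau_{\wrap_{n+1}}^\triangleright$, bottom edge $g_{n+1} \colon H_{\wrap_{n+1}} \rightarrowtail Y_{n+1}$, and right edge $w \circ g_n \colon H_{\wrap_n} \to Y_{n+1}$: the square commutes since $g_{n+1} \circ \tau_{\wrap_{n+1}}^\triangleright = \beta_{n+1}^\triangleright \circ \alpha_{n+1} = \beta_{n+1}^\triangleright \circ \alpha_n$, while $(w \circ g_n) \circ \tau_{\wrap_n}^\triangleright = w \circ \beta_n^\triangleright \circ \alpha_n = \beta_{n+1}^\triangleright \circ \alpha_n$, and with $\tau_{\wrap_n}^\triangleright \in \epi$ and $g_{n+1} \in \mono$ the fill-in gives $h$ with $h \circ \tau_{\wrap_n}^\triangleright = \tau_{\wrap_{n+1}}^\triangleright$ (the other fill-in identity $g_{n+1} \circ h = w \circ g_n$ is not needed). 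Then, putting $o_{\wrap_n} = o_{\wrap_{n+1}} \circ h$ and $\cons_{\wrap_n} = \lcons_{\wrap_{n+1},\beta_n} \circ Fh$, precomposing with the epi $\tau_{\wrap_n}^\triangleright$ (resp.\ $F\tau_{\wrap_n}^\triangleright$) and using $h \circ \tau_{\wrap_n}^\triangleright = \tau_{\wrap_{n+1}}^\triangleright$ (resp.\ its image under $F$) gives $o_{\wrap_n} \circ \tau_{\wrap_n}^\triangleright = o_\mathsf{t} \circ \alpha_n$ and $\cons_{\wrap_n} \circ F\tau_{\wrap_n}^\triangleright = \beta_n \circ \delta_\mathsf{t} \circ F\alpha_n$, i.e.\ exactly the commutativities witnessing that $\wrap_n$ is consistent --- the desired contradiction. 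The one point needing care is the construction of $h$: unlike the not-closed case, where $P_n = P_{n+1}$, here the common corner of the defining square must be $Y_{n+1}$ (reached via the monos $g_j$) rather than a codomain of $\beta$; once this square is set up, the rest is a routine diagram chase.
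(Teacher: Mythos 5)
Your proof is correct and follows essentially the same route as the paper's: the same comparison monos $g_j$ (the paper's $f_j$) into the factorisation object of $\beta_j$, the same diagonal fill-in producing $h \colon H_{\wrap_n} \to H_{\wrap_{n+1}}$ with $h \circ \tau_{\wrap_n}^\triangleright = \tau_{\wrap_{n+1}}^\triangleright$, and the same transport of the local-consistency witnesses along $h$ and $Fh$ to reach the contradiction. The only differences are notational, and your explicit use of $Fh$ in defining $\cons_{\wrap_n}$ is in fact the correct typing of what the paper writes.
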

\begin{proof}
	For each $j \in \N$, denote by $S_j$ the domain of $\alpha_j$ and by $P_j$ the codomain of $\beta_j$, and let $X_j$ be the object through which $\beta_j$ factorises.
	We define $f_j \colon H_{\wrap_j} \to X_j$ as the unique diagonal in the commutative square below.
	\[
		\begin{tikzcd}
			S_j \ar[twoheadrightarrow]{r}{\tau_{\wrap_j}^\triangleright} \ar{d}[swap]{\alpha_j} &
				H_{\wrap_j} \ar[dashed]{ddl}[swap]{f_j} \ar{dd}{\tau_{\wrap_j}^\triangleleft} \\
			Q_\mathsf{t} \ar{d}[swap]{\beta_j^\triangleright} \\
			X_j \ar[rightarrowtail]{r}{\beta_j^\triangleleft} &
				P_j
		\end{tikzcd}
	\]
	Note that $f_j \in \mono$ because $\tau_{\wrap_j}^\triangleleft \in \mono$.

	We write $v \colon X_n \to X_{n + 1}$ for the witness of $\beta_n^\triangleright \le \beta_{n + 1}^\triangleright$.
	Assume towards a contradiction that $\wrap_n$ is not consistent.
	By the definition of a run of the algorithm we then have that $\alpha_{n + 1} = \alpha_n$ and $\wrap_{n + 1}$ is locally consistent w.r.t.\ $\beta_n$.
	Define $o_{\wrap_n} = o_{\wrap_{n + 1}} \circ h \colon H_{\wrap_n} \to O$ and $\cons_{\wrap_n} = \lcons_{\wrap_{n + 1}, \beta_n} \circ h$, where $h \colon H_{\wrap_n} \to H_{\wrap_{n + 1}}$ is the unique diagonal in the commutative diagram below.
	\begin{align*}
		\begin{gathered}
			\begin{tikzcd}[column sep=.5cm,ampersand replacement=\&]
				S_n = S_{n + 1} \ar{rr}{\tau_{\wrap_n}^\triangleright} \ar{rd}[swap,pos=.8]{\alpha_n = \alpha_{n + 1}} \ar{dd}[swap]{\tau_{\wrap_{n + 1}}^\triangleright} \&
					\&
					H_{\wrap_n} \ar{d}{f_n} \\
				\&
					Q_\mathsf{t} \ar{r}{\beta_n^\triangleright} \ar{rd}[swap]{\beta_{n + 1}^\triangleright} \ar[phantom,bend left=25]{ru}[pos=.6]{\circled{1}} \&
					X_n \ar{d}{v} \\
				H_{\wrap_{n + 1}} \ar{rr}[pos=.3]{f_{n + 1}} \ar[phantom]{ru}{\circled{1}} \&
					\&
					X_{n + 1}
			\end{tikzcd}
            \\
			\begin{array}{l}
				\circled{1}\text{ definition of $f_n$ or $f_{n + 1}$}
			\end{array}
		\end{gathered}
        &&
        \begin{tikzcd}[column sep=.6cm,ampersand replacement=\&]
            S_n = S_{n + 1} \ar[twoheadrightarrow]{r}[pos=.3]{\tau_{\wrap_n}^\triangleright} \ar{dd}[swap]{\tau_{\wrap_{n + 1}}^\triangleright} \&
                H_{\wrap_n} \ar{d}{f_n} \ar[dashed]{ddl}[swap]{h} \\
            \&
                X_n \ar{d}{v} \\
            H_{\wrap_{n + 1}} \ar[rightarrowtail]{r}{f_{n + 1}} \&
                X_{n + 1}
        \end{tikzcd}
	\end{align*}
	Now the diagrams below commute, leading to the desired contradiction that $\wrap_n$ is consistent.
	\begin{align*}
		\begin{tikzcd}[ampersand replacement=\&,row sep=.6cm,column sep=.7cm]
			S_n = S_{n + 1} \ar{r}{\tau_{\wrap_n}^\triangleright} \ar{rd}[swap,pos=.6]{\tau_{\wrap_{n + 1}}^\triangleright} \ar{d}[swap]{\alpha_n} \ar[phantom,bend left=25]{rd}[pos=.6]{\circled{2}} \&
				H_{\wrap_n} \ar{d}{h} \\
			Q_\mathsf{t} \ar{rd}[swap]{o_\mathsf{t}} \ar[phantom,bend right=10]{r}{\circled{1}} \&
				H_{\wrap_{n + 1}} \ar{d}{o_{\wrap_{n + 1}}} \\
			\&
				O
		\end{tikzcd}
        &&
        \begin{tikzcd}[ampersand replacement=\&,row sep=.6cm,column sep=.9cm]
            FS_n = FS_{n + 1} \ar{r}{F\tau_{\wrap_n}^\triangleright} \ar{rd}[swap,pos=.8]{F\tau_{\wrap_{n + 1}}^\triangleright} \ar{d}[swap]{F\alpha_n} \ar[phantom,bend left=25]{rd}[pos=.6]{\circled{2}} \&
                FH_{\wrap_n} \ar{d}{Fv} \\
            FQ_\mathsf{t} \ar{d}[swap]{\delta_\mathsf{t}} \ar[phantom,bend right=20]{r}[pos=.3]{\circled{1}} \&
                FH_{\wrap_{n + 1}} \ar{d}[swap,pos=.6]{\lcons_{\wrap_{n + 1}, \beta_n}} \\
            Q_\mathsf{t} \ar{r}[pos=.2]{\beta_n} \&
                P_n
        \end{tikzcd}
	\end{align*}
	\[
		\begin{array}{ll}
			\circled{1}\text{ local consistency} &
			\circled{2}\text{ definition of $h$}
		\end{array}
		\qedhere
	\]
\end{proof}

\begin{lemma}\label{lem:annoying}
	Let $\alpha \colon S \to Q_\mathsf{t}$, $\alpha' \colon S' \to Q_\mathsf{t}$, and $\beta \colon Q_\mathsf{t} \to P$ be such that $\alpha^\triangleleft$ and $\alpha'^\triangleleft$ are isomorphic subobjects.
    If $(\alpha, \beta)$ is closed and consistent, then so is $(\alpha', \beta)$.
\end{lemma}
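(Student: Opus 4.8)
The plan is to show that whether a wrapper $(\alpha,\beta)$ for $Q_\mathsf{t}$ is closed (resp.\ consistent) depends only on $\beta$ and on the $\mono$-part $\alpha^\triangleleft$ of $\alpha$ — that is, on the subobject of $Q_\mathsf{t}$ that $\alpha$ picks out — and not on the epi-part or on the domain of $\alpha$. Granting this, the statement is immediate: by hypothesis $\alpha^\triangleleft$ and $\alpha'^\triangleleft$ are isomorphic subobjects, so $(\alpha,\beta)$ and $(\alpha',\beta)$ are closed-and-consistent simultaneously.

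More precisely, the key auxiliary fact I would isolate is: \emph{for every $e\in\epi$ with $\cod(e)=\dom(\alpha)$, the wrapper $(\alpha\circ e,\beta)$ is closed (resp.\ consistent) if and only if $(\alpha,\beta)$ is.} Given this, I would conclude via the chain
\[
  (\alpha,\beta)\ \text{closed and consistent}
  \;\Longleftrightarrow\; (\alpha^\triangleleft,\beta)\ \ldots
  \;\Longleftrightarrow\; (\alpha'^\triangleleft,\beta)\ \ldots
  \;\Longleftrightarrow\; (\alpha',\beta)\ \ldots,
\]
where the outer equivalences instantiate the auxiliary fact at $e=\alpha^\triangleright$ and $e=\alpha'^\triangleright$ (using $\alpha=\alpha^\triangleleft\circ\alpha^\triangleright$ and $\alpha'=\alpha'^\triangleleft\circ\alpha'^\triangleright$), and the middle one instantiates it at $e=u^{-1}$, where $u$ is an isomorphism with $\alpha'^\triangleleft\circ u=\alpha^\triangleleft$ witnessing the isomorphism of subobjects (so $\alpha^\triangleleft\circ u^{-1}=\alpha'^\triangleleft$); here I use the standard fact that every isomorphism lies in $\epi$~\cite{adamek2009}.

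To prove the auxiliary fact I would first set up the factorisations compatibly. Since $\epi$ is closed under composition, $\beta\circ\alpha\circ e=\tau_{(\alpha,\beta)}^\triangleleft\circ(\tau_{(\alpha,\beta)}^\triangleright\circ e)$ is an $(\epi,\mono)$-factorisation of $\tau_{(\alpha\circ e,\beta)}$, so I may take $H_{(\alpha\circ e,\beta)}=H_{(\alpha,\beta)}$ with $\tau_{(\alpha\circ e,\beta)}^\triangleright=\tau_{(\alpha,\beta)}^\triangleright\circ e$ and $\tau_{(\alpha\circ e,\beta)}^\triangleleft=\tau_{(\alpha,\beta)}^\triangleleft$; this is legitimate because closedness and consistency are purely existential statements about $\tau^\triangleright,\tau^\triangleleft$, hence unaffected by replacing the factorisation object by an isomorphic copy. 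With this choice the ``initial state'' square and the ``output'' triangle coincide for the two wrappers, so the witnesses $i_\wrap$ and $o_\wrap$ carry over verbatim from $(\alpha,\beta)$ to $(\alpha\circ e,\beta)$, and back again by right-cancelling the epi $e$ in the output triangle. For the ``dynamics'' squares, which differ only by precomposition with $Fe$ — again in $\epi$, since $F$ preserves $\epi$ — I would take $\close_{(\alpha\circ e,\beta)}=\close_{(\alpha,\beta)}\circ Fe$ in one direction, and in the other obtain $\close_{(\alpha,\beta)}$ as the unique diagonal fill-in of the square with sides $\close_{(\alpha\circ e,\beta)}$, $Fe\in\epi$, $\tau_{(\alpha,\beta)}^\triangleleft\in\mono$, and $\beta\circ\delta_\mathsf{t}\circ F\alpha$ (which commutes precisely because both legs agree after precomposition with the epi $Fe$); the transfer of $\cons_\wrap$ is symmetric, this time right-cancelling $F\tau_{(\alpha,\beta)}^\triangleright$, resp.\ $Fe$.

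The step I expect to require the most care is this compatible choice of factorisations together with the observation that closedness, consistency, and the induced hypothesis are genuinely invariant under it and under the isomorphism $u$; once that is pinned down, everything reduces to routine diagram chasing relying only on $\epi$ being closed under composition (and containing the isomorphisms) and on $F$ preserving $\epi$.
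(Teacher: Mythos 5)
Your proposal is correct. The auxiliary fact is sound: with $e \in \epi$ and the compatible choice of factorisation $\tau_{(\alpha\circ e,\beta)}^\triangleright = \tau_{(\alpha,\beta)}^\triangleright\circ e$, $\tau_{(\alpha\circ e,\beta)}^\triangleleft = \tau_{(\alpha,\beta)}^\triangleleft$, the initial-state square is literally identical for the two wrappers, the output and $\cons$ witnesses transfer by precomposition in one direction and by cancelling the epis $e$ and $Fe$ in the other, and the $\close$ witness transfers by precomposition with $Fe$ in one direction and by the unique diagonal fill-in against $Fe \in \epi$ and $\tau^\triangleleft \in \mono$ in the other (here you need that $F$ preserves $\epi$, which the paper assumes, and that $\epi$ is closed under composition, which holds in any orthogonal factorisation system). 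Chaining the fact at $e = \alpha^\triangleright$, $e = u$, and $e = \alpha'^\triangleright$ then gives the lemma; if you prefer not to invoke that isomorphisms lie in $\epi$, you can fuse the first two steps into a single application at $e = u\circ\alpha^\triangleright$, which is in $\epi$ by the paper's explicit assumption that $\epi$ is closed under composition with isos.

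Compared with the paper's proof, your route is organised differently, though the key cancellation idea is shared. The paper explicitly constructs the epis $f\colon X \to H_\wrap$ and $g\colon X' \to H_{\wrap'}$, builds a concrete isomorphism $\psi\colon H_\wrap \to H_{\wrap'}$ over the subobject isomorphism $\phi$, transports $i_\wrap$, $o_\wrap$, and $\delta_\wrap$ along $\psi$, proves the two equations $o_\wrap\circ f = o_\mathsf{t}\circ\alpha^\triangleleft$ and $\tau_\wrap^\triangleleft\circ\delta_\wrap\circ Ff = \beta\circ\delta_\mathsf{t}\circ F\alpha^\triangleleft$ by cancelling $\alpha^\triangleright$ and $F\alpha^\triangleright$ (this is exactly your ``the data depends only on $\alpha^\triangleleft$'' step), and finally appeals to an external result, Theorem~9 of~\cite{HeerdtS017}, to convert the existence of a transition structure $d$ on $H_{\wrap'}$ back into closedness and consistency. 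Your version buys modularity: the invariance-under-epi-precomposition lemma is reusable, you never need to exhibit $\psi$ because the compatible factorisations make the hypothesis objects coincide on the nose after each reduction, and you produce all four witnesses ($i$, $o$, $\close$, $\cons$) directly, so no appeal to the external theorem is needed. The paper's version, in exchange, yields the explicit hypothesis isomorphism $\psi$ as a by-product. One cosmetic point: the square whose diagonal fill-in recovers $\close_{(\alpha,\beta)}$ commutes simply by the defining property of $\close_{(\alpha\circ e,\beta)}$; no further cancellation is needed there.
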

\begin{proof}
	Write $\wrap = (\alpha, \beta)$ and $\wrap' = (\alpha', \beta)$.
	Let $X$ and $X'$ be the respective objects through which $\alpha$ and $\alpha'$ factorise, and denote by $\phi \colon X \to X'$ the subobject isomorphism ($\alpha'^\triangleleft \circ \phi = \alpha^\triangleleft$).
	We define $f \colon X \to H_\wrap$ and $g \colon X' \to H_{\wrap'}$ as the unique diagonals in the diagrams below.
	\begin{align*}
		\begin{tikzcd}[ampersand replacement=\&]
			S \ar[twoheadrightarrow]{r}{\alpha^\triangleright} \ar{dd}[swap]{e_\wrap} \&
				X \ar{d}{\alpha^\triangleleft} \ar[dashed]{ddl}[swap]{f} \\
			\&
				Q_\mathsf{t} \ar{d}{\beta} \\
			H_\wrap \ar[rightarrowtail]{r}{m_\wrap} \&
				P
		\end{tikzcd}
        &&
        \begin{tikzcd}[ampersand replacement=\&]
            S' \ar[twoheadrightarrow]{r}{\alpha'^\triangleright} \ar{dd}[swap]{\tau_{\wrap'}^\triangleright} \&
                X' \ar{d}{\alpha'^\triangleleft} \ar[dashed]{ddl}[swap]{g} \\
            \&
                Q_\mathsf{t} \ar{d}{\beta} \\
            H_{\wrap'} \ar[rightarrowtail]{r}{\tau_{\wrap'}^\triangleleft} \&
                P
        \end{tikzcd}
	\end{align*}
	Note that $\alpha^\triangleright$ and $e_\wrap$ are in $\epi$, and therefore so is $f$; similarly, since $\alpha'^\triangleright$ and $\tau_{\wrap'}^\triangleright$ are in $\epi$, so is $g$~\cite[Proposition~14.9 via duality]{adamek2009}.
	We now define $\psi \colon H_\wrap \to H_{\wrap'}$ and $\psi^{-1} \colon H_\wrap \to H_{\wrap'}$ as the unique diagonals in the diagrams below.
	\begin{align*}
		\begin{tikzcd}[ampersand replacement=\&]
			X \ar[twoheadrightarrow]{r}{f} \ar{d}[swap]{\phi} \&
				H_\wrap \ar{dd}{m_\wrap} \ar[dashed]{ddl}[swap]{\psi} \\
			X' \ar{d}[swap]{g} \\
			H_{\wrap'} \ar[rightarrowtail]{r}{\tau_{\wrap'}^\triangleleft} \&
				P
		\end{tikzcd}
        &&
        \begin{tikzcd}[ampersand replacement=\&]
            X' \ar[twoheadrightarrow]{r}{g} \ar{d}[swap]{\phi^{-1}} \&
                H_{\wrap'} \ar{dd}{\tau_{\wrap'}^\triangleleft} \ar[dashed]{ddl}[swap]{\psi^{-1}} \\
            X \ar{d}[swap]{f} \\
            H_\wrap \ar[rightarrowtail]{r}{m_\wrap} \&
                P
        \end{tikzcd}
	\end{align*}
	It is a standard result that $\psi$ and $\psi^{-1}$ are inverse to each other~\cite[Proposition~14.7]{adamek2009}, as suggested by their names.
	We define
	\begin{mathpar}
		i_{\wrap'} = I \xrightarrow{i_\wrap} H_\wrap \xrightarrow{\psi} H_{\wrap'} \and o_{\wrap'} = H_{\wrap'} \xrightarrow{\psi^{-1}} H_\wrap \xrightarrow{o_\wrap} O \and d = H_{\wrap'} \xrightarrow{F\psi^{-1}} FH_\wrap \xrightarrow{\delta_\wrap} H_\wrap \xrightarrow{\psi} H_{\wrap'}.
	\end{mathpar}
	To show closedness and consistency, we will need the following two equations.
	\begin{align}\label{eq:annoying}
		o_\wrap \circ f = o_\mathsf{t} \circ \alpha^\triangleleft &
			&
			m_\wrap \circ \delta_\wrap \circ Ff = \beta \circ \delta_\mathsf{t} \circ F\alpha^\triangleleft.
	\end{align}
	Note that both $\alpha^\triangleright$ and $F\alpha^\triangleright$ are in $\epi$ because $F$ preserves $\epi$, and that they are therefore both epis.
	We use this to prove~\eqref{eq:annoying} with the commutative diagrams below.
	\begin{align*}
		\begin{gathered}
			\begin{tikzcd}[ampersand replacement=\&]
				S \ar[bend right=10]{rrd}[swap,pos=.7]{e_\wrap} \ar[twoheadrightarrow]{r}{\alpha^\triangleright} \ar[twoheadrightarrow]{d}[swap]{\alpha^\triangleright} \ar[phantom,bend left=5]{rrd}{\circled{1}} \ar[phantom]{rrdd}[pos=.7]{\circled{2}} \&
					X \ar{rd}{f} \\
				X \ar{rd}[swap]{\alpha^\triangleleft} \&
					\&
					H_\wrap \ar{d}{o_\wrap} \\
				\&
					Q_\mathsf{t} \ar{r}{o_\mathsf{t}} \&
					O
			\end{tikzcd}
            \\
			\begin{array}{l}
				\circled{1}\text{ definition of $f$} \\
				\circled{2}\text{ consistency} \\
				\circled{3}\text{ closedness} \\
				\circled{4}\text{ definition of $\delta_\wrap$}
			\end{array}
		\end{gathered} &
			&
			\begin{tikzcd}[ampersand replacement=\&]
				FS \ar[twoheadrightarrow]{r}{F\alpha^\triangleright} \ar[twoheadrightarrow]{d}[swap]{F\alpha^\triangleright} \ar[bend left=8]{rd}[swap,pos=.7]{Fe_\wrap\hspace{-.15cm}\mbox{}} \ar[bend right=10]{rdd}[swap,pos=.8]{\close_\wrap\hspace{-.1cm}\mbox{}} \ar[phantom,bend right=15]{rddd}[pos=.7]{\circled{3}} \ar[phantom,bend left=12]{rdd}[pos=.8]{\circled{4}} \ar[phantom,bend left=30]{rd}[pos=.7]{\circled{1}} \&
					FX \ar{d}{Ff} \\
				FX \ar{d}[swap]{F\alpha^\triangleleft} \&
					FH_\wrap \ar{d}{\delta_\wrap} \\
				FQ_\mathsf{t} \ar{d}[swap]{\delta_\mathsf{t}} \&
					H_\wrap \ar{d}{m_\wrap} \\
				Q_\mathsf{t} \ar{r}{\beta} \&
					P
			\end{tikzcd}
	\end{align*}
	Now the diagrams below commute.
	\begin{align*}
		\begin{tikzcd}[ampersand replacement=\&,column sep=.6cm]
			I \ar{r}{i_\mathsf{t}} \ar{d}{i_\wrap} \ar[bend right=40]{dd}[swap,pos=.2]{i_{\wrap'}} \ar[phantom,bend right=18]{dd}[pos=.75]{\circled{1}} \&
				Q_\mathsf{t} \ar{dd}{\beta} \\
			H_\wrap \ar{rd}{m_\wrap} \ar{d}{\psi} \ar[phantom,bend right=10]{ur}{\circled{2}} \ar[phantom,bend left=50]{d}[pos=.8]{\circled{3}} \\
			H_{\wrap'} \ar{r}[swap]{\tau_{\wrap'^\triangleleft}} \&
				P
		\end{tikzcd}
        &&
        \begin{tikzcd}[ampersand replacement=\&,column sep=.8cm]
            S' \ar[bend left=20]{rd}{\tau_{\wrap'}^\triangleright} \ar{d}[swap]{\alpha'^\triangleright} \ar[phantom]{rd}{\circled{4}} \\
            X' \ar{r}{g} \ar{d}{\phi^{-1}} \ar[bend right=40]{dd}[swap]{\alpha'^\triangleleft} \ar[phantom,bend left=10]{rd}[pos=.4]{\circled{6}} \ar[phantom,bend right=18]{dd}[pos=.75]{\circled{5}} \&
                H_{\wrap'} \ar{d}[swap]{\psi^{-1}} \ar[bend left=40]{dd}[pos=.2]{o_{\wrap'}} \ar[phantom,bend left=18]{dd}[pos=.75]{\circled{7}} \\
            X \ar{r}{f} \ar{d}{\alpha^\triangleleft} \ar[phantom,bend left=10]{rd}[pos=.4]{\eqref{eq:annoying}} \&
                H_\wrap \ar{d}[swap]{o_\wrap} \\
            Q_\mathsf{t} \ar{r}{o_\mathsf{t}} \&
                O
        \end{tikzcd}
	\end{align*}
	\begin{gather*}
		\begin{array}{ll}
			\circled{1}\text{ definition of $i_{\wrap'}$} &
				\circled{5}\text{ subobject morphism } \\
			\circled{2}\text{ closedness} &
				\circled{6}\text{ definition of $\psi^{-1}$} \\
			\circled{3}\text{ definition of $\psi$} &
				\circled{7}\text{ definition of $o_{\wrap'}$} \\
			\circled{4}\text{ definition of $g$} &
				\circled{8}\text{ definition of $d$}
		\end{array} \\
		\begin{tikzcd}[ampersand replacement=\&,column sep=.5cm]
			FS' \ar{r}{F\alpha'^\triangleright} \ar{rd}[swap]{F\tau_{\wrap'}^\triangleright} \ar[phantom,bend right=40]{r}[pos=.9]{\circled{4}} \&
				FX' \ar{rd}{F\phi^{-1}} \ar[bend left=20]{rrd}{F\alpha'^\triangleleft} \ar{d}{Fg} \ar[phantom]{rrd}{\circled{5}} \ar[phantom,bend right=35]{rd}[pos=.65]{\circled{6}} \\
			\&
				FH_{\wrap'} \ar{rd}{\psi^{-1}} \ar[bend right=30]{rrdd}[swap]{d} \&
				FX \ar{r}{F\alpha^\triangleleft} \ar{d}{Ff} \&
				FQ_\mathsf{t} \ar{r}{\delta_\mathsf{t}} \ar[phantom]{d}{\eqref{eq:annoying}} \&
				Q_\mathsf{t} \ar{dd}{\beta} \\
			\&
				\&
				FH_\wrap \ar{r}{\delta_\wrap} \ar[phantom]{rd}{\circled{8}} \&
				H_\wrap \ar{rd}{m_\wrap} \ar{d}[swap]{\psi} \ar[phantom,bend left=25]{d}{\circled{3}} \\
			\&
				\&
				\&
				H_{\wrap'} \ar{r}[pos=.4]{\tau_{\wrap'}^\triangleleft} \&
				P
		\end{tikzcd}
	\end{gather*}
	Using~\cite[Theorem~9]{HeerdtS017}, the existence of a $d$ making the last diagram above commute shows together with the other two commutative diagrams that $\wrap'$ is closed and consistent.
\end{proof}

\alphabeta*
\begin{proof}
	We consider each of the cases listed in the definition of a run of the algorithm.
	If $\wrap_n$ is not closed, then $\beta_{n + 1} = \beta_n$ and $\alpha_n^\triangleleft \le \alpha_{n + 1}^\triangleleft$ by the definition of local closedness.
	Supposing $\alpha_{n + 1}^\triangleleft \le \alpha_n^\triangleleft$ leads by \autoref{lem:closed} to the contradiction that $\wrap_n$ is closed.

	If $\wrap_n$ is closed but not consistent, then $\alpha_{n + 1} = \alpha_n$ and we have $\beta_{n + 1}^\triangleright \le \beta_n^\triangleright$ by the definition of local consistency.
	Supposing $\beta_n^\triangleright \le \beta_{n + 1}^\triangleright$ leads by \autoref{lem:consistent} to the contradiction that $\wrap_n$ is consistent.

	If $\wrap_n$ is closed and consistent and we obtain a counterexample $\rho \colon S \to F_I S$ for $\wrap_n$, then $\beta_{n + 1} = \beta_n$ and $\alpha_{n + 1}^\triangleleft = {[\alpha_n, {[i_\mathsf{t}, \delta_\mathsf{t}]}^\rho]}^\triangleleft$.
	We have $\alpha_n^\triangleleft \le {[\alpha_n, {[i_\mathsf{t}, \delta_\mathsf{t}]}^\rho]}^\triangleleft$ using \autoref{lem:selectormorphism}.
	Suppose $\alpha_{n + 1}^\triangleleft \le \alpha_n^\triangleleft$.
	Then ${[\alpha_n, {[i_\mathsf{t}, \delta_\mathsf{t}]}^\rho]}^\triangleleft = \alpha_{n + 1}^\triangleleft \le \alpha_n^\triangleleft$, so $\alpha_n^\triangleleft$ and ${[\alpha_n, {[i_\mathsf{t}, \delta_\mathsf{t}]}^\rho]}^\triangleleft$ are isomorphic subobjects.
	By \autoref{lem:annoying} this implies that $([\alpha_n, {[i_\mathsf{t}, \delta_\mathsf{t}]}^\rho], \beta_n)$ is also closed and consistent, which by \autoref{thm:rescount} contradicts the fact that $\rho$ is a counterexample for $\wrap_n$.

	If $\wrap_n$ is closed and consistent and correct up to all recursive $F_I$-coalgebras, then we immediately have $\alpha_{n + 1} = \alpha_n$ and $\beta_{n + 1} = \beta_n$.
\end{proof}

\termination*

\begin{proof}
	We will show that ${\{\wrap_n\}}_{n \in \N}$ converges, for which it suffices to show that both ${\{\alpha_n\}}_{n \in \N}$ and ${\{\beta_n\}}_{n \in \N}$ converge.
	Suppose ${\{\alpha_n\}}_{n \in \N}$ does not converge.
	By \autoref{lem:alphabeta} there exist $i_n \in \N$ for all $n \in \N$ such that $i_{n + 1} \gt i_n$, $\alpha_{i_n}^\triangleleft \le \alpha_{i_n + 1}^\triangleleft$, and $\alpha_{i_n + 1}^\triangleleft \not\le \alpha_{i_n}^\triangleleft$ for all $n \in \N$.
	Note that isomorphic subobjects are ordered in both directions.
	Using transitivity of the order on subobjects we know that for all $m, n \in \N$ with $m \ne n$ we have that $\alpha_{i_m}^\triangleleft$ and $\alpha_{i_n}^\triangleleft$ are not isomorphic subobjects.
	This contradicts the fact that $Q_\mathsf{t}$ has finitely many subobject isomorphism classes.
	Thus, ${\{\alpha_n\}}_{n \in \N}$ must converge.

	Now suppose ${\{\beta_n\}}_{n \in \N}$ does not converge.
	By \autoref{lem:alphabeta} there exist $i_n \in \N$ for all $n \in \N$ such that $i_{n + 1} \gt i_n$, $\beta_{i_n + 1}^\triangleright \le \beta_{i_n}^\triangleright$, and $\beta_{i_n}^\triangleright \not\le \beta_{i_n + 1}^\triangleright$ for all $n \in \N$.
	Note that isomorphic quotients are ordered in both directions.
	Using transitivity of the order on quotients we know that for all $m, n \in \N$ with $m \ne n$ the quotients $\beta_{i_m}^\triangleleft$ and $\beta_{i_n}^\triangleleft$ are not isomorphic.
	This contradicts the fact that $Q_\mathsf{t}$ has finitely many quotient isomorphism classes.
	Thus, ${\{\beta_n\}}_{n \in \N}$ must converge.
	We conclude that ${\{\wrap_n\}}_{n \in \N}$ converges, and by  \autoref{prop:converge} the algorithm terminates.
	By definition, it does so with a correct hypothesis.

	Now assume that $\aut_\mathsf{t}$ is minimal and that for all $k \in \N$ there exists $\rho_k \colon S_k \to F_I S_k$ such that $\alpha_k = {[i_\mathsf{t}, \delta_\mathsf{t}]}^{\rho_k}$.
	Let $n \in \N$ be such that $\wrap_{n + 1} = \wrap_n$, which by the above we know exists, and define $\wrap = \wrap_n$.
	We know from \autoref{prop:reachable} that $\hyp_\wrap$ is reachable.
	Together with correctness of $\hyp_\wrap$ and minimality of $\aut_\mathsf{t}$ there exists a unique automaton homomorphism $h \colon \hyp_\wrap \to \aut_\mathsf{t}$.
	We show that $\beta_n \circ h = m_\wrap$ with the commutative diagram below, where we precompose with the epi $e_\wrap$ and use that automaton homomorphisms commute with restricted reachability maps.
	\[
		\begin{tikzcd}[row sep=6mm]
			S_n \ar[twoheadrightarrow]{rr}{e_\wrap} \ar[twoheadrightarrow]{d}[swap]{e_\wrap = [i_\wrap, \delta_\wrap]^{\rho_n}} \ar{rd}{\alpha_n = [i_\mathsf{t}, \delta_\mathsf{t}]^{\rho_n}} &
				&
				H_\wrap \ar{d}{m_\wrap} \\
			H_\wrap \ar{r}{h} &
				Q_\mathsf{t} \ar{r}{\beta_n} &
				P_n
		\end{tikzcd}
	\]
	It follows that $h \in \mono$~\cite[Proposition~14.11]{adamek2009}.
	Being an automaton homomorphism, $h$ commutes with the reachability maps: $h \circ i_\wrap^\sharp = i_\mathsf{t}^\sharp$.
	Because $i_\wrap^\sharp \in \epi$ and $i_\mathsf{t}^\sharp \in \epi$, we have $h \in \epi$~\cite[Proposition~14.9 via duality]{adamek2009}.
	Since $\epi \cap \mono$ contains only isomorphisms~\cite[Proposition~14.6]{adamek2009}, it follows that $h$ is an isomorphism of automata and therefore that the hypothesis is minimal.
\end{proof}

\section{Proofs for Section~\ref{sec:examples}}
\label{sec:ap2}

In some of the proofs below we will use the basic property that
for all functions $f \colon X \to Y$ and elements $x \in X$ we have:
\begin{equation}\label{eq:one}
	\mathfrak{e}_{f(x)} = f \circ \mathfrak{e}_x.
\end{equation}

\begin{lemma}\label{lem:doublecontext}
	For all $T$-algebras $(X, x)$, $p \colon I \to X$, and $c \in T(I + X)$, the diagram below commutes.
	\[
		\begin{tikzcd}[ampersand replacement=\&]
			T(I + T(I + X)) \ar{r}{\hat{\mu}_X} \&
				T(I + X) \ar{d}{[p, \id_X]^\sharp} \\
			T(I + 1) \ar{u}{T(\id_I + \mathfrak{e}_c)} \ar{r}{[p, \mathfrak{e}_{[p, \id_X]^\sharp(c)}]^\sharp} \&
				X
		\end{tikzcd}
	\]
\end{lemma}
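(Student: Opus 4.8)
The plan is to prove the identity
\(
[p,\id_X]^\sharp\circ\hat{\mu}_X\circ T(\id_I+1_c)=[p,1_{[p,\id_X]^\sharp(c)}]^\sharp
\)
by exploiting the universal property of $T(I+1)$ as the free $F$-algebra on $I+1$: for any $F$-algebra $(Y,y)$ and morphism $f\colon Z\to Y$, the extension $f^\sharp$ is the \emph{unique} $F$-algebra homomorphism $TZ\to Y$ with $f^\sharp\circ\eta_Z=f$. Two standard consequences I will use are that composites of $F$-algebra homomorphisms are homomorphisms, and that $h\circ f^\sharp=(h\circ f)^\sharp$ whenever $h$ is an $F$-algebra homomorphism. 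So it suffices to recognise the clockwise composite as an extension $g^\sharp$ with $g=[p,1_{[p,\id_X]^\sharp(c)}]\colon I+1\to X$, i.e.\ to show it is an $F$-algebra homomorphism whose restriction along $\eta_{I+1}$ equals $g$.

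First I would peel off the outer map. Since $[p,\id_X]^\sharp$ is an $F$-algebra homomorphism, by $h\circ f^\sharp=(h\circ f)^\sharp$ it is enough to identify $\hat{\mu}_X\circ T(\id_I+1_c)$ as an extension $[\eta_{I+X}\circ\kappa_1,\ 1_c]^\sharp\colon T(I+1)\to T(I+X)$. For this I recall the standard formula for the multiplication of the monad $T(I+(-))$~\cite{luth2002}, namely $\hat{\mu}_X=\mu_{I+X}\circ T[\eta_{I+X}\circ\kappa_1,\ \id_{T(I+X)}]$. This exhibits $\hat{\mu}_X\circ T(\id_I+1_c)$ as a composite of $F$-algebra homomorphisms ($\mu_{I+X}$ is the extension of the identity, and $T$ of a morphism is a homomorphism by naturality of $\gamma$), hence itself a homomorphism; and using naturality of $\eta$ twice together with the monad law $\mu\circ\eta=\id$ one computes
\(
\hat{\mu}_X\circ T(\id_I+1_c)\circ\eta_{I+1}=[\eta_{I+X}\circ\kappa_1,\ \id_{T(I+X)}]\circ(\id_I+1_c)=[\eta_{I+X}\circ\kappa_1,\ 1_c].
\)
By the universal property this forces $\hat{\mu}_X\circ T(\id_I+1_c)=[\eta_{I+X}\circ\kappa_1,\ 1_c]^\sharp$.

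It then remains to compute the composite $[p,\id_X]^\sharp\circ[\eta_{I+X}\circ\kappa_1,\ 1_c]\colon I+1\to X$ case-wise on the coproduct: on the $I$-summand it is $[p,\id_X]^\sharp\circ\eta_{I+X}\circ\kappa_1=[p,\id_X]\circ\kappa_1=p$, and on the $1$-summand it is $[p,\id_X]^\sharp\circ 1_c=1_{[p,\id_X]^\sharp(c)}$ by~\eqref{eq:one}. Hence this composite is $[p,1_{[p,\id_X]^\sharp(c)}]$, and we conclude
\begin{align*}
	[p,\id_X]^\sharp\circ\hat{\mu}_X\circ T(\id_I+1_c)
	&= [p,\id_X]^\sharp\circ[\eta_{I+X}\circ\kappa_1,\ 1_c]^\sharp \\
	&= \bigl([p,\id_X]^\sharp\circ[\eta_{I+X}\circ\kappa_1,\ 1_c]\bigr)^\sharp \\
	&= [p,1_{[p,\id_X]^\sharp(c)}]^\sharp ,
\end{align*}
which is the anticlockwise composite in the diagram.

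The only real obstacle is notational bookkeeping: one must carefully distinguish the unit and multiplication $\eta,\mu$ of $T$ from $\hat{\eta},\hat{\mu}$ of $T(I+(-))$, keep track of the coproduct injections $\kappa_1,\kappa_2$ in the several copies of $I+(-)$, and use the explicit shape of $\hat{\mu}$ (the multiplication making $T(I+(-))$ the free monad extending $F$ by the constants $I$). An essentially equivalent, slightly more conceptual route avoids that formula by noting that $[p,\id_X]^\sharp$ is the Eilenberg--Moore $T(I+(-))$-algebra structure on $X$ induced by the $F_I$-algebra $[p,x]$, so that $[p,\id_X]^\sharp\circ\hat{\mu}_X=[p,\id_X]^\sharp\circ T(\id_I+[p,\id_X]^\sharp)$, and then collapsing $T(\id_I+[p,\id_X]^\sharp)\circ T(\id_I+1_c)=T(\id_I+1_{[p,\id_X]^\sharp(c)})$ via~\eqref{eq:one} before extending; I would keep the first route primary, since it relies only on the universal property of free $F$-algebras.
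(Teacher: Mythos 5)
Your proof is correct, and it reaches the paper's conclusion by a genuinely different (though closely related) decomposition. The paper's proof never unfolds $\hat{\mu}_X$; instead it equips $X$ with the Eilenberg--Moore $T(I+(-))$-algebra structure $[p,\id_X]^\sharp$, introduces the extension operator $f^\natural = z\circ T(\id_I+f)$ for that monad, and computes $[p,1_{[p,\id_X]^\sharp(c)}]^\sharp = (\id_X^\natural\circ 1_c)^\natural = \id_X^\natural\circ 1_c^\natural = [p,\id_X]^\sharp\circ\hat{\mu}_X\circ T(\id_I+1_c)$, using~\eqref{eq:one} in the middle together with the identities $[p,g]^\sharp=g^\natural$ and $1_c^\natural=\hat{\mu}_X\circ T(\id_I+1_c)$ --- this is precisely the ``more conceptual route'' you sketch in your closing paragraph. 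Your primary route instead stays entirely within the $(-)^\sharp$ calculus for free $F$-algebras: you unfold $\hat{\mu}_X=\mu_{I+X}\circ T[\eta_{I+X}\circ\kappa_1,\id_{T(I+X)}]$, identify $\hat{\mu}_X\circ T(\id_I+1_c)$ as the extension $[\eta_{I+X}\circ\kappa_1,1_c]^\sharp$ by uniqueness, and finish with $h\circ f^\sharp=(h\circ f)^\sharp$ and a case split on the coproduct. What your route buys is that it relies only on the universal property of free $F$-algebras plus one explicit formula for $\hat{\mu}$; what the paper's route buys is that it avoids that formula altogether and leans on the (cited) monad structure of $T(I+(-))$ and the standard Kleisli-style identities for its algebras, so the computation becomes a short chain once the $\natural$-calculus is set up. Both arguments are complete; the only point worth flagging in yours is the unproved formula for $\hat{\mu}_X$, but it is indeed the multiplication from the cited construction and is forced by uniqueness of $F$-algebra morphisms out of $T(I+T(I+X))$ agreeing with $[\eta_{I+X}\circ\kappa_1,\id_{T(I+X)}]$ on generators.
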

\begin{proof}
	Given any set $Y$ and a $T(I + (-))$-algebra $(Z, z)$, the extension of a morphism $f \colon Y \to Z$ to the $T(I + (-))$-algebra homomorphsm $f^\natural \colon T(I + Y) \to Z$ is given by $f^\natural = z \circ T(\id_I + f)$.
	We can supply $X$ with the $T(I + (-))$-algebra structure ${[p, \id_X]}^\sharp \colon T(I + X) \to X$.
	Thus,
	\begin{align*}
		{[p, \mathfrak{e}_{{[p, \id_X]}^\sharp(c)}]}^\sharp &
			= {[p, \mathfrak{e}_{\id_X^\natural(c)}]}^\sharp \\
		&
			= {[p, \id_X]}^\sharp \circ T(\id_I + \mathfrak{e}_{\id_X^\natural(c)}) \\
		&
			= \mathfrak{e}_{\id_X^\natural(c)}^\natural \\
		&
			= {(\id_X^\natural \circ \mathfrak{e}_c)}^\natural &
			&
			\text{\eqref{eq:one}} \\
		&
			= \id_X^\natural \circ \mathfrak{e}_c^\natural \\
		&
			= {[p, \id_X]}^\sharp \circ \mathfrak{e}_c^\natural \\
		&
			= {[p, \id_X]}^\sharp \circ \hat{\mu}_X \circ T(\id_I + \mathfrak{e}_c). &
		&
		\qedhere
	\end{align*}
\end{proof}

\setmorphisms*
\begin{proof}
    We first claim that
    \begin{equation}\label{eq:handy}
		(\beta_E \circ i_\mathsf{t}^\sharp)(s) = \lang_{\aut_\mathsf{t}} \circ \mu_I \circ T[\eta_I, \mathfrak{e}_s].
    \end{equation}
    (Here, and below, we omit the inclusion $k$.)
    To see this, first note that
    \begin{align*}
    (\beta_E \circ i_\mathsf{t}^\sharp)(s)
        &= o_\mathsf{t} \circ {[i_\mathsf{t}, \mathfrak{e}_{i_\mathsf{t}^\sharp(s)}]}^\sharp \\
        &= o_\mathsf{t} \circ \delta_\mathsf{t}^* \circ T[i_\mathsf{t}, \mathfrak{e}_{i_\mathsf{t}^\sharp(s)}] \\
        &= o_\mathsf{t} \circ \delta_\mathsf{t}^* \circ T[i_\mathsf{t}, i_\mathsf{t}^\sharp \circ \mathfrak{e}_s] \\
        &= o_\mathsf{t} \circ \delta_\mathsf{t}^* \circ T[i_\mathsf{t}^\sharp \circ \eta_I, i_\mathsf{t}^\sharp \circ \mathfrak{e}_s] \\
        &= o_\mathsf{t} \circ \delta_\mathsf{t}^* \circ Ti_\mathsf{t}^\sharp \circ T[\eta_I, \mathfrak{e}_s]
    \end{align*}
	It remains to show that $o_\mathsf{t} \circ \delta_\mathsf{t}^* \circ Ti_\mathsf{t}^\sharp = \lang_{\aut_\mathsf{t}} \circ \mu_I$, which follows by commutativity of the diagram below.
	\begin{gather*}
		\begin{tikzcd}[ampersand replacement=\&,column sep=.6cm]
			T^2 I \ar{rd}{T^2 i_\mathsf{t}} \ar{rrrr}{\mu_I} \ar{dd}[swap]{Ti_\mathsf{t}^\sharp} \&
				\&
				\ar[phantom]{d}[pos=0.3]{\circled{3}} \&
				\&
				TI \ar{dd}{i_\mathsf{t}^\sharp} \ar{ld}[swap]{Ti_\mathsf{t}} \ar[bend left]{rdd}{\lang_{\aut_\mathsf{t}}} \\
			\ar[phantom]{r}{\circled{1}} \&
				T^2 Q_\mathsf{t} \ar{rr}{\mu_{Q_\mathsf{t}}} \ar{ld}{T\delta_\mathsf{t}^*} \&
				\ar[phantom]{d}[pos=0.3]{\circled{4}} \&
				TQ_\mathsf{t} \ar{rd}[swap]{\delta_\mathsf{t}^*} \ar[phantom]{r}{\circled{1}} \&
				\ar[phantom]{r}{\circled{2}} \&
				~ \\
            TQ_\mathsf{t} \ar{rrrr}{\delta_\mathsf{t}^*} \&
				\&
				~ \&
				\&
				Q_\mathsf{t} \ar{r}[swap]{o_\mathsf{t}} \&
				O
        \end{tikzcd}
        \\
        \begin{array}{ll}
            \circled{1}\text{ property of $i_\mathsf{t}^\sharp$ } &
			\circled{2}\text{ definition of $\lang_{\aut_\mathsf{t}}$ } \\
            \circled{3}\text{ naturality } &
            \circled{4}\text{ $(T, \delta_\mathsf{t}^*)$ is a $T$-algebra }
        \end{array}
	\end{gather*}
    For the first equation, we derive
    \begin{align*}
    (\beta_E \circ \alpha_S)(s)
        &= (\beta_E \circ i_\mathsf{t}^\sharp)(s)
            \tag{def. of $\alpha_S$} \\
		&= \lang_{\aut_\mathsf{t}} \circ \mu_I \circ T[\eta_I, \mathfrak{e}_s]
            \tag{by~\eqref{eq:handy}}
    \end{align*}
    For the second equation, we derive
	\begin{align*}
	(\beta_E \circ \delta_\mathsf{t} \circ F\alpha_S)(f)
	    &= (\beta_E \circ \delta_\mathsf{t} \circ Fi_\mathsf{t}^\sharp \circ Fj)(f)
            \tag{def. of $\alpha_S$} \\
	    &= (\beta_E \circ i_\mathsf{t}^\sharp \circ \gamma_I \circ Fj)(f)
            \tag{$i^\sharp$ is an $F$-algebra homomorphism} \\
		&= \lang_{\aut_\mathsf{t}} \circ \mu_I \circ T[\eta_I, \mathfrak{e}_{(\gamma_I \circ Fj)(f)}]
            \tag{by~\eqref{eq:handy}} \\
		&= \lang_{\aut_\mathsf{t}} \circ \mu_I \circ T[\eta_I, \gamma_I \circ Fj \circ \mathfrak{e}_f]
            \tag{by~\eqref{eq:one}} \\
	\end{align*}
    For the third equation, we derive
    \begin{align*}
    (\beta_E \circ i_\mathsf{t})(s)
        &= (\beta_E \circ i_\mathsf{t}^\sharp \circ \eta_I)(s)
            \tag{property of $i^\sharp$} \\
		&= \lang_{\aut_\mathsf{t}} \circ \mu_I \circ T[\eta_I, \mathfrak{e}_{\eta_I(s)}]
            \tag{by~\eqref{eq:handy}} \\
        &= \lang_{\aut_\mathsf{t}} \circ \mu_I \circ T[\eta_I, \eta_I \circ \mathfrak{e}_s]
            \tag{by~\eqref{eq:one}} \\
        &= \lang_{\aut_\mathsf{t}} \circ \mu_I \circ T\eta_I \circ T[\id_I, \mathfrak{e}_s] \\
        &= \lang_{\aut_\mathsf{t}} \circ T[\id_I, \mathfrak{e}_s]
            \tag{monad law}
    \end{align*}
    Finally, for the fourth equation, we derive:
	\begin{align*}
    (o_\mathsf{t} \circ \alpha_S)(s)
        &= (o_\mathsf{t} \circ i_\mathsf{t}^\sharp)(s)
           \tag{definition of $\alpha_S$} \\
		&= \lang_{\aut_\mathsf{t}}(s) &
			\tag*{(definition of $\lang_{\aut_\mathsf{t}}$) \qedhere}
	\end{align*}
\end{proof}

\setlocalclosedness*
\begin{proof}
	Let $\wrap = (\alpha_{S'}, \beta_E)$, and choose
    \begin{mathpar}
    i_\wrap = e_\wrap \circ k
    \and
    \lclose_{\wrap, \alpha_S} = e_\wrap \circ \ell
    \end{mathpar}
    The necessary diagrams now commute:
	\begin{align*}
		\begin{tikzcd}[column sep=0.7cm,ampersand replacement=\&]
			I \ar{rr}{i_\mathsf{t}} \ar{dd}[swap]{i_\wrap} \ar{rd}{k} \& \&
				Q_\mathsf{t} \ar{dd}{\beta} \\
            \& S' \ar{ld}[swap]{e_\wrap} \ar{ru}{\alpha_{S'}} \\
			H_\wrap \ar{rr}{m_\wrap} \& \&
				P
		\end{tikzcd}
        &&
        \begin{tikzcd}[column sep=.7cm,ampersand replacement=\&]
            FS' \ar{r}{F\alpha'} \ar{dd}[swap]{\lclose_{\wrap,\alpha'}} \ar{rd}{\ell} \&
                FQ_\mathsf{t} \ar{r}{\delta_\mathsf{t}} \&
                Q_\mathsf{t} \ar{dd}{\beta} \\
        \& S' \ar{ld}[swap]{e_\wrap} \ar{ru}{\alpha_{S'}} \\
            H_\wrap \ar{rr}{m_\wrap} \&
                \&
                P
        \end{tikzcd}
	\end{align*}
	Note that $\alpha_S \le \alpha_{S'}$ because $S \subseteq S'$.
	Thus, $\wrap$ is locally closed w.r.t.\ $\alpha_S$.
\end{proof}

\setlocalconsistency*
\begin{proof}
	Since $e_\wrap$ is surjective, so is $Fe_\wrap$.
	We define the function
    \[
        \lcons_{\wrap,\beta_E} \colon FH_\wrap \to O^E, \qquad
		\lcons_{\wrap,\beta_E}(Fe_\wrap(y)) = (\beta_E \circ \delta_\mathsf{t} \circ F\alpha_S)(y).
	\]
	By definition this satisfies the local consistency condition.
	It remains to show that the function is well-defined.
	Denote by $K$ the kernel
	\[
		\{(s, s') \mid s, s' \in S, e_\wrap(s) = e_\wrap(s')\}
	\]
	and let $j \colon K \to S \times S$ be the inclusion.
	Consider $y, z \in FS$ such that $Fe_\wrap(y) = Fe_\wrap(z)$.
	Because $F$ preserves weak pullbacks we can find $x \in FK$ such that $F(\pi_1 \circ j)(x) = y$ and $F(\pi_2 \circ j)(x) = z$.
	Using that $S$ is finite, write $K = \{(s_1, s_1'), \ldots, (s_n, s_n')\}$.
	For all $1 \le m \le n$ we define $f_m \colon K \to S + 1$ by
	\[
		f_m(s_k, s_k') = \begin{cases}
			\kappa_1(s_k') &
				\text{if $k \lt m$} \\
			\kappa_2(\square) &
				\text{if $k = m$} \\
			\kappa_1(s_k) &
				\text{if $k \gt m$.}
		\end{cases}
	\]
	Furthermore, let $c_m = F(f_m)(x) \in F(S + 1)$.
	We will prove that
	\begin{equation}\label{eq:jump}
		(\beta_E \circ \delta_\mathsf{t} \circ F(\alpha_S \circ [\id_S, \mathfrak{e}_{s_1}]))(c_1) = (\beta_E \circ \delta_\mathsf{t} \circ F(\alpha_S \circ [\id_S, \mathfrak{e}_{s_n}]))(c_n),
	\end{equation}
	for which it suffices by induction to prove for all $1 \le m \lt n$ that
	\begin{gather*}
		(\beta_E \circ \delta_\mathsf{t} \circ F(\alpha_S \circ [\id_S, \mathfrak{e}_{s_m}]))(c_m) \\
		= (\beta_E \circ \delta_\mathsf{t} \circ F(\alpha_S \circ [\id_S, \mathfrak{e}_{s_{m + 1}}]))(c_{m + 1}).
	\end{gather*}
	Note that
	\[
		[\id_S, \mathfrak{e}_{s_m'}] \circ f_m = [\id_S, \mathfrak{e}_{s_{m + 1}}] \circ f_{m + 1}
	\]
	by the definitions of $f_m$ and $f_{m + 1}$, so
	\begin{align*}
		&
			\phantom{{} = {}} (\beta_E \circ \delta_\mathsf{t} \circ F(\alpha_S \circ [\id_S, \mathfrak{e}_{s_m}]))(c_m) \\
		&
			= (\beta_E \circ \delta_\mathsf{t} \circ F(\alpha_S \circ [\id_S, \mathfrak{e}_{s_m'}]))(c_m) \\
		&
			\phantom{{} = {}} \text{(assumption)} \\
		&
			= (\beta_E \circ \delta_\mathsf{t} \circ F(\alpha_S \circ [\id_S, \mathfrak{e}_{s_m'}] \circ f_m))(x) \\
		&
			\phantom{{} = {}} \text{(definition of $c_m$)} \\
		&
			= (\beta_E \circ \delta_\mathsf{t} \circ F(\alpha_S \circ [\id_S, \mathfrak{e}_{s_{m + 1}}] \circ f_{m + 1}))(x) \\
		&
			= (\beta_E \circ \delta_\mathsf{t} \circ F(\alpha_S \circ [\id_S, \mathfrak{e}_{s_{m + 1}}]))(c_{m + 1}) \\
		&
			\phantom{{} = {}} \text{(definition of $c_{m + 1}$)}.
	\end{align*}
	Then
	\begin{align*}
		(\beta_E \circ \delta_\mathsf{t} \circ F\alpha_S)(y) &
			= (\beta_E \circ \delta_\mathsf{t} \circ F(\alpha_S \circ \pi_1 \circ j))(x) \\
		&
			\phantom{{} = {}} \text{(definition of $x$)} \\
		&
			= (\beta_E \circ \delta_\mathsf{t} \circ F(\alpha_S \circ [\id_S, \mathfrak{e}_{s_1}] \circ f_1))(x) \\
		&
			\phantom{{} = {}} \text{(definition of $f_1$)} \\
		&
			= (\beta_E \circ \delta_\mathsf{t} \circ F(\alpha_S \circ [\id_S, \mathfrak{e}_{s_1}]))(c_1) \\
		&
			\phantom{{} = {}} \text{(definition of $c_1$)} \\
		&
			= (\beta_E \circ \delta_\mathsf{t} \circ F(\alpha_S \circ [\id_S, \mathfrak{e}_{s_n}]))(c_n) \\
		&
			\phantom{{} = {}} \text{\eqref{eq:jump}} \\
		&
			= (\beta_E \circ \delta_\mathsf{t} \circ F(\alpha_S \circ [\id_S, \mathfrak{e}_{s_n'}]))(c_n) \\
		&
			\phantom{{} = {}} \text{(assumption)} \\
		&
			= (\beta_E \circ \delta_\mathsf{t} \circ F(\alpha_S \circ [\id_S, \mathfrak{e}_{s_n'}]) \circ f_n)(x) \\
		&
			\phantom{{} = {}} \text{(definition of $c_n$)} \\
		&
			= (\beta_E \circ \delta_\mathsf{t} \circ F(\alpha_S \circ \pi_2 \circ j))(x) \\
		&
			\phantom{{} = {}} \text{(definition of $f_n$)} \\
		&
			= (\beta_E \circ \delta_\mathsf{t} \circ F\alpha_S)(z) \\
		&
			\phantom{{} = {}} \text{(definition of $x$)}.
	\end{align*}
	We conclude that $\lcons_{\wrap,\beta_E}$ is well-defined.

	We define $o_\wrap \colon H_\wrap \to O$ by
	\[
		o_\wrap(e_\wrap(s)) = (o_\mathsf{t} \circ \alpha_S)(s).
	\]
	Again the local consistency condition is satisfied by definition, but we need to show that the function is well-defined.
	Consider $s_1, s_2 \in S$ such that $e_\wrap(s_1) = e_\wrap(s_2)$.
	Then
	\begin{align*}
		(\beta_{E'} \circ \alpha_S)(s_1) &
			= (m_\wrap \circ e_\wrap)(s_1) \\
		&
			= (m_\wrap \circ e_\wrap)(s_2) \\
		&
			= (\beta_{E'} \circ \alpha_S)(s_2),
	\end{align*}
	so $(o_\mathsf{t} \circ \alpha_S)(s_1) = (o_\mathsf{t} \circ \alpha_S)(s_2)$.
	Note that $\beta_{E'} \le \beta_E$ because $E \subseteq E'$.
	Thus, $\wrap$ is locally consistent w.r.t.\ $\beta_E$.
\end{proof}

\wraplocclosed*
\begin{proof}
	Let $j \colon S \to TI$ be the inclusion map and define
	\[
		S' = S \cup \{([\eta_I, \gamma_I] \circ F_I j)(x) \mid x \in F_I S\} \subseteq TI.
	\]
	Since $S$ and $I$ are finite and $F$ preserves finite sets, $S'$ is also finite.
	We choose $k \colon I \to S'$ and $\ell \colon FS \to S'$ by setting
    \begin{mathpar}
    k(x) = \eta_I(x)
    \and
    \ell(x) = (\gamma_I \circ Fj)(x)
    \end{mathpar}
    Note that $k$ and $\ell$ are well-defined by construction of $S'$.
    Using the definitions of $\alpha_{S'}$ and $k$, we can then derive that
    \[
    (\alpha_{S'} \circ k)(x)
        = i_\mathsf{t}^\sharp(k(x))
        = i_\mathsf{t}^\sharp(\eta_I(x))
        = i_\mathsf{t}(x)
    \]
    Furthermore, we find that
    \begin{align*}
    (\alpha_{S'} \circ \ell)(x)
		&= i_\mathsf{t}^\sharp(\ell(x))
			\tag{def. of $\alpha_{S'}$} \\
        &= i_\mathsf{t}^\sharp(\gamma_I(Fj(x)))
            \tag{def. of $\ell$} \\
        &= \delta_\mathsf{t}(F(i_\mathsf{t}^\sharp)(Fj(x)))
            \tag{$i_\mathsf{t}^\sharp$ is an $F$-algebra homomorphism} \\
        &= \delta_\mathsf{t}(F(i_\mathsf{t}^\sharp \circ j)(x))  \\
        &= (\delta_\mathsf{t} \circ F\alpha_S)(x)
            \tag{def. of $\alpha_S$}
    \end{align*}
	Hence $(\alpha_{S'}, \beta_E)$ is locally closed w.r.t.\ $\alpha_S$, by \autoref{lem:setlocalclosedness}.

	Given a recursive $\rho \colon S \to F_I S$ such that ${[\eta_I, \gamma_I]}^\rho \colon S \to TI$ is the inclusion map, define $\rho' \colon S' \to F_I S'$ for all $s \in S$ and $x \in F_I S$ by
	\begin{align*}
		\rho'(s) = (F_I j \circ \rho)(s) &
			&
			\rho'(([\eta_I, \gamma_I] \circ F_I j)(x)) = F_I j(x).
	\end{align*}
	To see that this is well-defined, note that $[\eta_I, \gamma_I] \circ F_I j$ is injective.
	Moreover, if $s \in S$ and $x \in F_I S$ are such that
	\[
		s = ([\eta_I, \gamma_I] \circ F_I j)(x),
	\]
	we have
	\[
		({[\eta_I, \gamma_I]}^{-1} \circ j)(s) = F_I j(x).
	\]
	Because the inclusion map $j = {[\eta_I, \gamma_I]}^\rho$ by assumption, it follows that
	\[
		(F_I j \circ \rho)(s) = ({[\eta_I, \gamma_I]}^{-1} \circ j)(s) = F_I j(x).
	\]
	This completes the proof of $\rho'$ being well-defined.

	Note that $\rho$ is a subcoalgebra of ${[\eta_I, \gamma_I]}^{-1}$ via the inclusion map $j$ by assumption.
	The definition of $\rho'$ makes $\rho'$ also a subcoalgebra of ${[\eta_I, \gamma_I]}^{-1}$ via the inclusion map $S' \to TI$.
	This implies that it is recursive~\cite[Theorem~3.17]{adamek2007},\footnote{%
		As stated in the introduction of~\cite{adamek2007}, preservation of inverse images is implied by preservation of weak pullbacks.
	} and thus its inclusion map must be ${[\eta_I, \gamma_I]}^{\rho'} \colon S' \to TI$ by uniqueness.
\end{proof}

\elephant*
\begin{proof}
	Note that since $S$ is finite and $F$ preserves finite sets we have that $F(S + 1)$ is also finite.
	Together with the fact that $E$ is finite it follows that $E'$ is finite.
	Suppose $s_1, s_2 \in S$ are such that $(\beta_{E'} \circ \alpha_S)(s_1) = (\beta_{E'} \circ \alpha_S)(s_2)$.
	For all $s \in S$ we have
	\begin{align*}
		(o_\mathsf{t} \circ \alpha_S)(s) &
			= (o_\mathsf{t} \circ \mathfrak{e}_{\alpha_S(s)})(\square) \\
		&
			= (o_\mathsf{t} \circ [i_\mathsf{t}, \mathfrak{e}_{\alpha_S(s)}] \circ \kappa_2)(\square) \\
		&
			= (o_\mathsf{t} \circ {[i_\mathsf{t}, \mathfrak{e}_{\alpha_S(s)}]}^\sharp \circ \eta_{I + 1} \circ \kappa_2)(\square) \\
		&
			= (\beta_{E'} \circ \alpha_S)(s)((\eta_{I + 1} \circ \kappa_2)(\square)) \\
		&
			\phantom{{} = {}} \text{(definition of $\beta_{E'}$)},
	\end{align*}
	so
	\begin{align*}
		(o_\mathsf{t} \circ \alpha_S)(s_1) &
			= (\beta_{E'} \circ \alpha_S)(s_1)((\eta_{I + 1} \circ \kappa_2)(\square)) \\
		&
			= (\beta_{E'} \circ \alpha_S)(s_2)((\eta_{I + 1} \circ \kappa_2)(\square)) \\
		&
			= (o_\mathsf{t} \circ \alpha_S)(s_2).
	\end{align*}
	Furthermore, for all $s \in S$ we have
	\begin{align*}
		&
			\phantom{{} = {}} F[Ti_\mathsf{t} \circ j, T\mathfrak{e}_{\alpha_S(s)} \circ \eta_1] \\
		&
			= F[T[i_\mathsf{t}, \mathfrak{e}_{\alpha_S(s)}] \circ T\kappa_1 \circ j, T[i_\mathsf{t}, \mathfrak{e}_{\alpha_S(s)}] \circ T\kappa_2 \circ \eta_1] \\
		&
			= FT[i_\mathsf{t}, \mathfrak{e}_{\alpha_S(s)}] \circ F[T\kappa_1 \circ j, T\kappa_2 \circ \eta_1] \\
		&
			= FT[i_\mathsf{t}, \mathfrak{e}_{\alpha_S(s)}] \circ F[T\kappa_1 \circ j, \hat{\eta}_1] \\
		&
			\phantom{{} = {}} \text{(definition of $\hat{\eta}_1$)},
	\end{align*}
	so for all $s \in S$ and $x \in F(S + 1)$ we have
	\begin{align*}
		&
			\phantom{{} = {}} (\delta_\mathsf{t} \circ F[\alpha_S, \mathfrak{e}_{\alpha_S(s)}])(x) \\
		&
			= (\delta_\mathsf{t} \circ F[\delta_\mathsf{t}^* \circ Ti_\mathsf{t} \circ j, \delta_\mathsf{t}^* \circ T\mathfrak{e}_{\alpha_S(s)} \circ \eta_1])(x) \\
		&
			\phantom{{} = {}} \text{(definition of $\alpha_S$)} \\
		&
			= (\delta_\mathsf{t} \circ F\delta_\mathsf{t}^* \circ F[Ti_\mathsf{t} \circ j, T\mathfrak{e}_{\alpha_S(s)} \circ \eta_1])(x) \\
		&
			= (\delta_\mathsf{t}^* \circ \gamma_Q \circ F[Ti_\mathsf{t} \circ j, T\mathfrak{e}_{\alpha_S(s)} \circ \eta_1])(x) \\
		&
			\phantom{{} = {}} \text{($\delta_\mathsf{t}^*$ is an $F$-algebra homomorphism)} \\
		&
			= (\delta_\mathsf{t}^* \circ \gamma_Q \circ FT[i_\mathsf{t}, \mathfrak{e}_{\alpha_S(s)}] \circ F[T\kappa_1 \circ j, \hat{\eta}_1])(x) \\
		&
			\phantom{{} = {}} \text{(shown above)} \\
		&
			= (\delta_\mathsf{t}^* \circ T[i_\mathsf{t}, \mathfrak{e}_{\alpha_S(s)}] \circ \gamma_{I + 1} \circ F[T\kappa_1 \circ j, \hat{\eta}_1])(x) \\
		&
			= (\delta_\mathsf{t}^* \circ T[i_\mathsf{t}, \mathfrak{e}_{\alpha_S(s)}] \circ \gamma_{I + 1} \circ F[T\kappa_1 \circ j, \hat{\eta}_1] \circ \mathfrak{e}_x)(\square) \\
		&
			= (\delta_\mathsf{t}^* \circ T[i_\mathsf{t}, \mathfrak{e}_{\alpha_S(s)}] \circ c_x)(\square) \\
		&
			\phantom{{} = {}} \text{(definition of $c_x$)} \\
		&
			= ({[i_\mathsf{t}, \mathfrak{e}_{\alpha_S(s)}]}^\sharp \circ c_x)(\square) \\
		&
			= {[i_\mathsf{t}, \id_Q]}^\sharp((T(\id_I + \mathfrak{e}_{\alpha_S(s)}) \circ c_x)(\square)).
	\end{align*}
	Note that for all $s \in S$, $x \in F(S + 1)$ we have
	\begin{align*}
		&
			\phantom{{} = {}} \hat{\mu}_Q \circ T(\id_I + \mathfrak{e}_{(T(\id_I + \mathfrak{e}_{\alpha_S(s)}) \circ c_x)(\square)}) \\
		&
			= \hat{\mu}_Q \circ T(\id_I + (T(\id_I + \mathfrak{e}_{\alpha_S(s)}) \circ \mathfrak{e}_{c_x(\square)})) \\
		&
			\phantom{{} = {}} \text{\eqref{eq:one}} \\
		&
			= \hat{\mu}_Q \circ T(\id_I + (T(\id_I + \mathfrak{e}_{\alpha_S(s)}) \circ c_x)) \\
		&
			\phantom{{} = {}} \text{(definition of $\mathfrak{e}_{c_x(\square)}$)} \\
		&
			= \hat{\mu}_Q \circ T(\id_I + (T(\id_I + \mathfrak{e}_{\alpha_S(s)}))) \circ T(\id_I + c_x) \\
		&
			= T(\id_I + \mathfrak{e}_{\alpha_S(s)}) \circ \hat{\mu}_1 \circ T(\id_I + c_x),
	\end{align*}
	so for all $s \in S$, $x \in F(S + 1)$, and $e \in E$,
	\begin{align*}
		&
			\phantom{{} = {}} (\beta_E \circ \delta_\mathsf{t} \circ F(\alpha_S \circ [\id_S, \mathfrak{e}_s]))(x)(e) \\
		&
			= (\beta_E \circ \delta_\mathsf{t} \circ F[\alpha_S, \mathfrak{e}_{\alpha_S(s)}])(x)(e) \\
		&
			\phantom{{} = {}} \text{\eqref{eq:one}} \\
		&
			= (o_\mathsf{t} \circ {[i_\mathsf{t}, \mathfrak{e}_{(\delta_\mathsf{t} \circ F[\alpha_S, \mathfrak{e}_{\alpha_S(s)}])(x)}]}^\sharp)(e) \\
		&
			\phantom{{} = {}} \text{(definition of $\beta_E$)} \\
		&
			= (o_\mathsf{t} \circ {[i_\mathsf{t}, \mathfrak{e}_{{[i_\mathsf{t}, \id_Q]}^\sharp((T(\id_I + \mathfrak{e}_{\alpha_S(s)}) \circ c_x)(\square))}]}^\sharp)(e) \\
		&
			\phantom{{} = {}} \text{(shown earlier)} \\
		&
			= (o_\mathsf{t} \circ {[i_\mathsf{t}, \id_Q]}^\sharp \circ \hat{\mu}_Q \circ T(\id_I + \mathfrak{e}_{(T(\id_I + \mathfrak{e}_{\alpha_S(s)}) \circ c_x)(\square)}))(e) \\
		&
			\phantom{{} = {}} \text{(\autoref{lem:doublecontext})} \\
		&
			= (o_\mathsf{t} \circ {[i_\mathsf{t}, \id_Q]}^\sharp \circ T(\id_I + \mathfrak{e}_{\alpha_S(s)}) \circ \hat{\mu}_1 \circ T(\id_I + c_x))(e) \\
		&
			\phantom{{} = {}} \text{(shown above)} \\
		&
			= (o_\mathsf{t} \circ {[i_\mathsf{t}, \mathfrak{e}_{\alpha_S(s)}]}^\sharp \circ \hat{\mu}_1 \circ T(\id_I + c_x))(e) \\
		&
			= (\beta_{E'} \circ \alpha_S)(s)((\hat{\mu}_1 \circ T(\id_I + c_x))(e)) \\
		&
			\phantom{{} = {}} \text{(definition of $\beta_{E'}$)},
	\end{align*}
	and therefore for all $x \in F(S + 1)$ and $e \in E$,
	\begin{align*}
		&
			\phantom{{} = {}} (\beta_E \circ \delta_\mathsf{t} \circ F(\alpha_S \circ [\id_S, \mathfrak{e}_{s_1}]))(x)(e) \\
		&
			= (\beta_{E'} \circ \alpha_S)(s_1)((\hat{\mu}_1 \circ T(\id_I + c_x))(e)) \\
		&
			= (\beta_{E'} \circ \alpha_S)(s_2)((\hat{\mu}_1 \circ T(\id_I + c_x))(e)) &
			&
			\text{(assumption)} \\
		&
			= (\beta_E \circ \delta_\mathsf{t} \circ F(\alpha_S \circ [\id_S, \mathfrak{e}_{s_2}]))(x)(e).
	\end{align*}

	Thus, it follows from \autoref{lem:setlocalconsistency} that $(\alpha_S, \beta_{E'})$ is locally consistent w.r.t.\ $\beta_E$.
\end{proof}

\langequivrec*
\begin{proof}
	Suppose that for all recursive coalgebras $\rho \colon S \to F_I S$ such that $S$ is finite we have $o_\mathsf{t} \circ {[i_\mathsf{t}, \delta]}^\rho = o \circ {[i, \delta]}^\rho$.
	Given $t \in TI$, note that $(TI, [\eta_I, \gamma_I])$ is the initial algebra of functor $F_I$, which by being finitary is also the colimit of the initial sequence of $F_I$~\cite{adamek1974} and hence isomorphic to $(\bigcup_{n \in \N} F_I^n\emptyset, a)$ for an initial algebra structure $a \colon F_I\left(\bigcup_{n \in \N} F_I^n\emptyset\right) \to \bigcup_{n \in \N} F_I^n\emptyset$.
	Let $\phi \colon (TI, [\eta_I, \gamma_I]) \to (\bigcup_{n \in \N} F_I^n\emptyset, a)$ be the isomorphism.
	There exists $n \in \N$ such that $\phi(t) \in F_I^n\emptyset$.
	The set $F_I^n\emptyset$ is finite by $F_I$ preserving finite sets and the carrier of a recursive coalgebra $\rho \colon F_I^n\emptyset \to F_I^{n + 1}\emptyset$ by~\cite[Proposition~6]{capretta2006},
	with $a^\rho \colon F_I^n\emptyset \to \bigcup_{n \in \N} F_I^n\emptyset$ being the inclusion.
	Then $S = \{\phi^{-1}(x) \mid x \in F_I^n\emptyset\}$ is also finite and the carrier of a recursive coalgebra $\rho' \colon S \to F_I S$, with ${[\eta_I, \gamma_I]}^{\rho'} \colon S \to TI$ being the inclusion.
	Moreover, $t \in S$.
	Thus,
	\begin{align*}
		\lang_{\aut_\mathsf{t}}(t)
			&= (\lang_{\aut_\mathsf{t}} \circ {[\eta_I, \gamma_I]}^{\rho'})(t) \\
			&= (o_\mathsf{t} \circ i_\mathsf{t}^\sharp \circ {[\eta_I, \gamma_I]}^{\rho'})(t)
		        \tag{definition of $\lang_{\aut_\mathsf{t}}$} \\
		    &= (o_\mathsf{t} \circ {[i_\mathsf{t}, \delta_\mathsf{t}]}^{\rho'})(t)
		    &   \tag{$i_\mathsf{t}^\sharp$ is an $F_I$-algebra homomorphism} \\
		    &= \lang_{\aut_\mathsf{t}}^{\rho'}(t)
		    &   \tag{definition of $\lang_\aut^{\rho'}$} \\
		    &= \lang_{\aut}^{\rho'}(t)
		    &   \tag{assumption} \\
		    &= (o \circ {[i, \delta]}^{\rho'})(t)
		    &   \tag{definition of $\lang_{\aut}^{\rho'}$} \\
		    &= (o \circ i^\sharp \circ {[\eta_I, \gamma_I]}^{\rho'})(t)
		    &   \tag{$i^\sharp$ is an $F_I$-algebra homomorphism} \\
		    &= (\lang_{\aut} \circ {[\eta_I, \gamma_I]}^{\rho'})(t)
		    &   \tag{definition of $\lang_{\aut}$} \\
		    &= \lang_{\aut}(t).
	\end{align*}

	The converse follows from \autoref{prop:sublang}.
\end{proof}

The following useful lemma shows how contextual wrappers arise from recursive coalgebras.
\begin{lemma}\label{lem:rec-to-contextual}
	Let $\rho \colon S \rightarrow F_I S$ be a recursive coalgebra, and let $\iota^\triangleleft \colon S' \rightarrowtail TI$ be the image of the unique coalgebra-to-algebra map $\iota \colon S \rightarrow TI$ to the initial algebra. Then $S'$ is again a recursive coalgebra, $\iota^\triangleleft$ is the unique coalgebra-to-algebra morphism, and
	$\rch_\aut \circ \iota^\triangleleft = \alpha_{S'}$.
\end{lemma}
\begin{proof}
	We show the announced recursiveness property.
	Consider the factorisation $\iota^\triangleleft \circ \iota^\triangleright$
	of the unique coalgebra-to-algebra map $\iota$ in:
	\[
	\xymatrix{
		F_I S \ar[r]^{F_I \iota^\triangleright} 
			& F_I S' \ar@{>->}[r]^{F_I \iota^\triangleleft}  
			& F_I TI \ar[d]^{\cong} \\ 
		S \ar[u]^{\rho} \ar@{->>}[r]^{\iota^\triangleright} 
			& S' \ar@{>->}[r]^{\iota^ \triangleleft} \ar@{-->}[u]^d 
			& TI
	}
	\]
	Our functor $F_I$ preserves weak pullbacks (since $F$ does), and hence monos in particular. Thus $F_I(\iota^\triangleleft)$ is monic. Since the initial algebra
	is an isomorphism, the composition with its structure map is monic as well
	and we get the map $d$ by the fill-in property of the factorisation system.

	Finally, again since $F_I$ preserves weak pullbacks it preserves intersections; and as a consequence, the coalgebra $d \colon S' \rightarrow F_I(S')$ is recursive by~\cite[Corollary 5.6]{AdamekMM20}.
\end{proof}

\end{document}